\definecolor{Gray}{gray}{0.9}
\definecolor{revComments}{rgb}{0,0,0}
\begin{document}
\title{Mining Parametric Temporal Logic Properties \\ in Model Based Design for Cyber-Physical Systems}
\ifthenelse{\boolean{TECHREP}}{\subtitle{Extended Version}}{}

\author{Bardh~Hoxha\and Adel~Dokhanchi\and Georgios~Fainekos% My Account
% The present address - made with \thanks - is optional,
% remove next line if not needed
%
}                     % Do not remove
%
%\offprints{}          % Insert a name or remove this line
%
\institute{Arizona State University \\ \email{\{bhoxha,adokhanc,fainekos\}@asu.edu}}
%
%initial
\date{Received: date / Revised version: date}
%\date{}
% The correct dates will be entered by Springer
%
\titlerunning{Mining Parametric Temporal Logic Properties in MBD for CPS}
\authorrunning{Hoxha et al.:} 

\maketitle
\begin{abstract}
One of the advantages of adopting a Model Based Development (MBD) process is that it enables testing and verification at early stages of development. However, it is often desirable to not only verify/falsify certain formal system specifications, but also to automatically explore the properties that the system satisfies. In this work, we present a framework that enables property exploration for Cyber-Physical Systems.
Namely, given a parametric specification with multiple parameters, our solution can automatically infer the ranges of parameters for which the property does not hold on the system. In this paper, we consider parametric specifications in Metric or Signal Temporal Logic (MTL or STL). Using robust semantics for MTL, the parameter mining problem can be converted into a Pareto optimization problem for which we can provide an approximate solution by utilizing stochastic optimization methods. We include algorithms for the exploration and visualization of multi-parametric specifications. The framework is demonstrated on an industrial size, high-fidelity engine model as well as examples from related literature.
\end{abstract}
\begin{keywords}
Metric Temporal Logic, Signal Temporal Logic, Verification, Testing, Robustness, Multiple Parametric Specification Mining, Cyber-Physical Systems
\end{keywords}
%
%!TEX root = root_sttt.tex

\section{Introduction}
Testing, verification and validation of Cyber-Physical Systems (CPS) is a challenging problem. Prime examples of such systems are aircraft, cars and medical devices which are also safety-critical systems. 
The complexity in these systems arises mostly from the complex interactions between the numerous components (e.g. software enabled controllers) and the physical environment (plant). 
Many accidents \cite{ariane5,Hoffman99near} and recalls in the industry have reinforced the need for better methodologies in this area. 
In addition, general trends indicate that software complexity in CPS is going to increase in the future \cite{montalk1991computer}. 

A recent shift in system development, aimed to alleviate some of the challenges, is the Model Based Design (MBD) paradigm. 
One of the benefits of MBD is that a significant amount of testing and verification of the system can be conducted in various stages of model development. 
{\color{revComments} This is different from the traditional approach, where most of the analysis is conducted on a prototype of the system.}
Due to the importance of the problem, there has been a substantial level of research on testing and verification of models of embedded and hybrid systems (see \cite{TripakisD09model,kapinski2015simulation} for an overview).

In \cite{NghiemSFIGP10hscc,AbbasFSIG11tecs}, the authors propose an approach to support the testing and verification process in MBD. 
The papers provide a new method for testing embedded and hybrid systems against formal requirements which are defined in Metric Temporal Logic (MTL) \cite{Koymans90}. 
{\color{revComments} MTL formulas are interpreted over trajectories/behaviors of the system. In this context, MTL specifications are equivalent to Signal Temporal Logic (STL) \cite{MalerNickovic04} specifications.}
Given a system and an MTL specification, the method searches for operating conditions such that the MTL specification is not satisfied or, in other words, falsified. 
\ifthenelse{\boolean{TECHREP}}{
The authors utilize the concept of system robustness of MTL specifications \cite{FainekosP06fates,FainekosP09tcs} to turn the falsification problem into an optimization problem. 
The notion of the robustness metric enables system developers to measure by how far a system behavior is from failing to satisfy a requirement. 
This allows for the development of an automatic test case generator, which uses a stochastic optimization engine to find operating conditions that falsify the system in terms of the MTL specifications.
\ifthenelse{\boolean{TECHREP}}{The resulting optimization problem may be both non-linear and non-convex.}{}
 To solve the problem, in \cite{SankaranarayananF2012hscc,AnnapureddyF10iecon,NghiemSFIGP10hscc}, the authors present stochastic optimization techniques that solve the falsification problem with very good performance in both accuracy and number of simulations required. 
}
{
The authors utilize the concept of system robustness of MTL specifications \cite{FainekosP06fates,FainekosP09tcs} to turn the falsification problem into an optimization problem. The optimization problem may be both non-linear and non-convex. To solve the problem, several works \cite{SankaranarayananF2012hscc,AnnapureddyF10iecon,NghiemSFIGP10hscc} have presented stochastic techniques that solve the problem with high accuracy. 
}

In \cite{YangHF12ictss}, the authors utilize this notion of robustness to explore and determine system properties. In more detail, given a parameterized MTL specification \cite{AsarinDMN12rv}, where there is an unknown state and/or timing parameter, the authors find the range of values for the parameter such that the system is not satisfied. 

In this work, we extend and generalize the work in \cite{YangHF12ictss} to enable multiple parameter mining and analysis of parametric MTL specifications. 
We improve the efficiency of the previous algorithm in \cite{YangHF12ictss} and present a parameter mining framework for MBD. 
Such an exploration framework would be of great value to the practitioner. 
The benefits are twofold. 
One, it allows for the analysis and development of specifications. 
In many cases, system requirements are not well formalized by the initial system design stages. 
Two, it allows for the analysis and exploration of system behavior. 
If a specification can be falsified, then it is natural to inquire for the range of parameter values that cause falsification.
That is, in many cases, the system design may not be modified, but the guarantees provided should be updated. 

The extension to multiple parameter mining of MTL specifications allows practitioners to use this method with more complex specifications. 
\ifthenelse{\boolean{TECHREP}}{However, as the number of parameters in the specification increases, so does the complexity of the resulting optimization problem.}{}
In the case of single parameter mining, the solution of the problem is a one dimensional range. 
On the other hand, with multiple parameters, finding a solution to the problem becomes more challenging since the optimization problem is converted to a multi-objective optimization problem where the goal is to determine the Pareto front \cite{myers2016response}. 
To solve this problem, we present a method for effective one-sided exploration of the Pareto front and provide a visualization method for the analysis of parameters. 
The algorithms presented in this work are incorporated in the testing and verification toolbox \staliro \cite{AnnapureddyLFS11tacas,staliro:Online}. 
For an overview of the toolbox see \cite{hoxha2014towards}.  
Finally, we demonstrate our framework on a challenge problem from the industry on an industrial scale model and present experimental results on several benchmark problems.
\ifthenelse{\boolean{TECHREP}}{
{\color{revComments} Even though our examples and case study are from the automotive domain, our results can be applied to any application domain where Model Based Design (MBD) and temporal logic requirements are utilized, e.g., medical devices \cite{sankaranarayanan2011model,SankaranarayananF2012cmsb,JiangPM12ieee,ChenDKM13hscc}. }
}{
	{\color{revComments} Our results may be applied to any application domain where temporal logics are utilized, e.g., medical devices \cite{SankaranarayananF2012cmsb,JiangPM12ieee,ChenDKM13hscc}. }

	\vspace{-10pt}
}
%The choice of examples from the automotive domain is preferable since everyone can relate to these examples.

\textbf{Summary of Contributions: }
\begin{itemize}
\item We extend and generalize the parameter mining problem presented in \cite{YangHF12ictss}.
\item We provide an efficient solution to the problem of multiple parameter mining.
\item We present two algorithms to explore the Pareto front of parametric MTL with multiple parameters.
%\item We present a general framework for parameter estimation in MBD. 
\item We illustrate our method with an industrial size case study of a high-fidelity engine model.
\item The algorithms presented in this work are publicly available through our toolbox \staliro \cite{staliro:Online}.
\end{itemize}
\vspace{-5pt}

\section{Problem Formulation}

\subsection{Preliminaries}
In the rest of the paper, we take a general approach to modeling real-time embedded systems that interact with physical systems that have non-trivial dynamics. 
\ifthenelse{\boolean{TECHREP}}{A major source of complexity in the analysis of these systems arises from the interaction between the embedded system and the physical world.}{}

%In the following, we will be using the term {\it Cyber-Physical Systems} (CPS) for such systems to stress the interconnection between the embedded system and the physical world.

We fix $N \subseteq \Ne$, where $\Ne$ is the set of natural numbers, to be a finite set of indexes for the finite representation of a system behavior.
In the following, given two sets $A$ and $B$, $B^A$ denotes the set of all functions from $A$ to $B$.
That is, for any $f \in B^A$ we have $f : A \rightarrow B$.
We consider a system $\Sys$ as a mapping from a compact set of {\it initial operating conditions} $\genSS_0$ and {\it input signals} $\genInpSet \subseteq \Inp^N$ to {\it output signals} $\genOut^N$ and {\it timing} (or {\it sampling}) functions $\Sam \subseteq \preals^N$. 
Here, $\Inp$ is a compact set of possible input values at each point in time (input space), $\genOut$ is the set of output values (output space), $\Re$ is the set of real numbers and $\preals$ the set of positive reals. 

We impose three assumptions/restrictions on the systems that we consider:

% \begin{assumption}
\begin{enumerate}
\item The input signals (if any) must be parameterizable using a finite number of parameters.
That is, there exists a function $\Ur$ such that for any $u \in \genInpSet$, there exist two parameter vectors $\vec{\lambda} = [\lambda_1$ $\ldots$ $\lambda_m]^\intercal \in \Lambda$, where $\Lambda$ is a compact set, and $t = [t_1$ $\ldots$ $t_m]^\intercal \in \preals^m$ such that {\color{revComments}$m$ is typically much smaller than the maximum number of indices in $N$} and for all $i \in N$, $u(i) = \Ur(\lambda,t)(i)$. 
% That is, there exist two parameter vectors $\lambda = [\lambda_1$ $\ldots$ $\lambda_m]^T \in \Lambda$, where $\Lambda$ is a compact set, and $\tau = [\tau_1$ $\ldots$ $\tau_m]^T \in R^m$ and a function $\Ur$ such that for any $u \in \genInpSet$, there exist some $\lambda$ and $\tau$ such that for all $t\in R$, $u(t) = \Ur(\lambda,\tau)(t)$. 
\label{ass:1}

\item The output space $\genOut$ must be equipped with a generalized metric $\genMet$ which contains a subspace $\genOutTwo$ equipped with a metric $d$ \cite{AbbasFSIG11tecs}.
% For example, the discrete metric (see Section \ref{sec:metric}) cannot be utilized for meaningful analysis.
\label{ass:2}

\item  
For a specific initial condition $\gensspt_0$ and input signal $\InpSig$, there must exist a unique output signal $\gentraj$ defined over the time domain $R$. 
That is, the system $\Sys$ is deterministic.
%  and we implicitly assume that the system does not exhibit Zeno behaviors%
% \footnote{Zeno conditions can be detected by the numerical simulator, e.g., the Matlab/Simulink (TM) simulation engine.} 
% \cite{LygerosJSZS03tac}.
\label{ass:3}

% \item  
% For a specific initial condition $\gensspt_0$ and input signal $\InpSig$, there must exist a unique output signal $\gentraj$ defined over the time domain $R$%
% \footnote{Thus, we implicitly assume that the system does not exhibit Zeno behaviors \cite{LygerosJSZS03tac}.}%
% .
% Moreover, when $\gentraj$ is projected onto $\genOutTwo$, the resulting signal $\ctraj$ must be piece-wise Lipschitz  continuous, i.e., there exists a finite number of intervals $R_i$ that partition $R$ s.t. for all $t_1,t_2 \in R_i$ . $d(\ctraj(t_1),\ctraj(t_2))\leq K d_R(t_1,t_2)$ for some $K\geq 0$.
% \label{ass:3}

% \item For considering the convergence of our sampling scheme, we
%   assume that the space of inputs is bounded and discretized to a
%   large but finite set. In practice, any representation of the input
%   through a vector of floating point numbers inside the computer must
%   be finite and, therefore, implicitly discretizes the space of inputs.
%   Thus, this assumption does not pose a restriction. 
%   We also assume
%   that if the system exhibits a property violation, it is possible to
%   find an input from our finite set that exhibits the violation.
% \label{ass:4}
\end{enumerate}
% \end{assumption}
Further details on the necessity and implications of the aforementioned assumptions can be found in \cite{AbbasFSIG11tecs}. Assumption 3 can also be relaxed as shown in \cite{abbas2014robustness}.

% Assumption (\ref{ass:1}) is necessary in order to define a feasible search problem over a potentially infinite function space.
% Assumption (\ref{ass:2}) is not a restrictive condition since any reasonable hybrid system must have some output space that can be equipped with a non-trivial metric.
% Finally, Assumption (\ref{ass:3}) could be lifted as well if a guaranteed reachability analysis algorithm is used, e.g., \cite{FainekosSIG09rtss}, with an increased computational cost.

\ifthenelse{\boolean{TECHREP}}{
\begin{figure*}
\centering
\vspace{-10pt}
\begin{tabular}{cc}
\includegraphics[width=7.5cm]{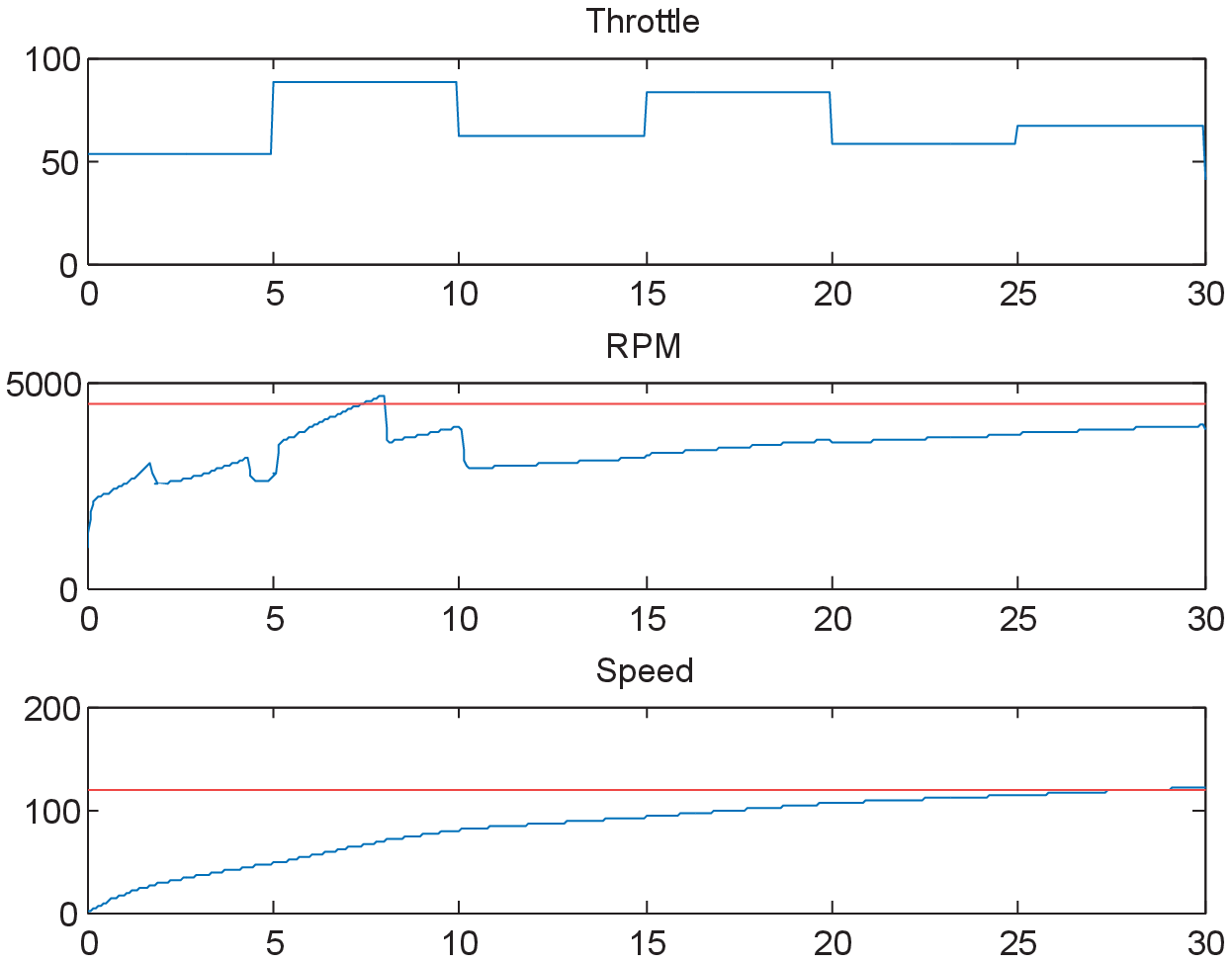} &
\includegraphics[width=7.5cm]{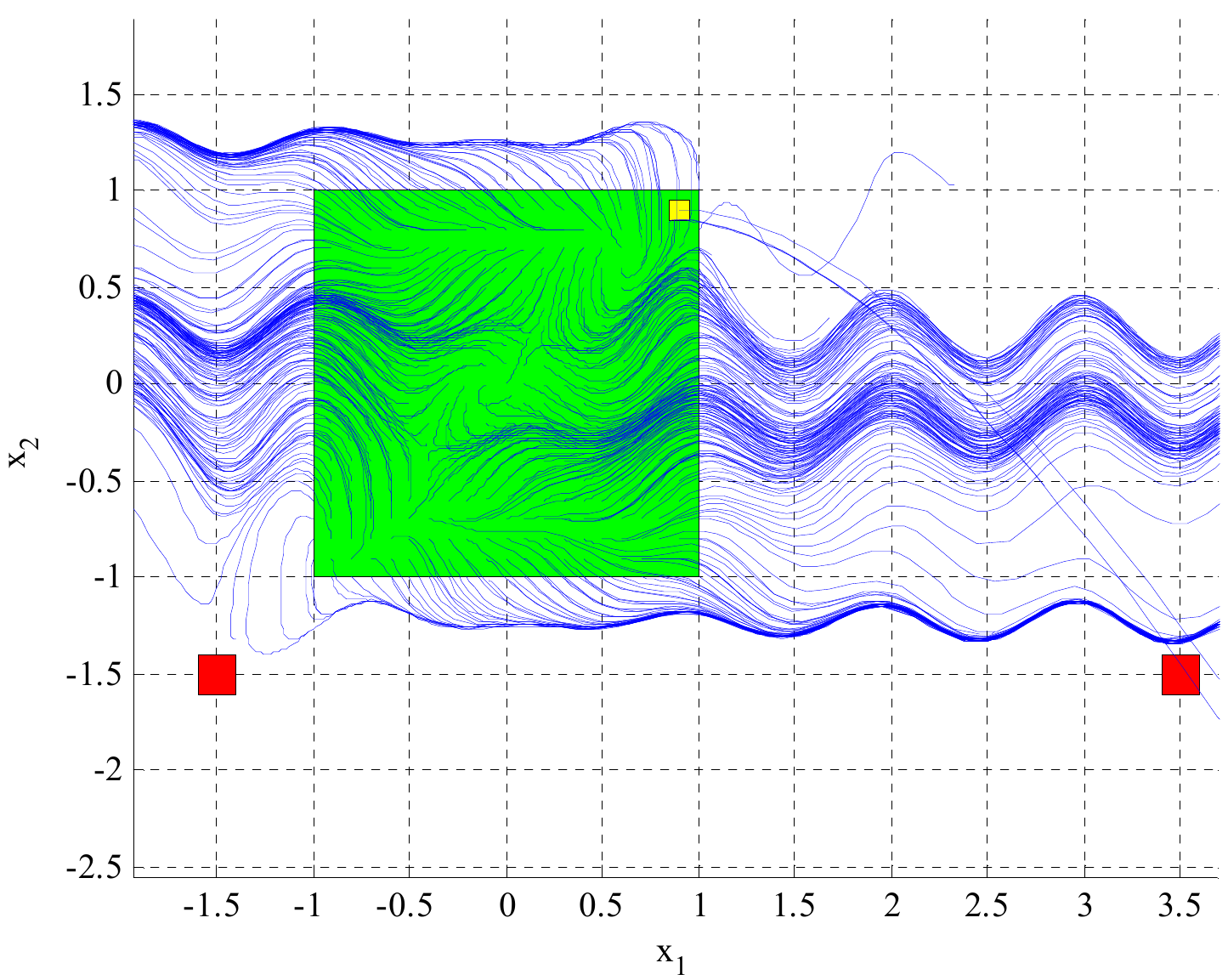} 
\end{tabular}
%\vspace{-10pt}
\caption{\textbf{Left:} Example \ref{exmp:autotrans} (AT): Throttle: A piecewise constant input signal $\InpSig$ parameterized with $\Lambda \in [0,100]^6$ and $t = [0, 5, 10, 15, 20, 25]$. RPM, Speed: The corresponding output signals that falsify the specification ``The vehicle speed $v$ is always under 120km/h or the engine speed $\omega$ is always below 4500RPM." \textbf{Right}: Example \ref{exmp:hs} (HS): Simulated trajectories of the hybrid system containing a trajectory that falsifies the specification ``A trajectory should never pass set $[-1.6, -1.4]^2$ or set $[3.4,3.6] \times [-1.6, -1.4]$". The green square indicates the set of possible initial conditions and the red squares indicate the bad regions which the system should not enter. The yellow region indicates the set of initial conditions where the location on the hybrid system changes. }
\label{fig:falstrajs}
\vspace{-5pt}
\end{figure*}
}{  
\begin{figure}
\centering
\vspace{-10pt}
\includegraphics[width=7.5cm]{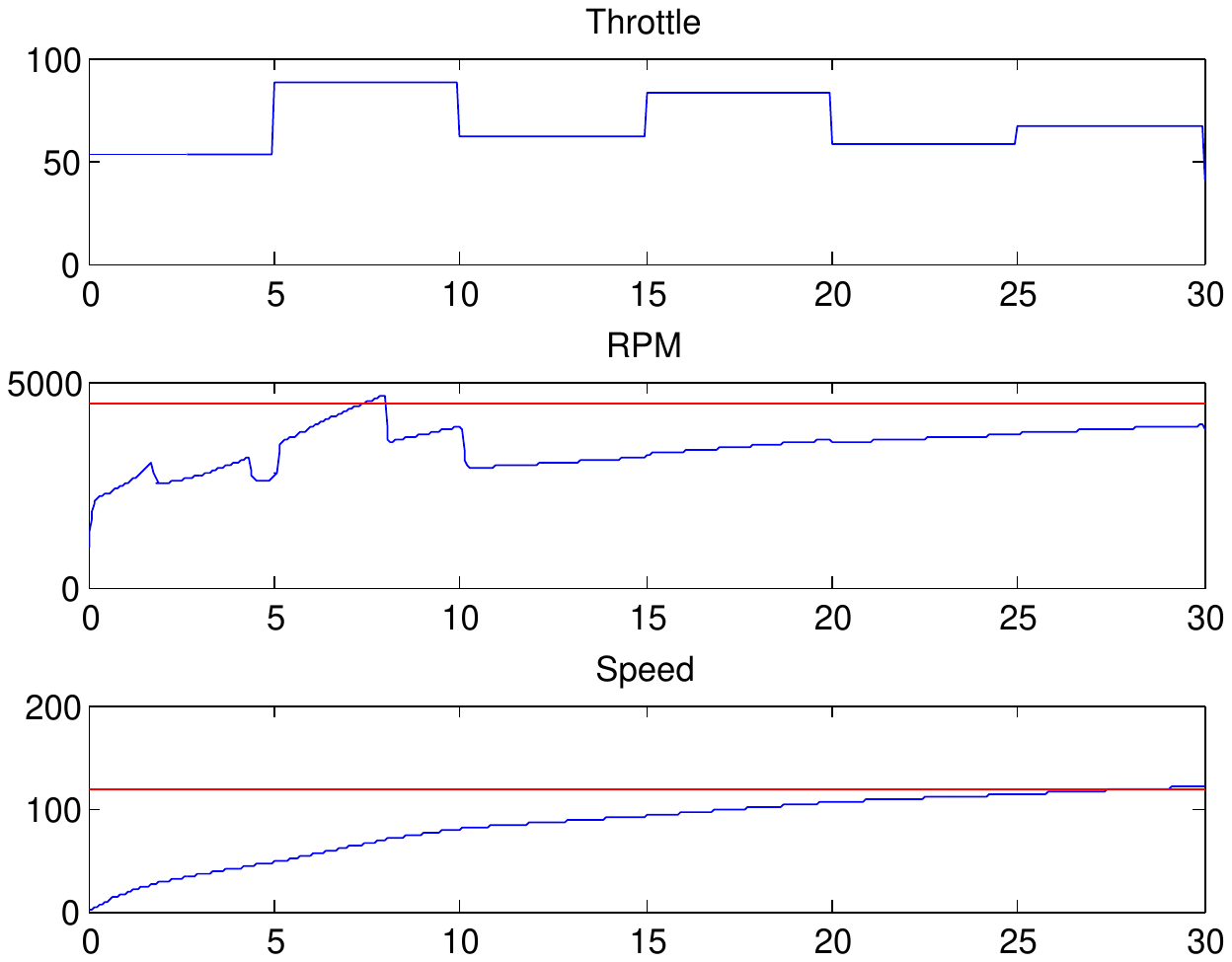}
%\vspace{-10pt}
\caption{Example \ref{exmp:autotrans} (AT): Throttle: A piecewise constant input signal $\InpSig$ parameterized with $\Lambda \in [0,100]^6$ and $t = [0, 5, 10, 15, 20, 25]$. RPM, Speed: The corresponding output signals that falsify the specification ``The vehicle speed $v$ is always under 120km/h or the engine speed $\omega$ is always below 4500RPM."}
\label{fig:falstrajs}
\vspace{-15pt}
\end{figure}	
}

Under Assumption \ref{ass:3}, a system $\Sys$ can be viewed as a function $\SysFun_{\Sys} : \genSS_0 \times \genInpSet \rightarrow \genOut^N \times \Sam$ which takes as an input an initial condition $\gensspt_0 \in \genSS_0$ and an input signal $\InpSig \in \genInpSet$ and it produces as output a signal $\dsig : N \rightarrow \genOut$ (also referred to as {\it trajectory}) and a timing function $\sam : N \rightarrow \preals$.
The only restriction on the timing function $\sam$ is that it must be a monotonic function, i.e., $\sam(i) < \sam(j)$ for $i<j$.
The pair $\tss = (\dsig,\sam)$ is usually referred to as a {\it timed state sequence}, which is a widely accepted model for reasoning about real time systems \cite{alur90realtime}.
A timed state sequence can represent a computer simulated trajectory of a CPS or the sampling process that takes place when we digitally monitor physical systems.
We remark that a timed state sequence can represent both the internal state of the software/hardware (usually through an abstraction) and the state of the physical system.
% In the following, we will assume that the system also provides us with a sampling function%
% \footnote{This is indeed the case when using an of-the-self simulator like Matlab/Simulink (TM) or a hardware-in-the-loop simulator.}%
% , i.e., $\SysFun_{\Sys} : \genSS_0 \times \genInpSet \rightarrow \genOut^N \times \Sam$.
The set of all timed state sequences of a system $\Sys$ will be denoted by $\Lc(\Sys)$. 
That is, 
% $\Lc(\Sys) = \{ (\dsig,\sam) \; | \; \exists \gensspt_0 \in \genSS_0 \, . \, \exists  \InpSig \in \genInpSet \, . \, (\dsig,\sam) = \Delta_{\Sys}(\gensspt_0, \InpSig) \}$.
\[ \Lc(\Sys) = \{ (\dsig,\sam) \; | \; \exists \gensspt_0 \in \genSS_0 \, . \, \exists  \InpSig \in \genInpSet \, . \, (\dsig,\sam) = \Delta_{\Sys}(\gensspt_0, \InpSig) \}. \]

% A sampled (or discrete time) signal can represent computer simulated trajectories of physical models or the sampling process that takes place when we digitally monitor physical systems.
% In the following, we assume that a continuous time signal is a mathematical object that represents the behavior of a physical system and we define a {\it sampling function} $\sam \in R^N$ which returns the point in time at which the $i$-th sample was taken.
% The set of all sampling functions is denoted by $\Sam$, i.e., $\Sam = R^N$.
% We fix $N \subseteq \Ne$ to be the set indexes for the sampled points.
% In other words, the discrete time signal $\dsig = \sig \circ \sam$ corresponds to the observable (discretized) behavior of the physical system.
% Here, $\circ$ denotes function composition : $(f \circ g)(t) = f(g(t))$ and $\Oc^{-1} : X \rightarrow \Pc({AP})$ is defined as $\Oc^{-1}(x) := \{ p \in {AP} \; |\; x \in \Oc(p) \}$ for $x \in X$. 

%Our method aims to explore and infer properties that the system $\Sys$ satisfies by observing its response (output signals) to input signals and initial conditions.

Our high-level goal is to explore and infer properties that the system $\Sys$ satisfies.
We do so by observing the system response (output signals) to particular input signals and initial conditions. 
We assume that the system designer has partial understanding about the properties that the system satisfies (or does not satisfy) and would like to be able to precisely determine these properties. 
In particular, we assume that the system developer can formalize the system properties in Metric Temporal Logic (MTL) \cite{Koymans90}, where some parameters are unknown.
Such parameters could be unknown threshold values for the continuous state variables of the hybrid system or some unknown real time constraints. 

\ifthenelse{\boolean{TECHREP}}{
{\color{revComments} MTL enables the formalization of complex requirements with respect to both state and time. In addition to propositional logic operators such as conjunction ($\wedge$), disjunction($\vee$) and negation($\neg$), MTL supports temporal operators such as next($X$), until ($\Uc$), release ($\Rc$), always ($\Box$) and eventually ($\Diamond$). Among others, MTL can be utilized to express specifications such as:
\begin{itemize}
\item Safety ($\Box \phi$) : $\phi$ should always hold from this moment on.
\item Liveness ($\Diamond \phi$): $\phi$ should hold at some point in the future (or now).
\item Coverage ($\Diamond \phi_1 \wedge \Diamond \phi_2 \ ... \wedge \Diamond \phi_n$): $\phi_1$ through $\phi_n$ should hold at some point in the future (or now), not necessarily in order or at the same time.
\item Stabilization ($\Diamond \Box \phi$): At some point in the future (or now), $\phi$ should always hold. 
\item Recurrence ($\Box \Diamond \phi$) : At every point time, $\phi$ should hold at some point in the future (or now).  
\end{itemize}
}}{}

{\color{revComments} Another popular formalism for the definition of formal requirements is Signal Temporal Logic (STL) \cite{MalerNickovic04}. 
Since MTL formulas are interpreted over behaviors of the CPS, the results provided in this paper can be directly applied over STL formulas as well.
To enable of elicitation of formal requirements for CPS, tools such as \textsc{ViSpec} \cite{hoxhavispec} may be utilized. }

 %safety ($\Box \phi$), liveness ($\Diamond \phi$), coverage ($\Diamond \phi_1 \wedge \Diamond \phi_2 \ ... \wedge \Diamond \phi_n$), stabilization ($\Diamond \Box \phi$) and recurrence ($\Box \Diamond \phi$) properties.  
 
%Furthermore, we assume that system developer can formalize the system properties in Metric Temporal Logic (MTL).
%Our high level goal is to explore and infer properties that the system $\Sys$ satisfies by observing its response (output signals) to particular input signals and initial conditions.
%We assume that the system designer has partial understanding about the properties that the system satisfies or does not satisfy and would like to be able to precisely determine these properties.

\ifthenelse{\boolean{TECHREP}}{Throughout the paper, we will consider two running examples. The first example consists of an automatic transmission model, and the second, consists of a hybrid non-linear time varying system.}
{
	Throughout the paper, we will consider a running example of an automatic transmission model.
}

\begin{exmp}[AT]
\label{exmp:autotrans}
%As a motivating example, we will consider a slightly modified version of the Automatic Transmission model provided by Mathworks as a Simulink demo\footnote{Available at: \url{http://www.mathworks.com/help/simulink/examples/modeling-an-automatic-transmission-controller.html}}.
%Further details on this example can be found in \cite{ZhaoKH03csm,AbbasFSIG11tecs}. 
We consider a slightly modified version of the Automatic Transmission model provided by Mathworks as a Simulink demo\footnote{Available at: \url{http://www.mathworks.com/help/simulink/examples/modeling-an-automatic-transmission-controller.html}}. 
Further details on this example can be found in \cite{ZhaoKH03csm,AbbasFSIG11tecs}. 
The only input $\InpSig$ to the system is the throttle schedule, while the brake schedule is set simply to 0 for the duration of the simulation which is $T = 30 \sec$.
The physical system has two continuous-time state variables which are also its outputs: the speed of the engine $\omega$ (RPM) and the speed of the vehicle $v$, i.e., $\genOut = \Re^2$ and $\gentraj(t) = [\omega(t) \; v(t)]^\intercal$ for all $t \in [0,30]$.
Initially, the vehicle is at rest at time 0, i.e., $\genSS_0 = \{ [0 \; 0]^\intercal \}$ and  $\gensspt_0 = \gentraj(0) = [0 \; 0]^\intercal $.
Therefore, the output trajectories depend only on the input signal $\InpSig$ which models the throttle, i.e., $(\gentraj,\sam) = \SysFun_{\Sys}(\InpSig)$.
The throttle at each point in time can take any value between 0 (fully closed) to 100 (fully open).
Namely, $u(i) \in U=[0,100]$ for each $i \in N$.
The model also contains a Stateflow chart with two concurrently executing Finite State Machines (FSM) with 4 and 3 states, respectively.
The FSM models the logic that controls the switching between the gears in the transmission system.
We remark that the system is deterministic, i.e., under the same input $\InpSig$, we will observe the same output $\gentraj$.
In our previous work \cite{AbbasFSIG11tecs,AnnapureddyLFS11tacas,SankaranarayananF2012hscc}, on such models, we demonstrated how to falsify requirements like: ``The vehicle speed $v$ is always under 120km/h or the engine speed $\omega$ is always below 4500RPM." A falsifying system trajectory appears in Fig. \ref{fig:falstrajs}.
\ifthenelse{\boolean{TECHREP}}{A falsifying system trajectory appears in Fig. \ref{fig:falstrajs} (Left).}\exmend
\end{exmp}

\ifthenelse{\boolean{TECHREP}}{
\begin{exmp}[HS]
\label{exmp:hs}
We consider the hybid time-varying non-linear system presented in Fig. \ref{fig:HS}. %Let $\gensstraj(t) = [\gensspt_1(t) \; \gensspt_2(t)]^T$ we define the system as follows: 
%\noindent with initial condition $\gensstraj(0) = \gensspt_0 \in \genSS_0 = [-1,1] \times [-1,1]$. 
The output of the system is the state of the system, i.e. $\gentraj(t) = \gensstraj(t)$.
Interesting requirements on this system would be ``A trajectory of the system should never pass through the sets $[-1.6, -1.4]^2$ or $[3.4,3.6] \times [-1.6, -1.4]$". A falsifying system trajectory appears in Fig. \ref{fig:falstrajs} (Right). \exmend

% \noindent \textbf{If} $\gensspt_0 \in [-1,1]^2 \backslash [0.85,0.95]^2$ \textbf{then}
% \begin{gather*}
% \frac{d \gensstraj(t)}{dt} =
% \left[ \begin{array}{c}
%  \frac{d\gensspt_1(t)}{dt} \\
%  \frac{d\gensspt_2(t)}{dt}
% \end{array} \right]
% = \\
% \left[ \begin{array}{c}
%  \gensspt_1(t) -\gensspt_2(t) + 0.1 t \\
%  \gensspt_2(t) \cos (2\pi \gensspt_2(t)) - \gensspt_1(t) \sin (2 \pi \gensspt_1(t)) + 0.1 t
% \end{array} \right ]
% \end{gather*}
% \textbf{else}
% \begin{gather*}
% \frac{d \gensstraj(t)}{dt} = 
% \left[ \begin{array}{c} \gensspt_1(t) \\
%  -\gensspt_1(t) + \gensspt_2(t)
% \end{array} \right ]
% \end{gather*}

\begin{figure}
	\begin{tikzpicture}
	\usetikzlibrary{calc}
	\usetikzlibrary{positioning}
	\tikzstyle{every node}=[font=\scriptsize, rounded corners=6pt, fill=black!5, draw=black, thick,  minimum size = 1.6cm]
	  \draw (-2.5,2.5) node[] (p1) {\begin{tabular}{ll}  \multicolumn{2}{c}{$S_0$} \\ \\$\dot{x_1} = $&$ x_1(t) - x_2(t) + 0.1t $ \\ $\dot{x_2} = $&$ -x_1(t) \sin(2\pi x_1(t)) + $\\ &$ x_2 \cos(2\pi x_2(t)) + 0.1t $ \end{tabular}};
	   \draw (2.2,2.5) node[] (p2) {\begin{tabular}{l} \multicolumn{1}{c}{$S_1$} \\ \\ $ \dot{x_1} = x_1(t) $ \\ $\dot{x_2} = -x_1(t) + x_2(t) $ \\ \ \end{tabular} };
	 \tikzstyle{every node}=[font=\scriptsize,fill=none]
		% \draw [ bend left, shorten >=1pt,->] ($(p2.south)-(1,0)$) to node[below=1pt] {$\gensspt \in [-1,1]^2 \backslash X^U$} (p1);
	     \draw [ bend left, shorten >=1pt,->] (-4.5,4) to node[xshift=25pt, yshift=12pt] {$\gensspt_0 \in [-1,1]^2 \backslash X^U$} ($(p1.north)-(1,0)$);
	     \draw [ bend left, shorten >=1pt,->] ($(p1.north)+(1,0)$) to node[above=1pt] {$\gensspt \in X^U$} (p2);
	     \draw [ bend left, shorten >=1pt,->] (2,4) to node[xshift=10pt, yshift=15pt] {$\gensspt_0 \in  X^U$} ($(p2.north)-(0,0)$);
	\end{tikzpicture}
	\caption{Example \ref{exmp:hs}: Hybid non-linear system with $X^U = [0.85,0.95]^2$ and initial condition $\gensspt_0 \in [-1,1] \times [-1,1]$.}
	\label{fig:HS}
\end{figure}
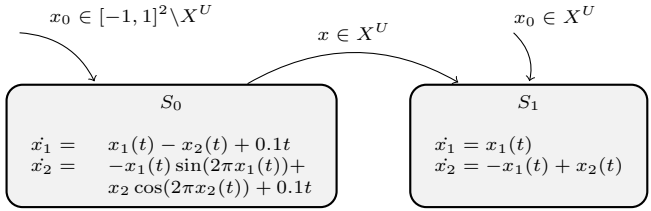

\end{exmp}
}
{}

\subsection{Parameter Mining}

In this work, we provide answers to queries like ``What is the shortest time that $\omega$ can exceed 3250 RPM" or ``For how long can  $\omega$ be below 4500 RPM". We can also answer queries about the relationships between parameters with regard to system falsification. For example, for the specification ``Always the vehicle speed $v$ and engine speed $\omega$ need to be less than parameters $\theta_1,\theta_2$, respectively" we could ask ``If I increase/decrease $\theta_1$ by a specific amount, how much do I have to increase/decrease $\theta_2$ so that I satisfy the specification?".

% According to the report generated by {\tt sldiagnostics}, 
% The model contains 69 blocks out of which there are 2 integrators (i.e., 2 continuous state variables), 3 look-up tables, 3 look-up 2D tables and a Stateflow chart.
% The Stateflow chart (see Fig. \ref{Fig:exmp:} for a schematic) contains two concurrently executing Finite State Machines (FSM) with 4 and 3 states, respectively.
% Even though this is a small size model and the specification is a simple bounded time reachability requirement, it already exhibits all the complexities, i.e., look-up tables, switching conditions which depend on inputs, that prevent formal modeling and analysis using the state of the art tools, e.g., SpaceEx \cite{FrehseCAV11}. \exmend

{\color{revComments}

Formally, we extend and generalize the problem of single parameter mining presented in \cite{YangHF12ictss}. There the problem is defined as follows.

%\noindent \begin{prob} [MTL Parameter Estimation Problem]
%Given an MTL formula $\phi[\vec{\theta}]$ with a vector of unknown parameters $\vec{\theta} \in \Theta = [\thetaMin,\thetaMax] \in \Re^n,n \geq 0$,  a hybrid system $\Sys$, and a maximum testing time $T$, find an optimal range $\Theta^* = [\thetaMin^*,\thetaMax^*] \subseteq \Theta$ such that for any $\vec{\zeta} \in \Theta^*$, $\phi[\vec{\zeta}]$ does not hold on system $\Sys$, i.e., $\Sys \not \models \phi[\vec{\zeta}]$.
%\label{prob:mtl:param}
%\end{prob}

\begin{prob} [MTL 1-Parameter Mining]
Given an MTL formula $\phi[\theta]$ with a single unknown parameter $\theta \in \Theta = [\theta_m,\theta_M]$ and a system $\Sys$, find an optimal range $\Theta^* = [\theta^*_m,\theta^*_M]$ such that for any $\zeta \in \Theta^*$, $\phi[\zeta]$ does not hold on $\Sys$, i.e., $\Sys \not \models \phi[\zeta]$.
\label{prob:mtl:paramOld}
\end{prob}

The extension in the present work is in regards to the number of parameters that can appear in the specification. Formally, it is defined as follows:}

\noindent \begin{prob} [MTL m-Parameter Mining]
Given an  MTL formula $\phi[\vec{\theta}]$ with a vector of $m$ unknown parameters $\vec{\theta} \in \Theta = [\vec{\thetaMin},\vec{\thetaMax}]$ and a system $\Sys$, find the set $\Psi = \{\vec{\theta}^* \in \Theta \ | \ \Sys \not \models \phi[\vec{\theta}^*] \}$. 
%Given an MTL formula $\phi[\vec{\theta}]$ with a vector of unknown parameters $\vec{\theta} \in \Theta = [\thetaMin,\thetaMax] \subseteq \Re^n$,  a hybrid system $\Sys$, find the set $\Psi = \{\theta^* \in \Theta \ | \ \Sys \not \models \phi[\theta^*] \}$ such that for any parameter $\theta^*$ in $\Psi$ the specification $\phi[\theta^*]$ does not hold on system $\Sys$. 

\label{prob:mtl:paramExploration}
\end{prob}

That is, the solution to Problem \ref{prob:mtl:paramExploration} is the set $\Psi$ such that for any parameter $\vec{\theta}^*$ in $\Psi$ the specification $\phi[\vec{\theta}^*]$ does not hold on system $\Sys$. 
{\color{revComments}
In the rest of the paper, we refer to $\Psi$ as the parameter falsification domain. 
An approximate solution for Problem \ref{prob:mtl:paramOld} was presented in \cite{YangHF12ictss} for the case where $\theta$ is a scalar. 
In \cite{YangHF12ictss}, the solution to the problem returned a parameter with which the falsifying set can be inferred since the parameter range is one dimensional. 
Here, we provide a solution to Problem \ref{prob:mtl:paramExploration}. 
In the multiple parameter setting, we have a set of possible solutions which we need to explore. 
That is, the solution to the multi-parameter mining problem is in the form of a Pareto front \cite{myers2016response}.

\begin{figure}%
\begin{center}
\includegraphics[width=7.5cm]{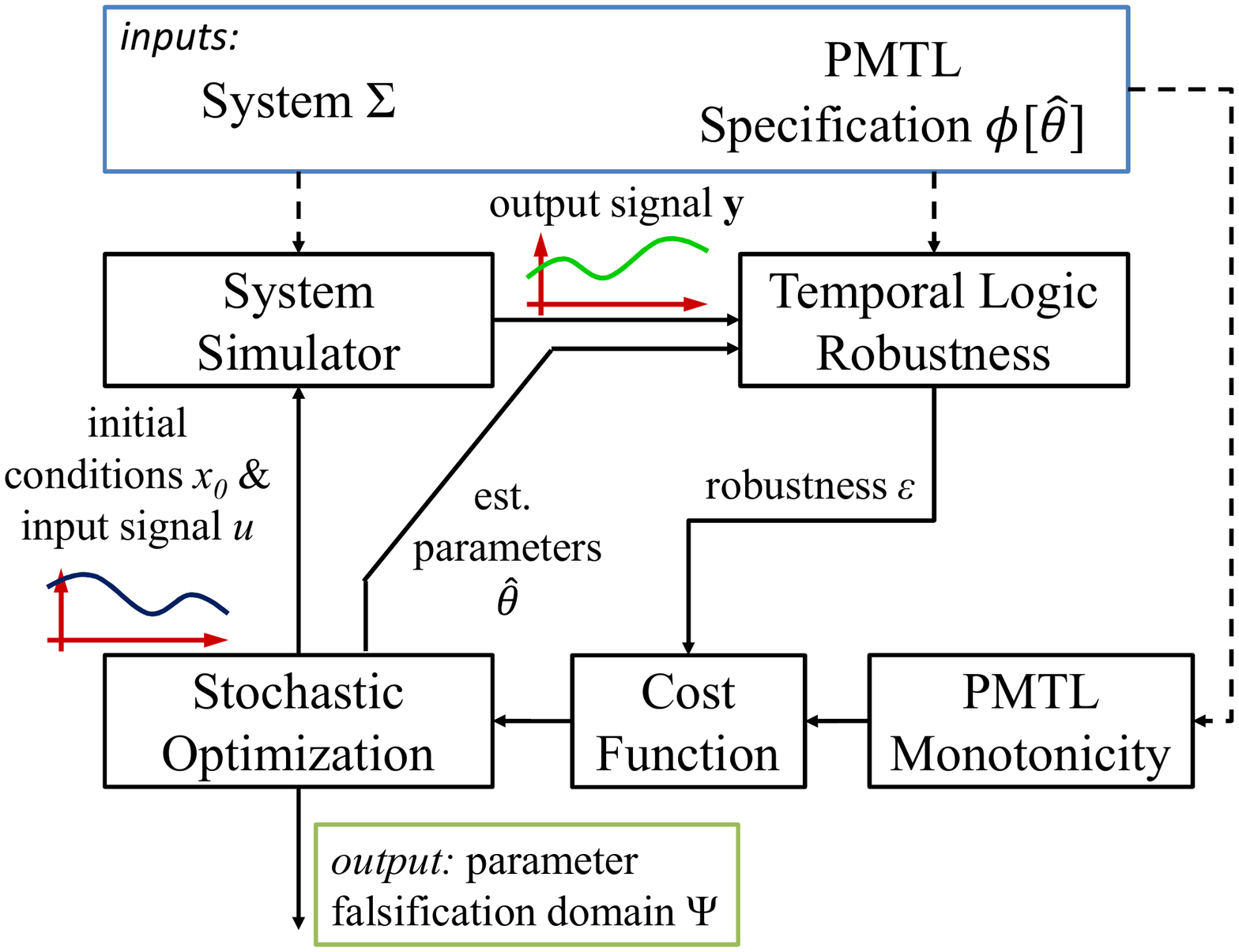} 
\end{center}
\vspace{-5pt} % in text
\caption{Overview of the solution to Problem \ref{prob:mtl:paramExploration}, the PMTL parameter mining problem for CPS.}
\label{Fig:over:solution}
\vspace{-15pt} % in text
\end{figure}

 We note that the original observation that the falsification domain problem over \textit{a single system output trace} has the structure of a Pareto front is made in \cite{AsarinDMN12rv}. In this work, we observe that the falsification domain problem over \textit{all system output traces} also has the structure of a Pareto front. Other methods for Pareto front computation have been studied in \cite{legriel2010approximating,deb2001multi}. However, the nature of the problem is significantly different in our case. 
 Here, due to the undecidability of the problem \cite{alur1995algorithmic}, we can only guarantee that a parameter falsifies the specification. 
 \ifthenelse{\boolean{TECHREP}}{It is not the case that we can guarantee that a parameter value satisfies the specification.}{}
 Therefore, the parameter falsification domain is generated strictly by utilizing falsifying behavior. }

Ideally, by solving Problem \ref{prob:mtl:paramExploration}, we would also like to have the property that for any $\vec{\zeta} \in \Theta - \Psi$, $\phi[\vec{\zeta}]$ holds on $\Sys$, i.e., $\Sys \models \phi[\vec{\zeta}]$.
However, even for a given $\vec{\zeta}$, the problem of algorithmically computing whether $\Sys \models \phi[\vec{\zeta}]$ is undecidable for the classes of systems that we consider in this work \cite{alur1995algorithmic}. 

%where $\vec{\thetaMin},\vec{\thetaMax} \in \Re^n$, %
%= [\theta_{m1}, \theta_{m2}, ..., \theta_{mn}]^T
%$\in \Re^n$ %and $\thetaMax %= [\theta_{M1}, \theta_{M2}, ..., \theta_{Mn}]^T 

% \ifthenelse{\boolean{TECHREP}}{	
% \begin{figure}
% \vspace{-10pt} % in text
% % \vspace{-10pt} % on top of page
% \begin{center}
% \includegraphics[width=8cm]{exmp_auto_trans_res_01} 
% \end{center}
% \vspace{-10pt}
% \caption{Example \ref{exmp:autotrans}: A piecewise constant input signal $\InpSig$ parameterized with $\Lambda \in [0,100]^6$ and $t = [0, 5, 10, 15, 20, 25]$ and the corresponding output signals that falsify the specification.}
% \label{Fig:exmp:autotrans}
% \vspace{-5pt} % in text
% \end{figure}}{}

An overview of our proposed solution to Problem \ref{prob:mtl:paramExploration} appears in Fig. \ref{Fig:over:solution}.
Given a model and a MTL specification with one or more parameters, the sampler produces a point $x_0$ from the set of initial conditions, input signal $\InpSig$ and vector of mined parameters $\vec{\theta}$ for the Parametric MTL specification. 
The initial conditions and input signal are passed to the system simulator which returns an execution trace (output trajectory and timing function). 
The trace, in conjunction with the mined parameters, is then analyzed by the MTL robustness analyzer which returns a robustness value.  
The robustness score computed is used by the stochastic sampler to decide on next initial conditions, inputs, and estimated parameters to utilize.
The process terminates after a maximum number of tests or when no improvement on the mined parameters has been made after a number of tests.
As the number of parameters increases, so does the computational complexity of the problem. 
For formulas with more than one parameter, we present an efficient approach in Section \ref{sec:mtl:paramFalsDomain} to explore the parameter falsification domain.

%!TEX root = root_sttt.tex

\vspace{-8pt}

\section{Robustness of Metric Temporal Logic Formulas}
\label{sec:mtl}

Metric Temporal Logic \cite{Koymans90} enables reasoning over quantitative temporal properties of boolean signals.
%Metric Temporal Logic (MTL) was introduced in \cite{Koymans90} in order to reason about the quantitative timing properties of boolean signals.
In the following, we present MTL in Negation Normal Form (NNF) since this is needed for the presentation of the new results in Section \ref{sec:mtl:param:solution}.
We denote the extended real number line by $\CoRe = \Re\cup\{\pm\infty\}$. 

\begin{defn}[Syntax of MTL in NNF] 
Let $\CoRe$ be the set of truth degree constants, ${\AP}$ be the set of atomic propositions and $\Ic$ be a non-empty non-singular interval of $\CoRe_{\ge 0}$. 
The set $\mtl$ of all well-formed formulas (wff) is inductively defined using the following rules:
\begin{itemize}
\item Terms: True ($\top$), false ($\bot$), all constants $r \in \CoRe$ and atomic propositions $p$, $\neg p$ for $p \in {\AP}$ are terms.
\item Formulas: if $\phi_1$ and $\phi_2$ are terms or formulas, then $\phi_1 \vee \phi_2$, $\phi_1 \wedge \phi_2$, $\phi_1 \Un_\Ic \phi_2$ and $\phi_1 \Rc_\Ic \phi_2$ are formulas.
\end{itemize}
\end{defn}

The atomic propositions in our case label subsets of the output space $\genOut$.
Each atomic proposition is a shorthand for an arithmetic expression of the form $p \equiv  g(\genOutPt) \leq c$, where $g : \genOut \rightarrow \Re$ and $c \in \Re$.
We define an observation map $\Oc : {\AP}\rightarrow \Pc(\genOut)$ such that for each $p \in {\AP}$ the corresponding set is $\Oc(p) = \{ \genOutPt \; | \; g(\genOutPt) \leq c \}  \subseteq \genOut$.

In the above definition, $\Uc_\Ic$ is the timed {\it until} operator and $\Rc_\Ic$ the timed {\it release} operator. 
The subscript $\Ic$ imposes timing constraints on the temporal operators. 
The interval $\Ic$ can be open, half-open or closed, bounded or unbounded, but it must be non-empty ($\Ic \neq \emptyset$) (and, practically speaking, non-singular ($\Ic \neq \{t\}$)).
In the case where $\Ic = [0,+\infty)$, we remove the subscript $\Ic$ from the temporal operators, i.e., we just write $\Uc$ and $\Rc$. 
Also, we can define {\it eventually} ($\Diamond_\Ic \phi \equiv \top \Un_\Ic \phi$) and {\it always} ($\Box_\Ic \phi \equiv \bot \Rc_\Ic \phi$).

Before proceeding to the actual definition of the robust semantics, we introduce some auxiliary notation.
% Recall that since $(\CoRe,\leq)$ is a totally ordered set, then it is {\it distributive}, i.e., for all $a,b,c \in \CoRe$ it is $a \sqcap (b \sqcup c) = (a \sqcap b) \sqcup (a \sqcap c)$ and $a \sqcup (b \sqcap c) = (a \sqcup b) \sqcap (a \sqcup c)$.
% Note that the structure $(\Be,<)$ is a totally ordered set with $\bot < \top$ and that $(\Be, \sqcap, \sqcup, \neg)$ is a boolean algebra with the complementation defined as $\neg \top = \bot$ and $\neg \bot = \top$.
A metric space is a pair $(X,d)$ such that the topology of the set $X$ is induced by a metric $d$.
% In this paper, we only use the notions of metric and neighborhood which we define below.
% \begin{defn}[Metric] 
% A metric on a set $X$ is a positive function $d: X \times X \rightarrow \Re_{\geq 0}$, such that the three following properties hold
% \begin{enumerate}
% \item $\forall x_1,x_2\in X$. $d(x_1,x_2) = 0 \Leftrightarrow x_1 =x_2$ \label{cond:metr1}
% \item $\forall x_1,x_2\in X$. $d(x_1,x_2) = d(x_2,x_1)$
% \item $\forall x_1,x_2,x_3 \in X$. $d(x_1,x_3)\leq d(x_1,x_2) + d(x_2,x_3)$
% \end{enumerate}
% \end{defn}
% If we replace condition (\ref{cond:metr1}) above with the condition $\forall x \in X$, $d(x,x) = 0$, then $d$ is a {\it pseudometric}.
%
Using a metric $d$, we can define the distance of a point $x \in X$ from a set $S \subseteq X$. 
Intuitively, this distance is the shortest distance from $x$ to all the points in $S$. 
In a similar way, the depth of a point $x$ in a set $S$ is defined to be the shortest distance of $x$ from the boundary of $S$. 
Both the notions of distance and depth play a fundamental role in the definition of the robustness degree. The metrics and distances utilized in this work are covered in more detail in \cite{FainekosP09tcs,AbbasFSIG11tecs}. 

\begin{defn}[Signed Distance] 
\label{def:SignedDistance}
% \begin{defn}[Distance, Depth \cite{BoydV_book04} $\S 8$] 
Let $x \in X$ be a point, $S \subseteq X$ be a set and $d$ be a metric on $X$. Then, we define the
% \begin{itemize}
% \item Distance from $x$ to $S$ to be $\mathbf{dist}_d(x,S) := \inf\{d(x,y)\;|\;y \in S\}$
% \item Depth of $x$ in $S$ to be $\mathbf{depth}_d(x,S) := \mathbf{dist}_d(x,X \backslash S)$
% \item Signed Distance from $x$ to $S$ to be 
% \[ \mathbf{Dist}_d(x,S) := \left\{ \begin{array}{ll}
% -\mathbf{dist}_d(x,S) & \mbox{ if } x \not \in S \\
% \mathbf{depth}_d(x,S) & \mbox{ if } x \in S \\
% \end{array} \right. \]
% \end{itemize}
Signed Distance from $x$ to $S$ to be 
\[ \mathbf{Dist}_d(x,S) := \left\{ \begin{array}{ll}
- \mathbf{dist}_d(x,S) :=  - \inf\{d(x,y)\;|\;y \in S\} \\ \hspace{2cm}\mbox{ if } x \not \in S \\
\mathbf{depth}_d(x,S) := \mathbf{dist}_d(x,X \backslash S) \\ \hspace{2cm}\mbox{ if } x \in S \\
\end{array} \right. \]
\end{defn}
We utilize the extended definition of the supremum and infimum, i.e., $\sup \emptyset := -\infty$ and $\inf \emptyset := +\infty$.

% In other words, the supremum of the empty set is defined to be bottom element of the domain, while the infimum of the empty set is defined to be the top element of the domain.
% For example, when we reason over $\overline{\Re}$, then $\sup \emptyset := -\infty $ and $\inf \emptyset := +\infty$. 
% Also of importance is the notion of an open ball of radius $\varepsilon$ centered at a point $x \in X$. 

% \begin{defn}[$\varepsilon$-Ball] 
% Given a metric $d$, a radius $\varepsilon >0$ and a point $x \in X$, the open $\varepsilon$-ball centered at $x$ is defined as $B_d(x,\varepsilon) = \{ y \in X \; | \; d(x,y) < \varepsilon \}$.
% \end{defn}
% It is easy to verify that if the distance ($\mathbf{dist}_d$) of a point $x$ from a set $S$ is $\varepsilon > 0$, then $B_d(x,\varepsilon) \cap S = \emptyset$. 
% And similarly, if  $\mathbf{depth}_d(x,S) = \varepsilon > 0$, then $B_d(x,\varepsilon) \subseteq S$.

We define the binary relation $\preceq $ on parameter vectors $\vec{\theta},\vec{\theta}'$ such that $\vec{\theta} \preceq \vec{\theta'} \iff \forall i$,$ \ \theta_i \le  \theta'_i$, where $i$ is the $i^{th}$ entry of the vector. 
MTL formulas are interpreted over timed state sequences $\tss$.
In the past \cite{FainekosP06fates,FainekosP09tcs}, we proposed multi-valued semantics for the MTL where the valuation function on the predicates takes values over the totally ordered set $\CoRe$ according to a metric $d$ operating on the output space $\genOut$. 
We let the valuation function be the depth (or the distance) of the current point of the signal $\dsig(i)$ in a set $\Oc(p)$ labeled by the atomic proposition $p$. 
Intuitively, this distance represents how robust is the point $\dsig(i)$ within set $\Oc(p)$. 
While positive values indicate satisfaction, negative values indicate that the trajectory falsifies the MTL specification.
\ifthenelse{\boolean{TECHREP}}{If this metric is zero, then even the smallest perturbation of the point can drive it inside or outside the set $\Oc(p)$, dramatically affecting membership.}{}This is called a robustness estimate and is formally defined in Definition \ref{def:mitlrob}.

For the purposes of the following discussion, we use the notation $\dle \phi \dri$ to denote the robustness estimate with which the timed state sequence $\tss$ satisfies the specification $\phi$. 
%The robustness estimate represents the satisfaction of a system trajectory over an MTL formula through a real number. While positive values indicate satisfaction, negative values indicate that the trajectory falsifies the MTL specification.
%Definition \ref{def:mitlrob} defines for system every execution "how well" it satisfies an MTL formula, where a negative value means that the formula is not satisfied and a positive value means it is satisfied. We refer to the returned value as the robustness estimate.
Formally, the valuation function for a given formula $\phi$ is $\dle \phi \dri : \genOut^N \times \Sam \times N \rightarrow \overline{\Re}$.
In the definition below, we also use the following notation : for $Q \subseteq R$, the {\it preimage} of $Q$ under $\sam$ is defined as : $\sam^{-1}(Q) := \{i \in N \; | \; \sam(i) \in Q \}$. 

\ifthenelse{\boolean{TECHREP}}{

\begin{figure*}
\centering
\vspace{-10pt}
\begin{tabular}{cc}
\includegraphics[width=7.5cm]{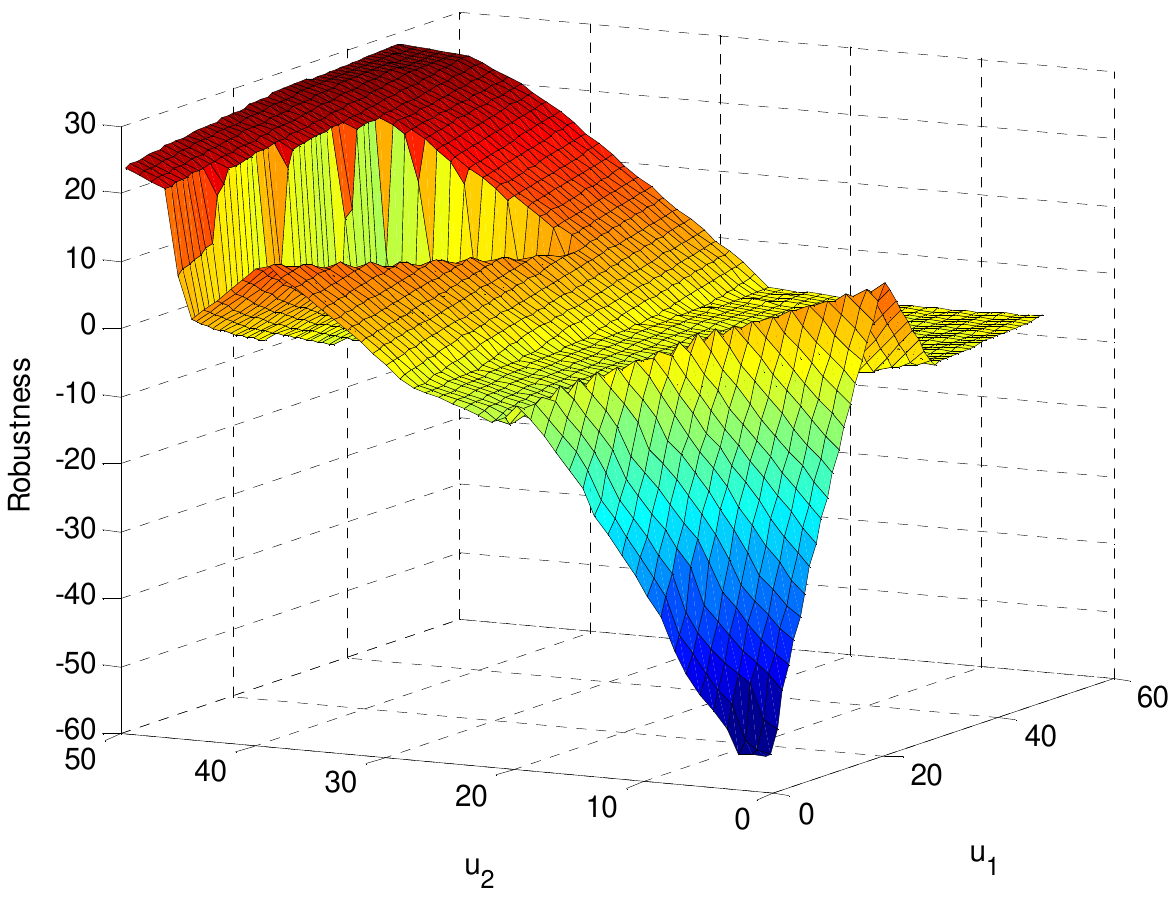} &
\includegraphics[width=7.5cm]{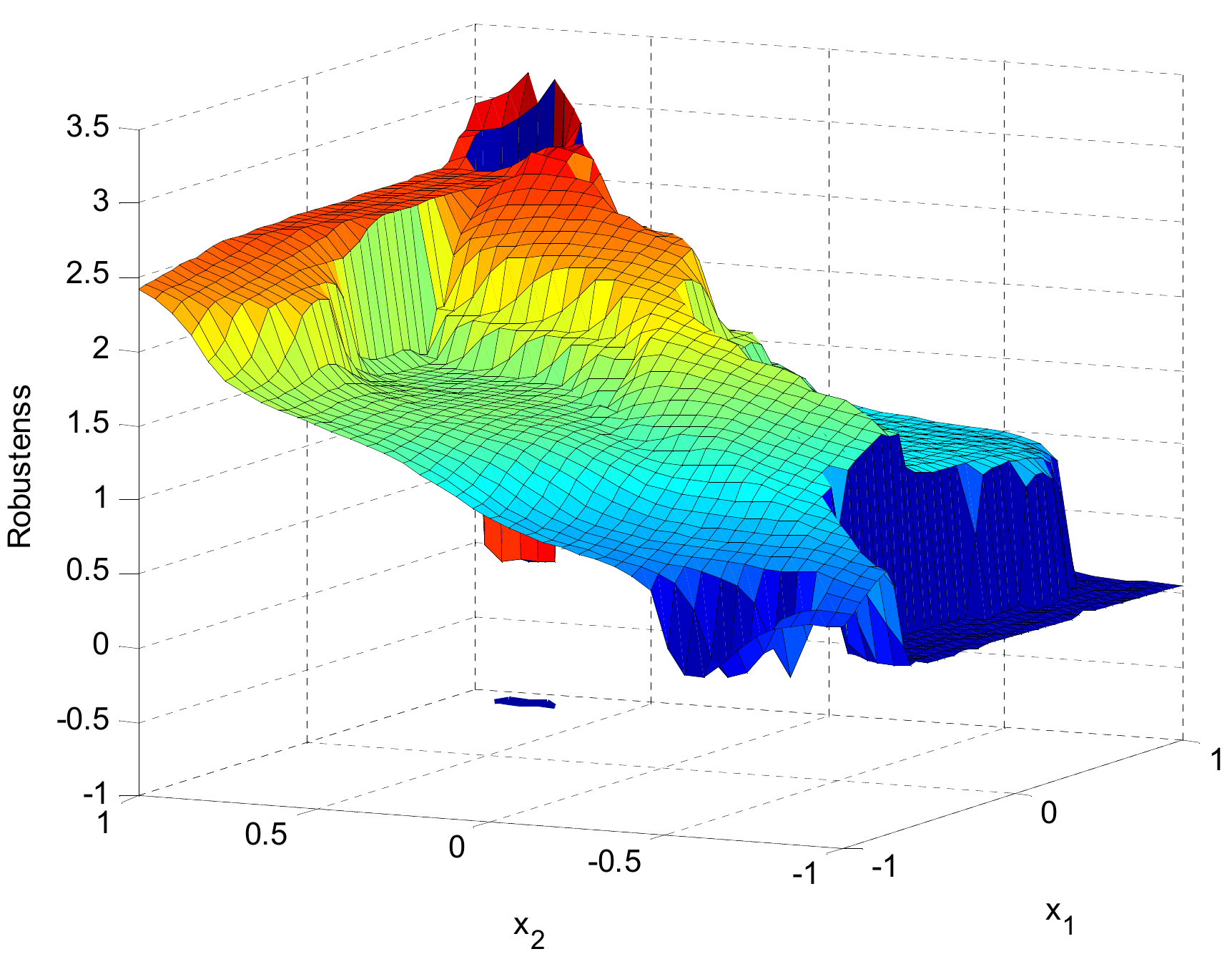} 
\end{tabular}
%\vspace{-10pt}
\caption{Robustness estimate landscape for system specifications. \textbf{Left:} Example \ref{exmp:autotrans} (AT): $\phi_{AT} = \neg (\Diamond_{[0,30]} (v > 100) \wedge \Box (\omega \le 4500)) \wedge \neg \Diamond_{[10,40]}\Box_{[0,5]}(60<v \le 80) \wedge \neg \Diamond_{[50,60]}\Box_{[0,3]}(v \le 60)$. The input signal to the system is generated by linearly interpolating control points $u_1$, $u_2$ at time 0 and 60, respectively, for the throttle input $u$. That is, $u(t) = \frac{60-t}{60}u_1 + \frac{t}{60}u_2$.; \textbf{Right:} Example \ref{exmp:hs} (HS): $\phi_{HS} = \Box_{[0,2]} \neg a \wedge \Box_{[0,2]} \neg b $, where $\Oc(a) = [-1.6, -1.4]^2$ and $\Oc(b) = [3.4,3.6] \times [-1.6, -1.4]$. Here $x_1$ and $x_2$ are initial conditions for the hybrid system.}
\label{fig:roblandscape}
\vspace{-5pt}
\end{figure*} 
}
{

\begin{figure}
\centering
\vspace{-5pt}
\includegraphics[width=6cm]{ATRobLndcp} 
\vspace{-5pt}
\caption{Robustness estimate landscape for system specifications. Example \ref{exmp:autotrans} (AT): $\phi_{AT} = \neg (\Diamond_{[0,30]} (v > 100) \wedge \Box (\omega \le 4500)) \wedge \neg \Diamond_{[10,40]}\Box_{[0,5]}(60<v \le 80) \wedge \neg (\Diamond_{[50,60]}\Box_{[0,3]}(v \le 60))$. The input signal to the system is generated by linearly interpolating control points $u_1$, $u_2$ at time 0 and 60, respectively, for the throttle input $u$. That is, $u(t) = \frac{60-t}{60}u_1 + \frac{t}{60}u_2$.}
\label{fig:roblandscape}
\vspace{-20pt}
\end{figure} 

}

\begin{defn} [Robustness Estimate \cite{FainekosP09tcs}] 
Let $\tss =$ $(\dsig,\sam) \in \Sig$, $r \in \CoRe$ and $i,j,k \in N$, then the robustness estimate of any formula MTL $\phi$ with respect to $\tss$ is recursively defined as follows
\begin{gather*} 
\allowdisplaybreaks
\dle \top \dri (\tss,i)  := +\infty  \qquad \dle \bot \dri (\tss,i) := -\infty \\
\dle p \dri (\tss,i)  := \Dist{d}(\dsig(i),\Oc(p))    \\ \dle \neg p \dri (\tss,i)  := -\Dist{d}(\dsig(i),\Oc(p))  \\
\dle \phi_1 \vee \phi_2 \dri (\tss,i)  := \max(\dle \phi_1 \dri (\tss,i), \dle \phi_2 \dri (\tss,i)) \\
\dle \phi_1 \wedge \phi_2 \dri (\tss,i)  := \min(\dle \phi_1 \dri (\tss,i), \dle \phi_2 \dri (\tss,i)) \\ 
\dle \phi_1 \Un_\Ic \phi_2 \dri (\tss,i)  :=  \nonumber \\ \sup_{j \in \sam^{-1}(\sam(i)+\Ic)} \bigl( \min ( \dle \phi_2 \dri (\tss,j),  \inf_{i\leq k <j} \dle \phi_1 \dri (\tss,k)) \bigr)  \label{def3until} \\ 
\dle \phi_1 \Rc_\Ic \phi_2 \dri (\tss,i)  :=  \nonumber \\ \inf_{j \in \sam^{-1}(\sam(i)+\Ic)} \bigl( \max ( \dle \phi_2 \dri (\tss,j), \sup_{i\leq k < j} \dle \phi_1 \dri (\tss,k) ) \bigr) 
\end{gather*}
% \begin{align*} 
% \allowdisplaybreaks
% \dle \top \dri (\tss,i) &  := +\infty  \\
% \dle r \dri (\tss,i) & := r  \\
% \dle p \dri (\tss,i) & := \Dist{d}(\dsig(i),\Oc(p))  \\
% \dle \neg \phi \dri (\tss,i) & := -\dle \phi \dri (\tss,i)  \\
% \dle \phi_1 \vee \phi_2 \dri (\tss,i) & := \max(\dle \phi_1 \dri (\tss,i), \dle \phi_2 \dri (\tss,i)) \\
% \dle \phi_1 \Un_\Ic \phi_2 \dri (\tss,i) & := \sup_{j \in \sam^{-1}(\sam(i)+\Ic)} \bigl( \min ( \dle \phi_2 \dri (\tss,j),  \inf_{i\leq k <j} \dle \phi_1 \dri (\sig,k)) \bigr)  
% \end{align*}
\label{def:mitlrob}
\end{defn}
Recall that we use the extended definition of supremum and infimum.
When $i=0$, then we write $\dle \phi \dri (\tss)$.

%Definition \ref{def:mitlrob} defines for system every execution "how well" it satisfies an MTL formula, where a negative value means that the formula is not satisfied and a positive value means it is satisfied. We refer to the returned value as the robustness estimate.
The robustness of an MTL formula with respect to a timed state sequence can be computed using several existing algorithms \cite{FainekosP09tcs,FainekosSUY12acc,DonzeM10formats}.

If we consider the robustness estimate over systems, the resulting robustness landscape can be both non-linear and non-convex.
\ifthenelse{\boolean{TECHREP}}{In Fig. \ref{fig:roblandscape} we present the robustness landscape for the two running examples, namely Examples \ref{exmp:autotrans} (AT) and \ref{exmp:hs} (HS), on two specifications.}{In Fig. \ref{fig:roblandscape} we present the robustness landscape for the automotive running example.}

%non-linear time varying system presented in Ex. \ref{exmp:tv}. The initial condition for the system is $\gentraj(0) = \gensspt_0 \in \genSS_0 = [-1,1] \times [-1,1]$. The robustness landscape is generated over the system specification $\phi = \Box_{[0,2]} \neg a \wedge \Box_{[0,2]} \neg b $, where $\Oc(a) = [-1.6, -1.4]^2$ and $\Oc(b) = [3.4,3.6] \times [-1.6, -1.4]$.

%$\phi_{AT} = \neg (\Diamond_{[0,30]} p \wedge \Box q) \wedge \neg \Diamond_{[10,40]}\Box_{[0,5]}r \wedge \neg \Diamond_{[50,60]}\Box_{[0,3]}s$, where $p \equiv (v > 100)$, $q \equiv (\omega \leq 4500)$, $p \equiv (\omega \leq 3250)$

% The following proposition states the relationship between the discrete time Boolean and the robust semantics of MITL. 

% \begin{prop}
% Let $\phi \in \Phi_\Be$, $\Oc \in P(X)^{AP}$, $\sig \in \Sig$ and $\sam \in \Sam$, then
% \[ \begin{array}{lr}
% (1) \quad \dle \phi \dri_\sam (\sig) > 0 \implies \dla \phi \dra_\sam (\sig) = \top & \quad \mbox{and} \quad
%  \quad \dle \phi \dri_\sam (\sig) < 0 \implies \dla \phi \dra_\sam (\sig) = \bot \\
% (2) \quad \dla \phi \dra_\sam (\sig) = \top \implies \dle \phi \dri_\sam (\sig) \geq 0 & \quad \mbox{and} \quad
%  \quad \dla \phi \dra_\sam (\sig) = \bot \implies \dle \phi \dri_\sam (\sig) \leq 0
% \end{array}\]
% \label{prop:semantics}
% \end{prop}

% Note that the equivalence in the above proposition fails because, if a point is on the boundary of the set, its distance to the set or its depth in the set is by definition zero. 
% Therefore, the point is classified to belong to that set even if the set is open in the topology. 

\vspace{-4pt}

\section{Parametric Metric Temporal Logic over Signals}

\vspace{-4pt}

In many cases, it is important to be able to describe an MTL specification with unknown parameters and then, infer the parameters that make the specification false.
In \cite{AsarinDMN12rv}, Asarin et al. introduced Parametric Signal Temporal Logic (PSTL) and presented two algorithms for computing approximations for parameters over a given signal.
Here, we review some of the results in \cite{AsarinDMN12rv} while adapting them in the notation and formalism that we use in this paper.

\begin{defn}[Syntax of Parametric MTL]
Let $\vec{\theta}$ be a vector of parameters. The set of all well formed Parametric MTL (PMTL) formulas is the set of all well formed MTL formulas where for all $i$, $\theta_i$ either appears in an arithmetic expression, i.e., $p[\theta_i] \equiv g(\genOutPt) \leq \theta_i$, or in the timing constraint of a temporal operator, i.e., $\Ic[\theta_i]$.  
\end{defn}

We will denote a PMTL formula $\phi$ with parameters $\vec{\theta}$ by $\phi[\vec{\theta}]$.
Given a vector of parameters $
\vec{\theta} \in \Theta$, then the formula $\phi[\vec{\theta}]$ is an MTL formula. There is an implicit mapping from the vector of parameters $\vec{\theta}$ to the corresponding arithmetic expressions and temporal operators in the MTL formula.
%Also, in order to simplify the notation, we define the function $f_\phi(\tss,\lambda) = \dle \phi[\lambda] \dri (\tss)$. %We will define the function $\funcM:\mathbb{R}^n \rightarrow \mathbb{R}$, which maps the parameter vector to a real number. We will discuss the possible choices for function $\funcM$ and how this choice will effect the parameter estimation problem in section DEFINE. A common choice for $\funcM$ is the euclidean norm, i.e. $\funcM(\theta)= \| \theta \|$.

Since the valuation function of an MTL formula is a composition of minimum and maximum operations quantified over time intervals, a formula $\phi[\theta]$, when $\theta$ is a scalar, is always monotonic with respect to $\theta$ under certain conditions. Similarly, when $\vec{\theta}$ is a vector, then the valuation function is monotonic with respect to a priority function $f(\vec{\theta})$. {\color{revComments} In general, determining the monotonicity of PMTL formulas is undecidable \cite{jin2013mining}}. The priority function will enable the system engineer to prioritize the optimization of some parameters over others by defining specific weights, or setting an optimization strategy such as optimizing the minimum, maximum, or norm of all parameters. The priority function will be defined in detail in the next section.

\ifthenelse{\boolean{TECHREP}}{}{}
\begin{figure*}
\centering
\vspace{-10pt}
\begin{tabular}{cc}
\includegraphics[width=6.2cm]{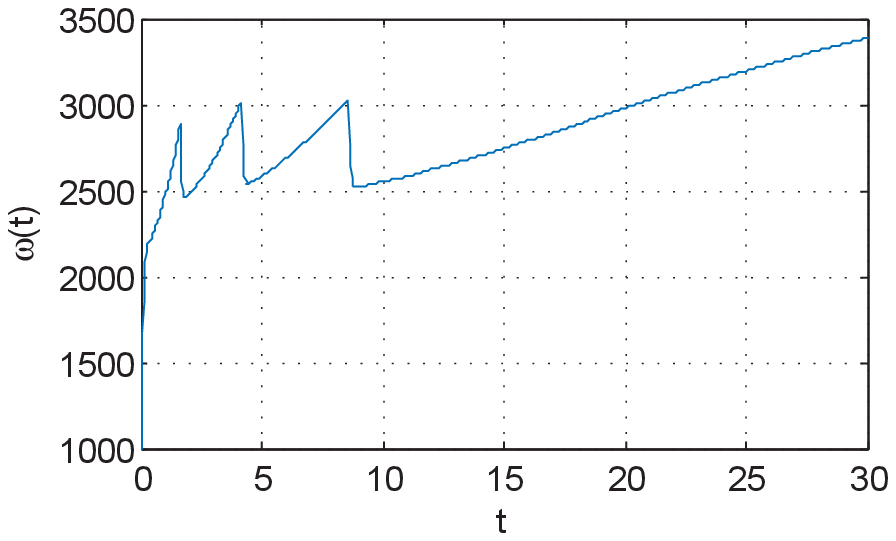} &
\includegraphics[width=6.1cm]{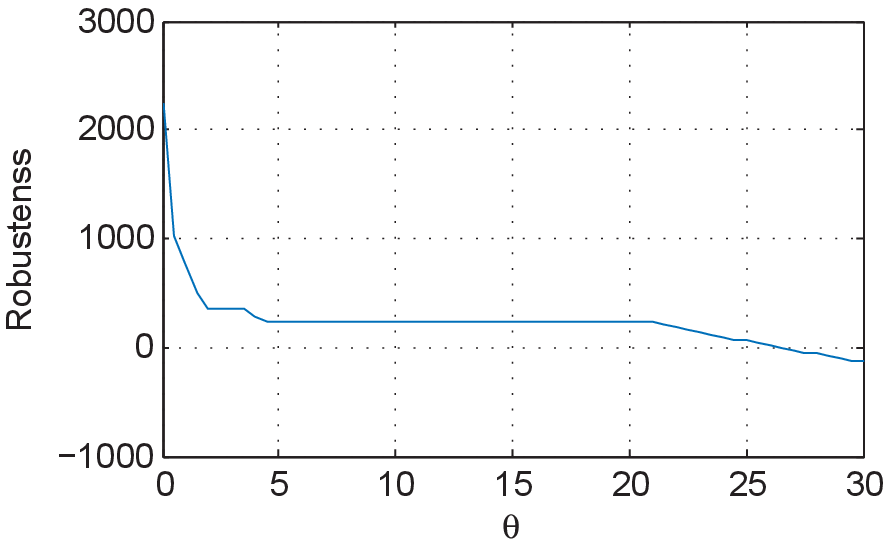} 
\end{tabular}
\vspace{-5pt}
\caption{Example \ref{exmp:rob:always}. Left: Engine speed $\omega(t)$ for constant throttle $u(t) = 50$. Right: The robustness estimate of the specification $\Box_{[0,\theta]} (\omega \leq 3250)$ with respect to $\theta$.}
\label{fig:rob:exmp:always}
\vspace{-10pt}
\end{figure*}

In the following, we present monotonicity results for single and multiple parameter PMTL formulas. {\color{revComments} We note that the monotonicity results apply to a subset of PMTL}.
%\vspace{-10pt}
\ifthenelse{\boolean{TECHREP}}{
\subsection{Single parameter PMTL formulas}
}{}

The first example presented shows how monotonicity appears in the timing requirements of PMTL formulas.

\begin{exmp}[AT]
\label{exmp:rob:always}

Consider the PMTL formula $\phi[\theta] = \Box_{[0,\theta]}  p$ where $p \equiv (\omega \leq 3250)$.
Given a timed state sequence $\tss = (\dsig,\sam)$ with $\sam(0) = 0$, for $\theta_1 \leq \theta_2$, we have: 
%\vspace{-3pt}
\begin{gather*}
[0,\theta_1] \subseteq [0,\theta_2] \implies \sam^{-1}([0,\theta_1]) \subseteq \sam^{-1}([0,\theta_2]).
\end{gather*}
%\vspace{-5pt}
Therefore, by Definitions (\ref{def:SignedDistance}) and (\ref{def:mitlrob}) we have
%\vspace{-3pt}
\begin{gather*}
\dle \phi[\theta_1] \dri (\tss) = \inf_{i \in \sam^{-1}([0,\theta_1])} (-\Dist{d}(\dsig(i),\Oc(p))) \\
 \geq \inf_{i \in \sam^{-1}([0,\theta_2])} (-\Dist{d}(\dsig(i),\Oc(p))) = \dle \phi[\theta_2] \dri (\tss).
\end{gather*}
That is, the function $\dle \phi[\theta] \dri(\tss)$ is non-increasing with $\theta$. Intuitively, this relationship holds since by extending the value of $\theta$ in $\phi[\theta]$, it becomes just as or more difficult to satisfy the specification.
% Similarly, for $\phi_2[\lambda] = \Box_{[0,\lambda]} p$, $\theta_1 < \theta_2$ implies $\dle \phi_2[\theta_1] \dri (\tss)$  $\geq$ $\dle \phi_2[\theta_2] \dri (\tss)$.
% That is, the function $\phi_2[\lambda]$ is non-increasing with $\lambda$.
See Fig.~\ref{fig:rob:exmp:always} for an example using an output trajectory from the system in Example~\ref{exmp:autotrans}. $\exmend$ 
\end{exmp}

The aforementioned example is formalized by the following monotonicity results.
{ \color{revComments}
\begin{lem}[Extended from \cite{YangHF12ictss}]
Consider a PMTL formula $\phi[\theta]$ such that it contains one or more subformulas $\phi_1 Op_{\Ic[\theta]} \phi_2$ where $Op \in \{ \Uc, \Rc\}$. 
Then, given a timed state sequence $\tss = (\dsig,\sam)$, for $\theta_1$, $\theta_2$ $\in \CoRe_{\ge 0}$, such that $\theta_1 \leq \theta_2$, and for $i \in N$, we have:
\begin{enumerate}
\item if for all such subformulas, we have (i) $Op = \Uc$ and $\sup \Ic(\theta) = \theta$ or (ii) $Op = \Rc$ and $\inf \Ic(\theta) = \theta$, then $\dle \phi[\theta_1] \dri (\tss,i) \leq \dle \phi[\theta_2] \dri (\tss,i)$, i.e., the function $\dle \phi[\theta] \dri (\tss,i)$ is non-decreasing with respect to $\theta$.
\item if for all such subformulas, we have (i) $Op = \Rc$ and $\sup \Ic(\theta) = \theta$ or (ii) $Op = \Uc$ and $\inf \Ic(\theta) = \theta$, then $\dle \phi[\theta_1] \dri (\tss,i) \geq \dle \phi[\theta_2] \dri (\tss,i)$, i.e., the function $\dle \phi[\theta] \dri (\tss,i)$ is non-increasing with respect to $\theta$.
\end{enumerate}
\label{lem:timeOperMonDec}
\end{lem}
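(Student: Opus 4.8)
The plan is to prove the statement by structural induction on the PMTL formula $\phi[\theta]$, with the preimage‑monotonicity observation of Example~\ref{exmp:rob:always} serving as the engine at the temporal operators. Since the single scalar $\theta$ may occur in the timing interval of several temporal operators scattered through the parse tree, I would first isolate the one geometric fact that drives everything. If $\theta_1 \le \theta_2$ and $\sup \Ic(\theta) = \theta$ with the lower endpoint held fixed, then $\Ic(\theta_1) \subseteq \Ic(\theta_2)$, hence $\sam^{-1}(\sam(i)+\Ic(\theta_1)) \subseteq \sam^{-1}(\sam(i)+\Ic(\theta_2))$; whereas if $\inf \Ic(\theta) = \theta$, the interval and its preimage shrink, i.e. $\sam^{-1}(\sam(i)+\Ic(\theta_2)) \subseteq \sam^{-1}(\sam(i)+\Ic(\theta_1))$. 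Because a $\Un$ robustness is a supremum over this preimage and an $\Rc$ robustness is an infimum over it, enlarging the preimage can only raise a $\Un$ value and lower an $\Rc$ value, and symmetrically for shrinking.

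For the induction itself, the base cases $\top$, $\bot$, $r$, $p$ and $\neg p$ carry no dependence on $\theta$, so their robustness is constant and therefore simultaneously non‑decreasing and non‑increasing. In the inductive step I would first treat the connectives in which $\theta$ does not enter the operator: for $\phi_1 \vee \phi_2$, $\phi_1 \wedge \phi_2$, and for $\Un_J$, $\Rc_J$ with a $\theta$‑free interval $J$, the combining operations are $\max$, $\min$ and suprema/infima taken over index sets that do not move with $\theta$. All of these are non‑decreasing in their arguments, so whatever common direction the inductive hypothesis supplies for the subformulas is inherited by the compound formula. This is exactly where the hypothesis ``for all such subformulas'' of the same type is used: it guarantees the subformula robustnesses all vary in one direction, and monotone combinators never flip that direction.

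The crux is the case where $\theta$ sits in the interval of the top‑level operator, say $\phi = \phi_1 \Un_{\Ic[\theta]} \phi_2$ under the Case~1 hypothesis $\sup \Ic(\theta) = \theta$. Here two effects must be shown to reinforce rather than cancel: the preimage grows with $\theta$ (geometric fact above), and by the inductive hypothesis $\dle \phi_1 \dri(\tss,k)$ and $\dle \phi_2 \dri(\tss,j)$ are themselves non‑decreasing in $\theta$. Writing $J(\theta) = \sam^{-1}(\sam(i)+\Ic(\theta))$ and abbreviating the inner term $m(\theta,j) = \min\bigl(\dle \phi_2 \dri(\tss,j),\, \inf_{i \le k < j}\dle \phi_1 \dri(\tss,k)\bigr)$, I would chain $\sup_{j \in J(\theta_1)} m(\theta_1,j) \le \sup_{j \in J(\theta_1)} m(\theta_2,j) \le \sup_{j \in J(\theta_2)} m(\theta_2,j)$, the first inequality from the pointwise inductive bound $m(\theta_1,j)\le m(\theta_2,j)$ and the second from $J(\theta_1)\subseteq J(\theta_2)$. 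The $\Rc$ subcase of Case~1 uses $\inf \Ic(\theta)=\theta$, so the preimage shrinks while the infimum‑combinator raises the value, again reinforcing non‑decrease; Case~2 is the mirror image throughout.

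The step I expect to be the genuine obstacle is precisely this reinforcement argument: one must be certain that the direction contributed by the moving index set agrees with the direction contributed by the inductive hypothesis on the subformulas, for each of the four allowed operator/endpoint configurations. This alignment is not automatic—mixing a $\Un$‑type and an $\Rc$‑type parametric subformula under a $\min$ or $\max$ would break it—which is why the blanket uniformity ``for all such subformulas'' in the statement is essential rather than cosmetic. A minor bookkeeping point to handle carefully is the extended‑real convention $\sup\emptyset=-\infty$, $\inf\emptyset=+\infty$: when a shrinking preimage becomes empty the corresponding bound jumps to $-\infty$ (for $\Un$) or $+\infty$ (for $\Rc$), and one should verify these limiting values still respect the claimed inequalities, which they do since they are the extreme elements of $\CoRe$.
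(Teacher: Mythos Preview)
Your proposal is correct and follows essentially the same structural-induction strategy as the paper's proof sketch: base cases are $\theta$-free, the Boolean and non-parametric temporal connectives inherit the direction because $\max$/$\min$ and $\sup$/$\inf$ are monotone combinators, and the parametric temporal case is handled via the interval-inclusion $\Ic(\theta_1)\subseteq\Ic(\theta_2)$. The only presentational difference is that at the $\Un_{\Ic[\theta]}$ step the paper splits the interval as $\langle\alpha,\theta_2\rangle = \langle\alpha,\theta_1\rangle \cup \overline{\langle}\theta_1,\theta_2\rangle$ and writes the $\theta_2$-robustness as a $\max$ of two pieces, whereas you use the equivalent two-step chain $\sup_{J(\theta_1)} m(\theta_1,\cdot)\le\sup_{J(\theta_1)} m(\theta_2,\cdot)\le\sup_{J(\theta_2)} m(\theta_2,\cdot)$; your version has the advantage of making explicit the inductive contribution of the subformulas when $\theta$ also occurs inside $\phi_1,\phi_2$, which the paper's sketch leaves implicit.
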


% \begin{lem}
% Consider a PMTL formula $\phi[\theta]$ such that it contains one or more subformulas $\phi_1 Op_{\Ic[\theta]} \phi_2$ where $Op \in \{ \Uc, \Rc\}$. 
% Then, given a timed state sequence $\tss = (\dsig,\sam)$, for $\theta_1$, $\theta_2$ $\in \CoRe_{\ge 0}$, such that $\theta_1 \leq \theta_2$, and for $i \in N$, we have:
% \begin{itemize}
% \item if for all such subformulas, we have (i) $Op = \Rc$ and $\sup \Ic(\theta) = \theta$ or (ii) $Op = \Uc$ and $\inf \Ic(\theta) = \theta$, then $\dle \phi[\theta_1] \dri (\tss,i) \geq \dle \phi[\theta_2] \dri (\tss,i)$, i.e., the function $\dle \phi[\theta] \dri (\tss,i)$ is non-increasing with respect to $\theta$.
% \end{itemize}
% \label{lem:timeOperMonInc}
% \end{lem} 

}

\ifthenelse{\boolean{TECHREP}}{

\begin{proof} [sketch]
Without loss of generality, we will prove only case (i) of Lemma \ref{lem:timeOperMonDec}.1. Case (ii) is symmetric with respect to the temporal operator and Lemma \ref{lem:timeOperMonDec}.2 is symmetric in terms of monotonicity. The proof is by induction on the structure of the formula and it is similar to the proofs that appeared in \cite{FainekosP09tcs}.

% For completeness of the presentation, we consider the case $\dle \phi_1 \Un_{\langle \alpha, \lambda \rangle} \phi_2 \dri (\tss,i)$, where $\langle \in \{ [, ( \}$ and $\rangle \in \{ ], ) \}$.
% The other cases are either similar or they are based on the monotonicity of the operators $\max$ and $\min$.
% Let $\theta_1 \leq \theta_2$, then:
% \begin{center}
% $\dle \phi_1 \Un_{\langle \alpha, \theta_1 \rangle} \phi_2 \dri (\tss,i) \leq$ \\
% $\max \left(\dle \phi_1 \Un_{\langle \alpha, \theta_1 \rangle} \phi_2 \dri (\tss,i), \dle \phi_1 \Un_{\overline{\langle} \theta_1, \theta_2 \rangle} \phi_2 \dri (\tss,i) \right )$ \\
% $= \dle \phi_1 \Un_{\langle \alpha, \theta_2 \rangle} \phi_2 \dri (\tss,i)$
% \end{center}
% where $\overline{\langle} \in \{ [, ( \}$ such that ${\langle \alpha, \theta_1 \rangle} \cap \overline{\langle} \theta_1, \theta_2 \rangle = \emptyset$ and ${\langle \alpha, \theta_1 \rangle} \cup \overline{\langle} \theta_1, \theta_2 \rangle = \langle \alpha, \theta_2 \rangle$. 
% % $\dle \phi_1 \Un_{\langle \alpha, \theta_1 \rangle} \phi_2 \dri (\tss,i)$ $\leq$ 
% % $\max \left(\dle \phi_1 \Un_{\langle \alpha, \theta_1 \rangle} \phi_2 \dri (\tss,i), \dle \phi_1 \Un_{\overline{\langle} \theta_1, \theta_2 \rangle} \phi_2 \dri (\tss,i) \right )$\qed
% \end{proof}

For completeness of the presentation, we consider the case $\dle \phi_1 \Un_{\langle \alpha, \theta \rangle} \phi_2 \dri (\tss,i)$, where $\langle \in \{ [, ( \}$ and $\rangle \in \{ ], ) \}$.
The other cases are either similar or they are based on the monotonicity of the $\max$ and $\min$ operators.
We remark that the $\max$ and $\min$ operators preserve monotonicity.
Let $\theta_1 \leq \theta_2$, then we want to show that:
\begin{gather}
\dle \phi_1 \Un_{\langle \alpha, \theta_1 \rangle} \phi_2 \dri (\tss,i) \leq 
\dle \phi_1 \Un_{\langle \alpha, \theta_2 \rangle} \phi_2 \dri (\tss,i) \label{lemma1proof}
\end{gather}

To show that (\ref{lemma1proof}) holds, we utilize the robust semantics for MTL given in Definition \ref{def:mitlrob} and observe that: 
\begin{gather*}
\dle \phi_1 \Un_{\langle \alpha, \theta_2 \rangle} \phi_2 \dri (\tss,i) =  \\ \sup_{j \in \sam^{-1}(\sam(i)+\langle \alpha, \theta_2 \rangle)} \bigl( \min ( \dle \phi_2 \dri (\tss,j),  \inf_{i\leq k <j} \dle \phi_1 \dri (\tss,k)) \bigr) = \\ \max \left(\sup_{j \in \sam^{-1}(\sam(i)+\langle \alpha, \theta_1 \rangle)} \bigl( \min ( \dle \phi_2 \dri (\tss,j),  \inf_{i\leq k <j} \dle \phi_1 \dri (\tss,k)) \bigr), \right.
\\
\left. \vphantom{\int_1^2}\sup_{j \in \sam^{-1}(\sam(i)+\overline{\langle} \theta_1, \theta_2 \rangle)} \bigl( \min ( \dle \phi_2 \dri (\tss,j),  \inf_{i\leq k <j} \dle \phi_1 \dri (\tss,k)) \bigr) \vphantom{\int_1^2} \right) = \\
\max \left(\dle \phi_1 \Un_{\langle \alpha, \theta_1 \rangle} \phi_2 \dri (\tss,i), \dle \phi_1 \Un_{\overline{\langle} \theta_1, \theta_2 \rangle} \phi_2 \dri (\tss,i) \right ) \ge \\
\dle \phi_1 \Un_{\langle \alpha, \theta_1 \rangle} \phi_2 \dri (\tss,i) 
\end{gather*}

where $\overline{\langle} \in \{ [, ( \}$ such that ${\langle \alpha, \theta_1 \rangle} \cap \overline{\langle} \theta_1, \theta_2 \rangle = \emptyset$ and ${\langle \alpha, \theta_1 \rangle} \cup \overline{\langle} \theta_1, \theta_2 \rangle = \langle \alpha, \theta_2 \rangle$. $\hfill {\qed}$

% $\dle \phi_1 \Un_{\langle \alpha, \theta_1 \rangle} \phi_2 \dri (\tss,i)$ $\leq$ 
% $\max \left(\dle \phi_1 \Un_{\langle \alpha, \theta_1 \rangle} \phi_2 \dri (\tss,i), \dle \phi_1 \Un_{\overline{\langle} \theta_1, \theta_2 \rangle} \phi_2 \dri (\tss,i) \right )$\qed
\end{proof}
}
{
The proof for Lemma \ref{lem:timeOperMonDec} is presented in the extended version of the paper in \cite{Hoxha2016MiningExtendedVer}.
}
%\todo[inline]{We need to mention why we need NNF: essentially the negation %operator in front of an atomic proposition changes the monotonicity of the formula}
\color{revComments} Note that Lemma \ref{lem:timeOperMonDec} allows for the repetition of a parameter in a PMTL formula. For example, consider the specification $\phi = \Box_{[\theta,5]}a \wedge \Diamond_{[0,\theta]}b \equiv \bot \mathcal{R}_{[\theta,5]}a \wedge \top \mathcal{U}_{[0,\theta]}b$. In this case, $\phi$ satisfies the conditions in Lemma \ref{lem:timeOperMonDec}. Thus, from Lemma \ref{lem:timeOperMonDec} we know that for two values $\theta_1$ and $\theta_2$ where $\theta_1 \leq \theta_2$: 
\begin{gather*}
\dle \Box_{[\theta_1,5]}a \wedge \Diamond_{[0,\theta_1]}b \dri (\tss,i) \leq \dle \Box_{[\theta_2,5]}a \wedge \Diamond_{[0,\theta_2]}b \dri (\tss,i)
\end{gather*}

\ifthenelse{\boolean{TECHREP}}{}{}
\begin{figure*}
\centering
%\vspace{-5pt}
\begin{tabular}{cc}
\includegraphics[width=6.2cm]{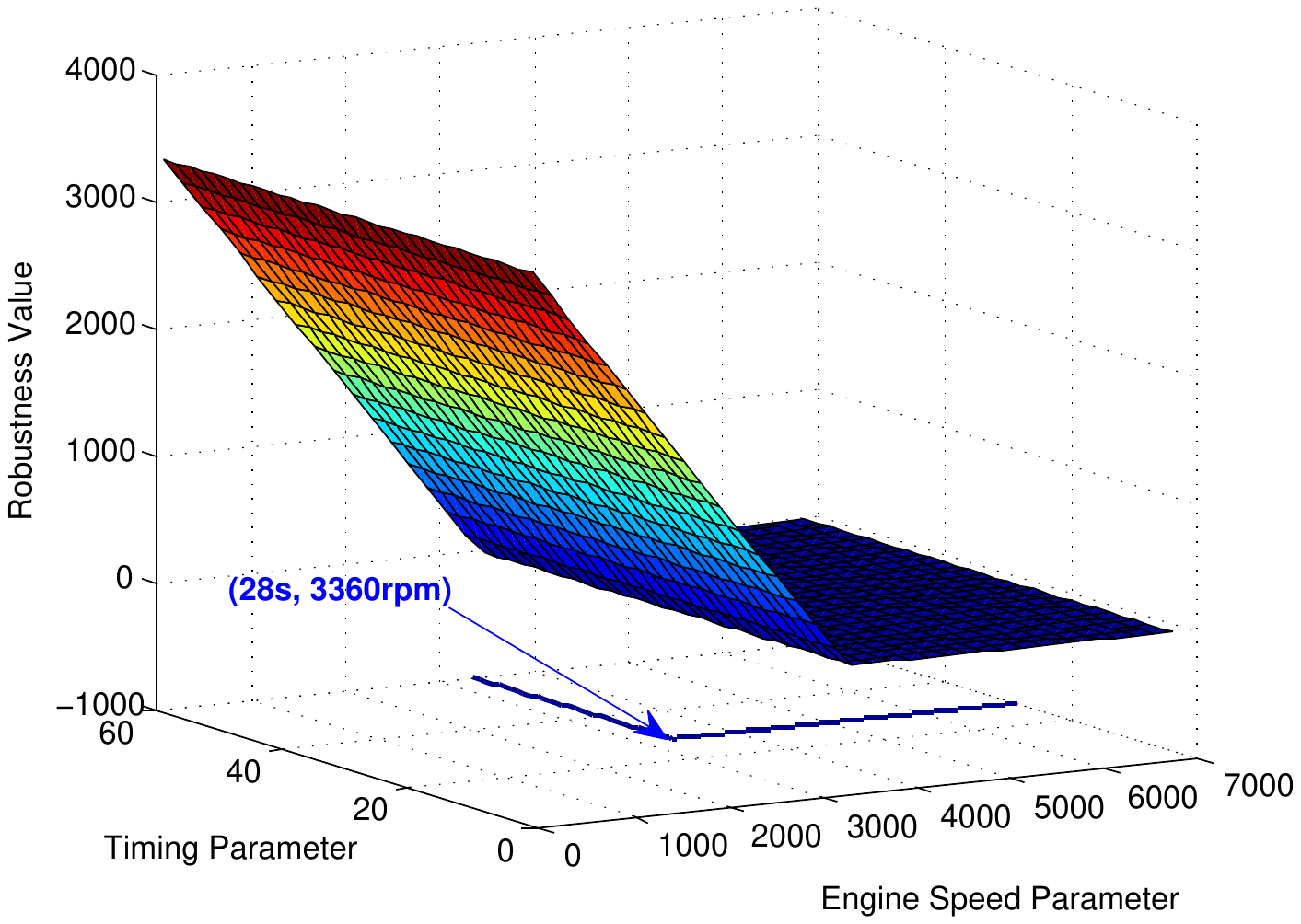}  &
\includegraphics[width=6.2cm]{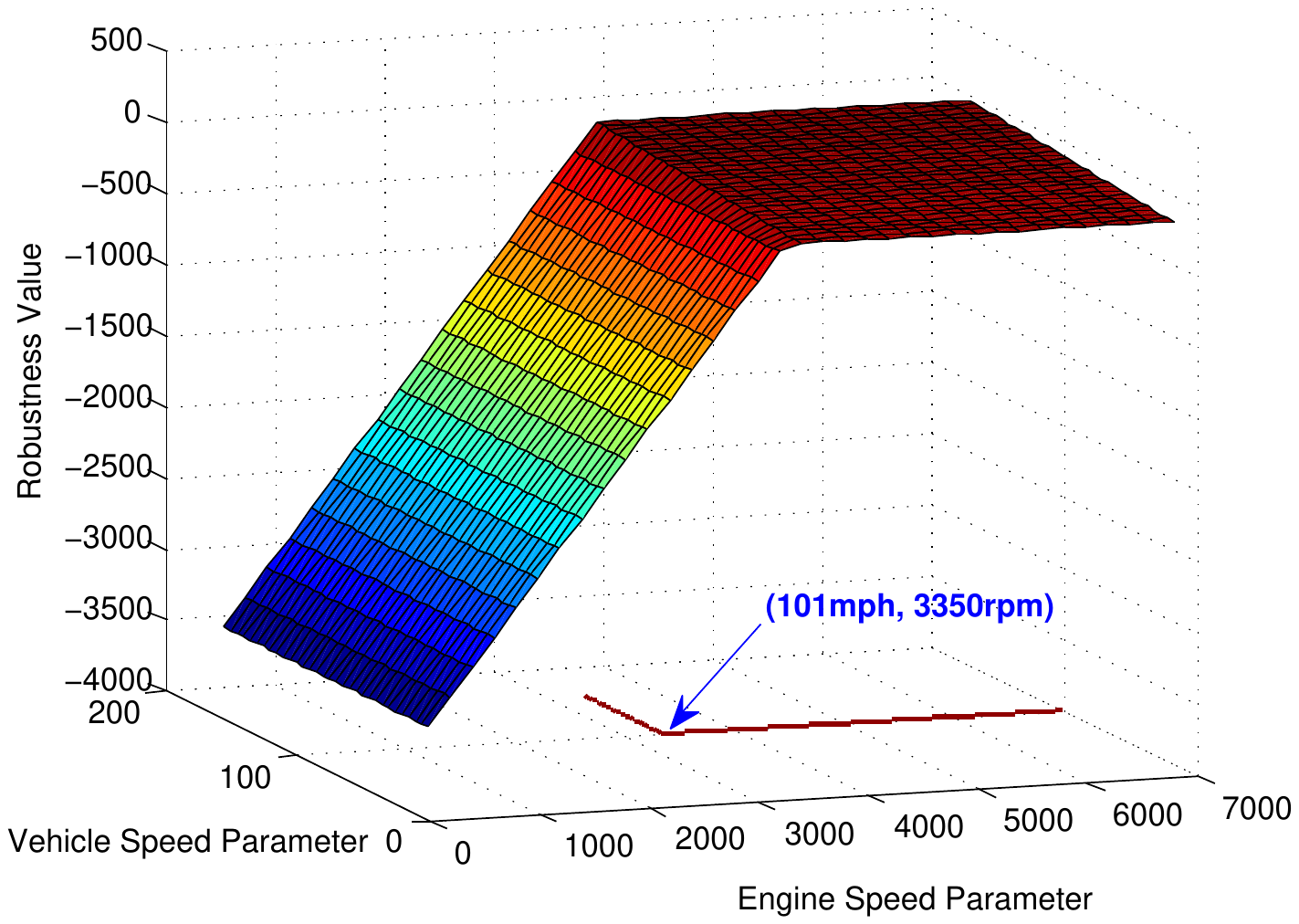}
\end{tabular}
\caption{ \textbf{Left:} Example \ref{exmp:rob:multParam1}: Robustness estimate landscape for varying parameters for engine and vehicle speed for constant throttle $u(t) = 50$. \textbf{Right}: Example \ref{exmp:rob:multParam2}: Robustness landscape for varying parameters for timing parameter and engine speed for constant throttle $u(t) = 50$. In both figures, the contour line shows the intersection of the robustness landscape with the zero level set.}
\label{fig:rob:exmp:robLandscape}
\vspace{-12pt}
\end{figure*}

In the following, we derive similar results for the case where the parameter appears in the numerical expression of the atomic proposition.

\begin{lem}[Extended from \cite{YangHF12ictss}]
Consider a PMTL formula $\phi[\theta]$ with a single parameter $\theta$ such that it contains parametric atomic propositions $p_1[\theta]...p_n[\theta]$ in one or more subformulas.
Then, given a timed state sequence $\tss = (\dsig,\sam)$, for $\theta_1$, $\theta_2$ $\in \CoRe_{\ge 0}$, such that $\theta_1 \leq \theta_2$, and for $i \in N$, we have:
\begin{itemize}
\item if $\forall j.p_j[\theta] \equiv g_j(x) \leq \theta$, then $\dle \phi[\theta_1] \dri (\tss,i) \leq \dle \phi[\theta_2] \dri (\tss,i)$, i.e., the function $\dle \phi[\theta] \dri (\tss,i)$ is non-decreasing with respect to $\theta$, and  
\item if $\forall j.p_j[\theta] \equiv g_j(x) \geq \theta$, then $\dle \phi[\theta_1] \dri (\tss,i) \geq \dle \phi[\theta_2] \dri (\tss,i)$, i.e., the function $\dle \phi[\theta] \dri (\tss,i)$ is non-increasing with respect to $\theta$.
\end{itemize}
\label{lem:AtmPropMon}
\end{lem}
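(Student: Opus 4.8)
The plan is to proceed by structural induction on the formula $\phi[\theta]$, in the same spirit as the proof of Lemma \ref{lem:timeOperMonDec}, reducing the statement to two facts: (a) the robustness of each parametric atomic term is monotone in $\theta$, and (b) every connective appearing in the robust semantics of Definition \ref{def:mitlrob} preserves this monotonicity. Fix $\theta_1 \le \theta_2$ and an index $i \in N$. For the inductive step I would observe that $\max$, $\min$, as well as the quantifiers $\sup_{j}$ and $\inf_{j}$ ranging over the index sets $\sam^{-1}(\sam(i)+\Ic)$, are all non-decreasing in each of their arguments; hence if every immediate subformula has robustness that is non-decreasing (resp. non-increasing) in $\theta$, then so does the compound formula. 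This disposes of $\vee$, $\wedge$, $\Uc_\Ic$ and $\Rc_\Ic$ uniformly and is the routine part.

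The crux is the base case. Here I would first record the set-level monotonicity induced by the parameter. For a parametric proposition $p_j[\theta] \equiv g_j(x) \le \theta$, the observation map gives $\Oc(p_j[\theta]) = \{x \in \genOut \mid g_j(x) \le \theta\}$, so $\theta_1 \le \theta_2$ immediately yields the inclusion $\Oc(p_j[\theta_1]) \subseteq \Oc(p_j[\theta_2])$; the second bullet ($g_j(x) \ge \theta$) reverses the inclusion. The whole lemma then rests on the following monotonicity property of the signed distance of Definition \ref{def:SignedDistance}: for any $x \in \genOut$ and sets $S_1 \subseteq S_2$, one has $\mathbf{Dist}_d(x,S_1) \le \mathbf{Dist}_d(x,S_2)$. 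I would prove this by splitting on the location of $x$: if $x \notin S_2$ then $x \notin S_1$ and $\mathbf{dist}_d(x,S_2) \le \mathbf{dist}_d(x,S_1)$ (an infimum over a larger set), so $-\mathbf{dist}_d(x,S_1) \le -\mathbf{dist}_d(x,S_2)$; if $x \in S_1$ then $x \in S_2$ and $\mathbf{depth}_d(x,S_1) = \mathbf{dist}_d(x,\genOut\setminus S_1) \le \mathbf{dist}_d(x,\genOut\setminus S_2) = \mathbf{depth}_d(x,S_2)$ because $\genOut\setminus S_2 \subseteq \genOut\setminus S_1$; and if $x \in S_2 \setminus S_1$ then $\mathbf{Dist}_d(x,S_1) \le 0 \le \mathbf{Dist}_d(x,S_2)$. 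Combining the inclusion with this property gives $\dle p_j[\theta_1] \dri (\tss,i) = \mathbf{Dist}_d(\dsig(i),\Oc(p_j[\theta_1])) \le \mathbf{Dist}_d(\dsig(i),\Oc(p_j[\theta_2])) = \dle p_j[\theta_2] \dri (\tss,i)$ for the $\le$ form, and the reverse inequality for the $\ge$ form.

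The main obstacle I anticipate is the interaction with the Negation Normal Form. Since negations sit only at atomic terms, a parametric atom may appear as $\neg p_j[\theta]$, whose robustness is $-\mathbf{Dist}_d(\dsig(i),\Oc(p_j[\theta]))$ and therefore flips the direction of monotonicity established above. To keep the induction consistent I would note that a negated $\le$-atom behaves exactly like a positive $\ge$-atom and vice versa, so the hypotheses of the two bullets must be read as forcing every parametric leaf to contribute in the same direction; under that reading the single parameter $\theta$ induces one global monotonicity direction and the inductive step carries it through unchanged. A secondary point worth stating carefully is that the index sets $\sam^{-1}(\sam(i)+\Ic)$ over which the temporal quantifiers range do not depend on $\theta$ in this lemma, since the parameter is confined to the atomic predicates; this is precisely what lets the $\sup$/$\inf$ steps preserve monotonicity verbatim, and it is what distinguishes Lemma \ref{lem:AtmPropMon} from the timing-parameter case of Lemma \ref{lem:timeOperMonDec}.
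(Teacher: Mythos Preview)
Your proposal is correct and follows essentially the same approach as the paper: structural induction on $\phi[\theta]$, with the base case reducing to the monotonicity of the signed distance under set inclusion $\Oc(p_j[\theta_1]) \subseteq \Oc(p_j[\theta_2])$, and the inductive step carried by the order-preservation of $\min$, $\max$, $\sup$, $\inf$. In fact you are more thorough than the paper's sketch, which only writes out the single subcase $\dsig(i) \notin \Oc(p_j[\theta_2])$ and does not explicitly discuss the negated-atom issue you flag; your three-way split on the location of $x$ and your remark that the temporal index sets are $\theta$-independent here are both welcome additions.
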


\ifthenelse{\boolean{TECHREP}}{

\begin{proof} [sketch]
The proof is by induction on the structure of the formula  and it is similar to the proofs that appeared in \cite{FainekosP09tcs}.
For completeness of the presentation, we consider the base case $\dle p[\theta] \dri (\tss,i)$.
Let $\theta_1 \leq \theta_2$, then $\Oc(p[\theta_1]) \subseteq  \Oc(p[\theta_2])$.
We will only present the case for which $\dsig(i) \not \in \Oc(p[\theta_2])$.
We have:
\begin{gather*}
\Oc(p_j[\theta_1]) \subseteq  \Oc(p_j[\theta_2]) \implies \\ \dist{d}(\dsig(i),\Oc(p_j[\theta_1])) \geq \dist{d}(\dsig(i),\Oc(p_j[\theta_2])) \implies \\
\Dist{d}(\dsig(i),\Oc(p_j[\theta_1])) \leq \Dist{d}(\dsig(i),\Oc(p_j[\theta_2]))  \implies \\ \dle p_j[\theta_1] \dri (\tss,i) \leq \dle p_j[\theta_2] \dri(\tss,i)  \tag*{\qed }
\end{gather*}
\end{proof}
}
{
The proof for Lemma \ref{lem:AtmPropMon} is presented in the extended version of the paper in \cite{Hoxha2016MiningExtendedVer}.	
}
%The results presented in this section can be easily extended to multiple parameters.
%However, in this work, we will focus on a single parameter in order to derive a more tractable optimization problem.

\ifthenelse{\boolean{TECHREP}}{
\vspace{-10pt}
\subsection {Multiple parameter PMTL formulas}
}{}

Next, we extend the result for multiple parameters.

\begin{exmp}[AT]
\label{exmp:rob:multParam1}
% For example, consider the formula $\phi_1[\lambda] = \Diamond_{[0,\lambda]} p$.
% Consider the formula $\phi_1[\lambda] = \Diamond_{[0,\lambda]} p$.
% Given a timed state sequence $\tss = (\dsig,\sam)$ with $\sam(0) = 0$, for $\theta_1 \leq \theta_2$, we have:
% $ [0,\theta_1] \subseteq [0,\theta_2] \implies \sam^{-1}([0,\theta_1]) \subseteq \sam^{-1}([0,\theta_2])$.
% Thus, 
% $\dle \phi_1[\theta_1] \dri (\tss)$ $=$ $\sup_{i \in \sam^{-1}([0,\theta_1])} \mathbf{depth}_d(\dsig(i),\Oc(p))$ $\leq$ $\sup_{i \in \sam^{-1}([0,\theta_2])} \mathbf{depth}_d(\dsig(i),\Oc(p))$ $=$ $\dle \phi_1[\theta_2] \dri (\tss)$.
% \begin{gather*}
% \dle \phi_1[\theta_1] \dri (\tss) = \sup_{i \in \sam^{-1}([0,\theta_1])} \mathbf{depth}_d(\dsig(i),\Oc(p)) \leq \\
% \sup_{i \in \sam^{-1}([0,\theta_2])} \mathbf{depth}_d(\dsig(i),\Oc(p)) = \dle \phi_1[\theta_2] \dri (\tss).
% \end{gather*}
% That is, the function $\phi_1[\lambda]$ is nondecreasing with $\lambda$.
Consider the PMTL formula $\phi[\vec{\theta}]=\neg  (\Diamond_{[0,\theta_1]} $ $ q \wedge \Box p[\theta_2])$  where $\vec{\theta} = [\theta_1,\theta_2]^\intercal$, $p[\theta_2] \equiv (\omega \leq \theta_2)$ and $q \equiv (v \geq 100)$.
Given a timed state sequence $\tss = (\dsig,\sam)$ with $\sam(0) = 0$, for two vectors of parameters $\vec{\theta},\vec{\theta}' \in \Re^2$ where $\vec{\theta} \preceq \vec{\theta'}$, we have: for all $i$,
\begin{gather}
\theta_2 \leq \theta'_2 \implies \Oc(p[\theta_2]) \subseteq \Oc(p[\theta'_2]) \implies \nonumber\\ 
\Dist{d}(\dsig(i),\Oc(p[\theta_2])) \leq \Dist{d}(\dsig(i),\Oc(p[\theta'_2])) \implies \nonumber\\  -\Dist{d}(\dsig(i),\Oc(p[\theta_2])) \geq -\Dist{d}(\dsig(i),\Oc(p[\theta'_2])) 
\label{eq:1}
\end{gather}
\begin{gather}
\theta_1 \leq \theta'_1 \implies [0,\theta_1] \subseteq [0,\theta'_1] \implies \nonumber\\ \sam^{-1}([0,\theta_1]) \subseteq \sam^{-1}([0,\theta'_1]) 
\label{eq:2}
\end{gather}
Therefore, by (\ref{eq:1}) and (\ref{eq:2}) we obtain:
\begin{gather}
 \dle \phi[\vec{\theta}] \dri (\tss) = \inf_{i \in \sam^{-1}([0,\theta_1])}  (-\Dist{d}(\dsig(i),\Oc(p[\theta_2]))) \nonumber \\ 
 \overset{(2)}{\geq} \inf_{i \in \sam^{-1}([0,\theta_1])}  (-\Dist{d}(\dsig(i),\Oc(p[\theta'_2]))) \nonumber \\ 
 \overset{(3)}{\geq} \inf_{i \in \sam^{-1}([0,\theta'_1])}  (-\Dist{d}(\dsig(i),\Oc(p[\theta'_2]))) = \dle \phi[\vec{\theta'}] \dri (\tss) \nonumber
\end{gather}
That is, the function $\dle \phi[\vec{\theta}] \dri(\tss)$ is non-increasing for all $\vec{\theta}$ for which the relation $\preceq$ holds.
% Similarly, for $\phi_2[\lambda] = \Box_{[0,\lambda]} p$, $\theta_1 < \theta_2$ implies $\dle \phi_2[\theta_1] \dri (\tss)$  $\geq$ $\dle \phi_2[\theta_2] \dri (\tss)$.
% That is, the function $\phi_2[\lambda]$ is non-increasing with $\lambda$.
 $\hfill \blacktriangle$
\end{exmp}

\begin{exmp}[AT]
\label{exmp:rob:multParam2}
% For example, consider the formula $\phi_1[\lambda] = \Diamond_{[0,\lambda]} p$.
% Consider the formula $\phi_1[\lambda] = \Diamond_{[0,\lambda]} p$.
% Given a timed state sequence $\tss = (\dsig,\sam)$ with $\sam(0) = 0$, for $\theta_1 \leq \theta_2$, we have:
% $ [0,\theta_1] \subseteq [0,\theta_2] \implies \sam^{-1}([0,\theta_1]) \subseteq \sam^{-1}([0,\theta_2])$.
% Thus, 
% $\dle \phi_1[\theta_1] \dri (\tss)$ $=$ $\sup_{i \in \sam^{-1}([0,\theta_1])} \mathbf{depth}_d(\dsig(i),\Oc(p))$ $\leq$ $\sup_{i \in \sam^{-1}([0,\theta_2])} \mathbf{depth}_d(\dsig(i),\Oc(p))$ $=$ $\dle \phi_1[\theta_2] \dri (\tss)$.
% \begin{gather*}
% \dle \phi_1[\theta_1] \dri (\tss) = \sup_{i \in \sam^{-1}([0,\theta_1])} \mathbf{depth}_d(\dsig(i),\Oc(p)) \leq \\
% \sup_{i \in \sam^{-1}([0,\theta_2])} \mathbf{depth}_d(\dsig(i),\Oc(p)) = \dle \phi_1[\theta_2] \dri (\tss).
% \end{gather*}
% That is, the function $\phi_1[\lambda]$ is nondecreasing with $\lambda$.
Consider the PMTL formula $\phi[\vec{\theta}]$ $=$ $ \square (p[\theta_1]$ $ \wedge q[\theta_2])$  where $p[\theta_1] \equiv (v \leq \theta_1)$ and $q[\theta_2] \equiv (\omega \leq \theta_2)$.
Given a timed state sequence $\tss = (\dsig,\sam)$ with $\sam(0) = 0$, for two vectors of parameters $\vec{\theta},\vec{\theta}'$ where $\vec{\theta} \preceq \vec{\theta}'$, we have: 
\vspace{-12pt}
\begin{gather*}
\Oc(p[\theta_1]) \subseteq  \Oc(p[\theta'_1]) \implies \\
\Dist{d}(\Oc(p[\theta_1])) \leq  \Dist{d}(\Oc(p[\theta'_1])) \implies \\
\dle p[\theta_1] \dri (\tss,i) \leq \dle p[\theta'_1]  \dri(\tss,i)  
 \\ \text{ and }  \\ \Oc(q[\theta_2]) \subseteq  \Oc(q[\theta'_2]) \implies \\ 
\Dist{d}(\Oc(p[\theta_2])) \leq \Dist{d}(\Oc(p[\theta'_2])) \implies \\
\dle q[\theta_2] \dri (\tss,i) \leq \dle q[\theta'_2] \dri(\tss,i)
\end{gather*}
%\vskip-10pt
Therefore, $\dle \phi[\vec{\theta}] \dri (\tss) \le \dle \phi[\vec{\theta}'] \dri (\tss)$. That is, the function $\dle \phi[\vec{\theta}] \dri(\tss)$ is non-decreasing for all $\vec{\theta}$ for which the relation $\preceq$ holds.
% Similarly, for $\phi_2[\lambda] = \Box_{[0,\lambda]} p$, $\theta_1 < \theta_2$ implies $\dle \phi_2[\theta_1] \dri (\tss)$  $\geq$ $\dle \phi_2[\theta_2] \dri (\tss)$.
% That is, the function $\phi_2[\lambda]$ is non-increasing with $\lambda$.
%\ifthenelse{\boolean{TECHREP}}{
Figure~\ref{fig:rob:exmp:robLandscape} presents the robustness landscape of two parameters over constant input.
%}{} 
$\hfill \blacktriangle$
\end{exmp}

Now we may state the main monotonicity theorem for multiple parameters. 
\ifthenelse{\boolean{TECHREP}}{We remark that for convenience we define the parametric subformulas over all the possible parameters even though only some of them are used in each subformula.}

\begin{theorem}
Consider a PMTL formula $\psi[\vec{\theta}]$, where $\vec{\theta}$ is a vector of parameters, such that $\psi[\vec{\theta}]$ contains temporal subformulas $\phi[\vec{\theta}] = \phi_{1}[\vec{\theta}]Op_{\Ic[\theta_s]}\phi_{2}[\vec{\theta}]$, $Op \in \{ \Uc, \Rc\}$, or propositional subformulas $\phi[\vec{\theta}] = p[\vec{\theta}]$. Then, given a timed state sequence $\tss = (\dsig,\sam)$, for $\vec{\theta}$, $\vec{\theta'}$ $\in \CoRe^n_{\ge 0}$, such that $\vec{\theta} \preceq \vec{\theta'}$, where $1 \leq j \leq n$, and for $i \in N$, we have:
\begin{itemize}
\item if for all such subformulas (i) $Op = \Uc$ and $\sup \Ic(\theta_s) = \theta_s$ or (ii) $Op = \Rc$ and $\inf \Ic(\theta_s) = \theta_s$ or (iii)  $p[\vec{\theta}] \equiv g(x) \leq \vec{\theta}$, then $\dle \phi[\vec{\theta}] \dri (\tss,i) \leq \dle \phi[\vec{\theta'}] \dri (\tss,i)$, i.e.,  function $\dle \phi[\vec{\theta}] \dri (\tss,i)$ is non-decreasing with respect to $\vec{\theta}$,
\item if for all such subformulas (i) $Op = \Rc$ and $\sup \Ic(\theta_s) = \theta_s$ or (ii) $Op = \Uc$ and $\inf \Ic(\theta_s) = \theta_s$ or (iii) $p[\vec{\theta}] \equiv g(x) \geq \vec{\theta}$, then $\dle \phi[\vec{\theta}] \dri (\tss,i) \geq \dle \phi[\vec{\theta'}] \dri (\tss,i)$, i.e.,  function $\dle \phi[\vec{\theta}] \dri (\tss,i)$ is non-increasing with respect to $\vec{\theta}$.
\end{itemize}
\end{theorem}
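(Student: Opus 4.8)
The plan is to prove the statement by structural induction on $\psi[\vec\theta]$, establishing the non-decreasing case (first bullet) in full; the non-increasing case (second bullet) then follows by the dual argument, exchanging $\Uc \leftrightarrow \Rc$, $\sup \leftrightarrow \inf$, and $\leq \leftrightarrow \geq$, exactly as Lemmas \ref{lem:timeOperMonDec} and \ref{lem:AtmPropMon} were paired. The guiding observation is that every operation appearing in Definition \ref{def:mitlrob} — namely $\max$, $\min$, $\sup_{j}$ and $\inf_{k}$ — is order-preserving in each of its arguments and monotone under enlargement of the index set over which a $\sup$/$\inf$ is taken. The hypotheses of the first bullet are chosen precisely so that every elementary contribution of the formula pushes the robustness in the \emph{same} (non-decreasing) direction, and so the composition of these monotone operations preserves monotonicity with respect to the product order $\preceq$.

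For the base case I would treat a parametric atomic proposition $p[\vec\theta]\equiv g(x)\le\theta_j$ exactly as in the proof of Lemma \ref{lem:AtmPropMon}: from $\vec\theta\preceq\vec{\theta'}$ we get $\theta_j\le\theta'_j$, hence $\Oc(p[\theta_j])\subseteq\Oc(p[\theta'_j])$, and since $\Dist{d}(\dsig(i),\cdot)$ is non-decreasing under set inclusion we obtain $\dle p[\vec\theta]\dri(\tss,i)\le\dle p[\vec{\theta'}]\dri(\tss,i)$; negated atomic terms $\neg p$ carry the opposite sign of the signed distance and thus fall under the matching directed hypothesis, consistently with the two bullets of Lemma \ref{lem:AtmPropMon}. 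Constants $\top,\bot,r$ are independent of $\vec\theta$ and are trivially non-decreasing, and a subformula that does not actually depend on some component $\theta_l$ is constant in that coordinate, so the remark preceding the theorem causes no difficulty. For the boolean connectives $\phi_1\vee\phi_2$ and $\phi_1\wedge\phi_2$, the induction hypothesis makes $\dle\phi_1\dri$ and $\dle\phi_2\dri$ non-decreasing, and monotonicity of $\max$ and $\min$ in each argument closes these cases.

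The interesting inductive step, and the main obstacle, is the temporal case $\phi[\vec\theta]=\phi_1[\vec\theta]\,\Uc_{\Ic[\theta_s]}\,\phi_2[\vec\theta]$ with $\sup\Ic(\theta_s)=\theta_s$, because a single move $\vec\theta\preceq\vec{\theta'}$ perturbs the formula in two coupled ways: it raises the inner robustness values (via the induction hypothesis) \emph{and} it enlarges the quantification interval. Writing $F_{\vec\theta}(j):=\min(\dle\phi_2[\vec\theta]\dri(\tss,j),\,\inf_{i\le k<j}\dle\phi_1[\vec\theta]\dri(\tss,k))$, the induction hypothesis together with monotonicity of $\min$ and $\inf$ gives $F_{\vec\theta}(j)\le F_{\vec{\theta'}}(j)$ for every $j$. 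I would then split the comparison into two monotone steps: first, over the fixed index set $S=\sam^{-1}(\sam(i)+\Ic[\theta_s])$, the pointwise increase of the integrand yields $\sup_{j\in S}F_{\vec\theta}(j)\le\sup_{j\in S}F_{\vec{\theta'}}(j)$; second, since $\sup\Ic(\theta_s)=\theta_s$ and $\theta_s\le\theta'_s$ force $\Ic[\theta_s]\subseteq\Ic[\theta'_s]$ and hence $S\subseteq S':=\sam^{-1}(\sam(i)+\Ic[\theta'_s])$, enlarging the index set yields $\sup_{j\in S}F_{\vec{\theta'}}(j)\le\sup_{j\in S'}F_{\vec{\theta'}}(j)$, which is precisely the interval-splitting identity used in the proof of Lemma \ref{lem:timeOperMonDec}. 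Chaining these gives $\dle\phi[\vec\theta]\dri(\tss,i)\le\dle\phi[\vec{\theta'}]\dri(\tss,i)$. Crucially, this decomposition never needs to decouple a repeated parameter: even when $\theta_s$ occurs simultaneously in $\Ic[\theta_s]$ and inside $\phi_1$ or $\phi_2$, both of its effects are non-decreasing under the theorem's uniform hypotheses, so they reinforce rather than conflict. The release case $\Rc_{\Ic[\theta_s]}$ with $\inf\Ic(\theta_s)=\theta_s$ is symmetric, replacing the outer $\sup$ by $\inf$ while the interval again grows, and the entire non-increasing bullet follows by the dual argument.
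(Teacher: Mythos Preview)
Your proposal is correct and follows essentially the same approach as the paper: structural induction with the base cases handled as in Lemmas~\ref{lem:timeOperMonDec} and~\ref{lem:AtmPropMon}, and the $\Uc_{\Ic[\theta_s]}$ step handled by combining the induction hypothesis on $\phi_1,\phi_2$ with the interval inclusion $\Ic[\theta_s]\subseteq\Ic[\theta'_s]$. Your explicit two-step decomposition (first raise the integrand $F_{\vec\theta}(j)$ over a fixed index set, then enlarge the index set) simply unpacks what the paper's sketch compresses into a single chained inequality, and your remark on repeated parameters and negated literals makes explicit points the paper leaves implicit.
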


\begin{proof} [sketch]
The proof is by induction on the structure of the formula. The base case is given by Lemmas \ref{lem:timeOperMonDec} and \ref{lem:AtmPropMon}.

Consider the first case where $\phi[\vec{\theta}] = \phi_{1}[\vec{\theta}]\Un_{\Ic[\theta_s]}\phi_{2}[\vec{\theta}]$. Let $\vec{\theta}, \vec{\theta'} \in \CoRe_{\ge 0}^n$, where $\vec{\theta} \preceq \vec{\theta'}$. Let $i,j,k \in N$. Then $\Ic[\theta_{s}] \subseteq \Ic[\theta'_{s}]$ and, for all $j$, by the induction hypothesis we have
%\vspace{-10pt}
\begin{gather}
\min ( \dle \phi_2[\vec{\theta}] \dri (\tss,j) ) \le \min ( \dle \phi_2[\vec{\theta'}] \dri (\tss,j) ) \label{eq:thrm1} 
\end{gather}
%\vskip-10pt
For all k, by the induction hypothesis we have:
%\vspace{-10pt}
\begin{gather}
\inf\limits_{i\leq k <j} (\dle \phi_1[\vec{\theta}] \dri (\tss,k)) \bigr) \le \inf\limits_{i\leq k <j} ( \dle \phi_1[\vec{\theta'}] \dri (\tss,k)) \bigr) \label{eq:thrm2}
\end{gather}
Then by (\ref{eq:thrm1}) and (\ref{eq:thrm2}) we have
\begin{gather}
\dle \phi[\vec{\theta}]\dri(\mu,i) =\dle\phi_{1}[\vec{\theta}]\Un_{\Ic[\theta_{s}]}\phi_{2}[\vec{\theta}] \dri(\mu,i)=  \nonumber \\
\sup\limits_{j \in \sam^{-1}(\sam(i)+\Ic[\theta_{s}])} \bigl( \min ( \dle \phi_2[\vec{\theta}] \dri (\tss,j),  \inf\limits_{i\leq k <j} \dle \phi_1[\vec{\theta}] \dri (\tss,k)) \bigr) \nonumber \\
\leq \nonumber \\ \sup\limits_{j \in \sam^{-1}(\sam(i)+\Ic[\theta'_{s}])} \Bigl( \min ( \dle \phi_2[\vec{\theta'}] \dri (\tss,j), 
\inf\limits_{i\leq k <j} \dle \phi_1[\vec{\theta'}] \dri (\tss,k)) \Bigr) = \nonumber \\
\dle\phi_{1}[\vec{\theta'}]\Un_{\Ic[\theta'_{s}]}\phi_{2}[\vec{\theta'}] \dri(\mu,i) = 
\dle \phi[\vec{\theta'}]\dri(\mu,i) \nonumber
\end{gather}
Therefore, 
\begin{gather}
\dle \phi[\vec{\theta}]\dri(\mu,i) \le
\dle \phi[\vec{\theta'}]\dri(\mu,i) \tag*{\qed }
\end{gather}
\end{proof}

{\color{revComments}In this section, we have presented several cases where we can syntactically determine the monotonicity of the PMTL formula with respect to its parameters. \ifthenelse{\boolean{TECHREP}}{However, we remark that in general, determining the monotonicity of PMTL formulas is undecidable \cite{jin2013mining}.}{} }

%!TEX root = root_sttt.tex
\section{Temporal Logic Parameter Bound Computation}
\label{sec:mtl:param:solution}

The notion of robustness of temporal logics will enable us to pose the parameter mining problem as an optimization problem.
In order to solve the resulting optimization problem, falsification methods and \staliro \cite{staliro:Online} can be utilized to estimate the solution for Problem \ref{prob:mtl:paramExploration}.

As described in the previous section, the parametric robustness functions that we are considering are monotonic with respect to the search parameters.
Therefore, if we are searching for a parameter vector over an interval $\Theta = [\vec{\thetaMin},\vec{\thetaMax}]$, where $\Theta$ is a hypercube and $\vec{\thetaMin} = [\thetaMin_1, \thetaMin_2, ..., \thetaMin_n]^\intercal$ and $\vec{\thetaMax} = [\thetaMax_1, \thetaMax_2, ..., \thetaMax_n]^\intercal$, we are either trying to minimize or maximize a function $f$ of $\vec{\theta}$ such that for all $\vec{\theta} \in \Theta^*$, we have $\dle \phi[\vec{\theta}] \dri (\Sys) \leq 0$.

\ifthenelse{\boolean{TECHREP}}{
\begin{exmp}[AT]
\label{exmp:param:rob:sys}
Let us consider again the automotive transmission example and the specification $\phi[\theta] = \Box_{[0,\theta]}p$ where $p \equiv (\omega \leq 4500)$.
The specification robustness $\dle \phi[\theta] \dri(\Delta_\Sys(u))$ as a function of $\theta$ and the input $u$ appears in Fig. \ref{fig:rob:2D:graph} for constant input signals.
The creation of the graph required $100 \times 30 = 3,000$ simulations.  
The contour under the surface indicates the zero level set of the robustness surface, i.e., the $\theta$ and $u$ values for which we get $\dle \phi[\theta] \dri(\Delta_\Sys(u)) = 0$.
From the graph, we can infer that $\theta^* \approx 2.8$ and that for any $\theta \in [2.8, 30]$, we have $\dle \phi[\theta] \dri (\Sys) \leq 0$.
The approximate value of $\theta^*$ is an estimate based on the granularity of the grid that we used to plot the surface.  $\hfill \blacktriangle$
\end{exmp}

\begin{figure}
\vspace{-10pt} % in text
% \vspace{-10pt} % on top of page
\begin{center}
\includegraphics[width=6.2cm]{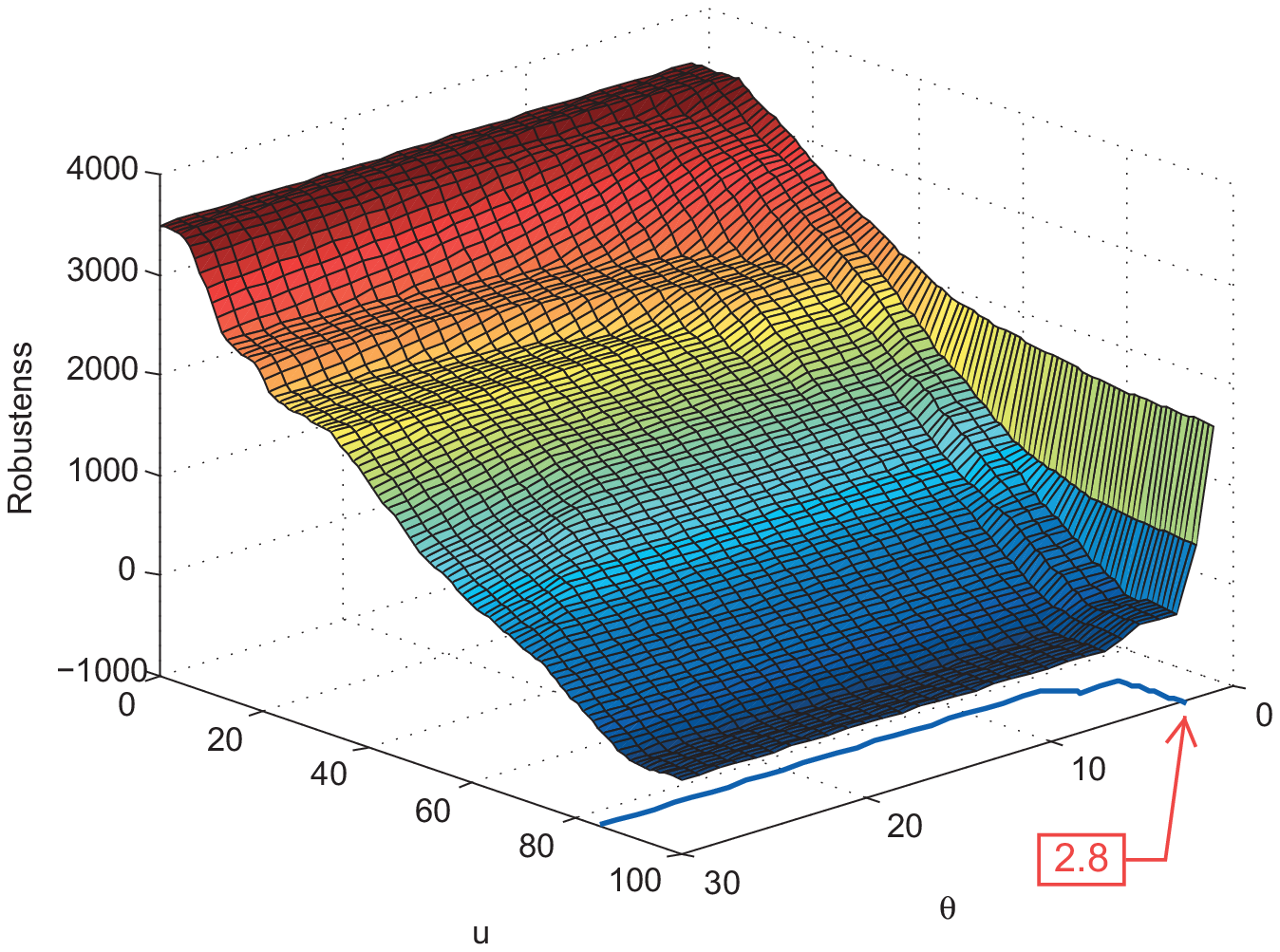}
\end{center}
\vspace{-10pt}
\caption{Example \ref{exmp:param:rob:sys}: Specification robustness estimate as a function of parameter $\theta$ and input $u$ for specification  $\phi[\theta] = \Box_{[0,\theta]}(\omega \leq 4500)$.}
\label{fig:rob:2D:graph}
\vspace{-10pt} % in text
\end{figure}
}{}
\ifthenelse{\boolean{TECHREP}}{
In summary, in order to solve Problem \ref{prob:mtl:paramExploration}, we would have to solve the following optimization problem:
}
{
In order to solve Problem \ref{prob:mtl:paramExploration}, we would have to solve the following optimization problem:	
}
\begin{align}
\label{eq:opt1}
\mbox{ optimize } \qquad & f(\vec{\theta}) \\
\mbox{ subject to } \qquad & \vec{\theta} \in \Theta \nonumber \mbox{ and } \\
 & \dle \phi[\vec{\theta}] \dri (\Sys) = \min_{\tss \in \Lc_{\sam}(\Sys)} \dle \phi[\vec{\theta}] \dri (\tss)\leq 0 \nonumber
\end{align}
%Where $f:\mathbb{R}^n \rightarrow \mathbb{R}$ is a monotonous function, i.e. for two vector parameter values $\vec{\theta}$, $\vec{\theta}'$, if $\vec{\theta} \preceq \vec{\theta}'$ and $\vec{\theta} \ge 0$ then $f(\vec{\theta}) \leq f(\vec{\theta}')$.

Where $f:\mathbb{R}^n \rightarrow \mathbb{R}$ is a either a non-increasing ($\geq$) or a non-decreasing ($\leq$) function. For two vector parameter values $\vec{\theta}$, $\vec{\theta}'$, if $\vec{\theta} \preceq \vec{\theta}'$ and $\vec{\theta} \ge 0$ then $f(\vec{\theta}) \bowtie{} f(\vec{\theta}')$, where $\bowtie{} \in \{\geq,\leq \}$ depending on the monotonicity.

The function $\dle \phi[\vec{\theta}] \dri (\Sys)$  can not be computed using reachability analysis algorithms nor is known in closed form for the systems we are considering.
Therefore, we will have to compute an under-approximation of $\Theta^*$.
Our focus will be to formulate an optimization problem that can be solved using stochastic search methods. 
In particular, we will reformulate the optimization problem (\ref{eq:opt1}) into a new one where the constraints due to the specification are incorporated into the cost function:
\begin{equation}
\label{eq:opt2}
\mbox{optimize}_{\vec{\theta} \in \Theta} \left ( f(\vec{\theta}) + 
\left \{
\begin{array}{ll}
\gamma \pm \dle \phi[\vec{\theta}] \dri (\Sys) \\ \hspace{0.5cm} \mbox{ if } \dle \phi[\vec{\theta}] \dri (\Sys)  \geq 0 \\
0 \hspace{0.35cm} \mbox{ otherwise } 
\end{array}\right . \right )
\end{equation}
where the sign ($\pm$) and the parameter $\gamma$ depend on whether the problem is a maximization or a minimization problem.
The parameter $\gamma$ must be properly chosen so that the solution of problem (\ref{eq:opt2}) is in $\Theta$ if and only if $\dle \phi[\vec{\theta}] \dri (\Sys) \leq 0$. 
Therefore, if the problem in Eq. (\ref{eq:opt1}) is feasible, then the optimal points of equations (\ref{eq:opt1}) and (\ref{eq:opt2}) are the same.

 \ifthenelse{\boolean{TECHREP}}{\subsection{Non-increasing Robustness Functions}}{}

In the case of non-increasing robustness functions $\dle \phi [\vec{\theta}] \dri (\Sys)$ with respect to the search vector variable $\vec{\theta}$, the optimization problem is a minimization problem. Without loss of generality, let us consider the case for single parameter specifications. Assume that $\dle \phi[\thetaMax] \dri (\Sys) \leq 0$.
Since $\theta \leq \thetaMax$, we have $\dle \phi[\theta] \dri (\Sys) \geq \dle \phi[\thetaMax] \dri (\Sys)$, we need to find the minimum $\theta$ such that we still have $\dle \phi[\theta] \dri (\Sys) \leq 0$.
That $\theta$ value will be the desired $\theta^*$ since for all $\theta' \in [\theta^*,\thetaMax]$, we will have $\dle \phi[\theta'] \dri (\Sys) \leq 0$.

We will reformulate the problem of Eq. (\ref{eq:opt2}) so that we do not have to solve two separate optimization problems.
From (\ref{eq:opt2}), we have:
\begin{gather}
\min_{  \vec{\theta} \in \Theta} \left (f(\vec{\theta}) + 
\left \{
\begin{array}{ll}
\gamma + \min_{\tss \in \Lc_{\sam}(\Sys)} \dle \phi[\vec{\theta}] \dri (\tss) \\ \hspace{0.5cm} \mbox{ if } \min_{\tss \in \Lc_{\sam}(\Sys)} \dle \phi[\vec{\theta}] \dri (\tss)  \geq 0 \\
0 \hspace{0.35cm} \mbox{ otherwise } 
\end{array}\right . \right ) =
 \displaybreak[2] \nonumber \\
= \min_{  \vec{\theta} \in \Theta} \left (f(\vec{\theta}) + \min_{\tss \in \Lc_{\sam}(\Sys)}
\left \{
\begin{array}{ll}
 \gamma + \dle \phi[\vec{\theta}] \dri (\tss) \\ \hspace{0.5cm} \mbox{ if } \dle \phi[\vec{\theta}] \dri (\tss)  \geq 0 \\
0 \hspace{0.35cm} \mbox{ otherwise } 
\end{array}\right . \right ) =
 \displaybreak[2] \nonumber \\
= \min_{  \vec{\theta} \in \Theta} \min_{\tss \in \Lc_{\sam}(\Sys)} \left (f(\vec{\theta}) + 
\left \{
\begin{array}{ll}
\gamma + \dle \phi[\vec{\theta}] \dri (\tss) \\ \hspace{0.5cm} \mbox{ if } \dle \phi[\vec{\theta}] \dri (\tss) \geq 0 \\
0 \hspace{0.35cm} \mbox{ otherwise } 
\end{array}\right . \right ) 
\displaybreak[2]  \label{eq:opt3}
\end{gather} 
 
The previous discussion is formalized as follows.

%\ifthenelse{\boolean{TECHREP}}{  
\begin{figure}
%\vspace{-10pt} % in text
% \vspace{-10pt} % on top of page
\centering
\begin{tabular}{cc}
\includegraphics[width=3cm]{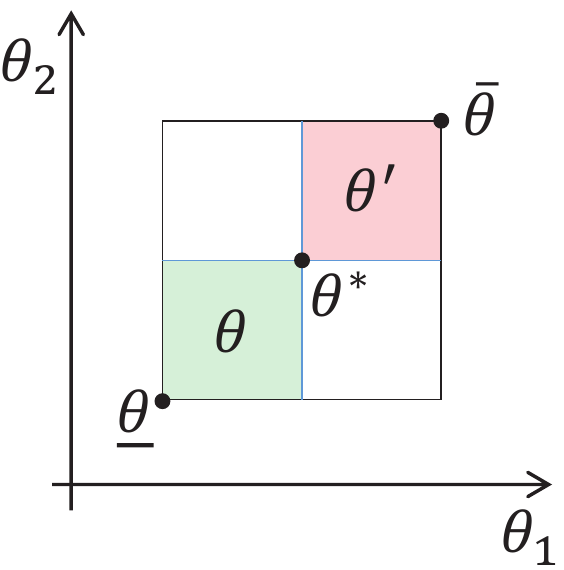} &
\includegraphics[width=3cm]{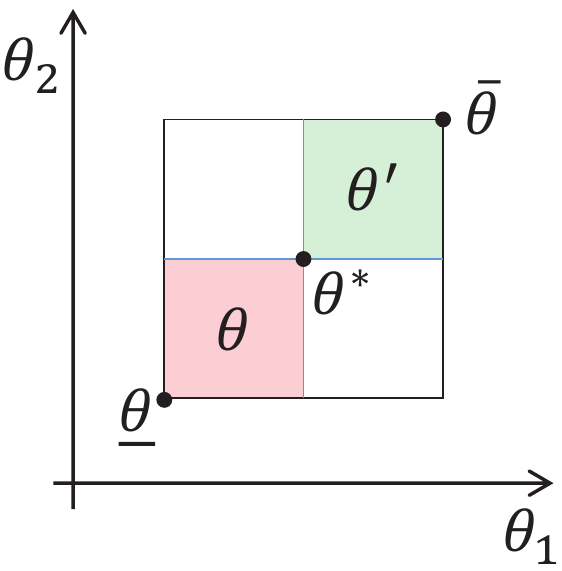}
\end{tabular}
\vspace{-5pt}
\caption{Illustration of the arrangement of parameters for non-increasing (\textbf{Left}) and non-decreasing (\textbf{Right}) robustness functions for a two parameter specification. The green (red) region represents parameter valuations for which we have a positive (negative) robustness value over all system behaviors.}
\label{fig:validGamma}
\vspace{-10pt} % in text
\end{figure}
%}{}

\begin{prop}
Let $\vec{\theta}^*$ be a set of parameters and $\tss^*$ be the system trajectory returned by an optimization algorithm that is applied to the problem in Eq. (\ref{eq:opt3}).
If $\dle \phi[\vec{\theta}^*] \dri (\tss^*) \leq 0$, then for all $\vec{\theta} \succeq \vec{\theta}^*$, $\dle \phi[\vec{\theta}] \dri (\Sys) \leq 0$.
\end{prop}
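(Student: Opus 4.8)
The plan is to prove this directly, using only the multi-parameter monotonicity Theorem together with the definition $\dle \phi[\vec{\theta}] \dri (\Sys) = \min_{\tss \in \Lc_{\sam}(\Sys)} \dle \phi[\vec{\theta}] \dri (\tss)$. The key observation I would exploit is that the single trace $\tss^*$ returned by the optimizer already witnesses falsification at $\vec{\theta}^*$, and that monotonicity guarantees it keeps witnessing falsification at every larger parameter vector; the system-level minimum over all traces then inherits this sign.

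First I would fix an arbitrary $\vec{\theta} \in \Theta$ with $\vec{\theta} \succeq \vec{\theta}^*$ and note that, since we are in the non-increasing regime, $\phi$ satisfies the hypotheses of the second (non-increasing) case of the monotonicity Theorem. Applying that Theorem to the \emph{fixed} timed state sequence $\tss^*$ (at $i=0$), with $\vec{\theta}^*$ playing the role of the smaller vector and $\vec{\theta}$ the larger one, yields $\dle \phi[\vec{\theta}] \dri (\tss^*) \leq \dle \phi[\vec{\theta}^*] \dri (\tss^*)$. Combining this with the hypothesis $\dle \phi[\vec{\theta}^*] \dri (\tss^*) \leq 0$ gives $\dle \phi[\vec{\theta}] \dri (\tss^*) \leq 0$.

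Second, I would invoke the fact that $\tss^*$ is itself an element of $\Lc_{\sam}(\Sys)$, so the system robustness, being a minimum over all traces, is bounded above by its value on $\tss^*$: $\dle \phi[\vec{\theta}] \dri (\Sys) = \min_{\tss \in \Lc_{\sam}(\Sys)} \dle \phi[\vec{\theta}] \dri (\tss) \leq \dle \phi[\vec{\theta}] \dri (\tss^*) \leq 0$. Since $\vec{\theta} \succeq \vec{\theta}^*$ was arbitrary, this establishes the claim.

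There is no genuinely hard computational step here; the main thing to be careful about is the direction of the inequalities. One must check that the correct (non-increasing) branch of the Theorem is applied and that $\phi$ meets its syntactic preconditions, and one must keep straight that the system-level $\min$ can only decrease the robustness relative to the specific witness $\tss^*$, so the sign $\leq 0$ is preserved rather than reversed. I would also remark that the argument says nothing about parameters with $\vec{\theta} \not\succeq \vec{\theta}^*$, which is consistent with the earlier observation (following Problem \ref{prob:mtl:paramExploration}) that the complement of the falsification domain cannot be certified by this method.
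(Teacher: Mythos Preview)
Your proof is correct and uses the same two ingredients as the paper's proof: trace-level monotonicity (the Theorem) and the fact that $\dle \phi[\vec{\theta}] \dri (\Sys)$ is the minimum over traces, hence bounded above by the value on the witness $\tss^*$. The only cosmetic difference is the order: the paper first passes from $\tss^*$ to the system level at $\vec{\theta}^*$ and then invokes monotonicity of $\dle \phi[\cdot] \dri (\Sys)$, whereas you first apply monotonicity on the single trace $\tss^*$ and then take the system-level minimum; both orderings are valid and equally short.
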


\begin{proof}
If $\dle \phi[\vec{\theta}^*] \dri (\tss^*) \leq 0$, then $\dle \phi[\vec{\theta}^*] \dri (\Sys) \leq 0$.
Since $\dle \phi[\vec{\theta}] \dri (\Sys)$ is non-increasing with respect to $\vec{\theta}$, then for all $\vec{\theta} \succeq \vec{\theta}^*$, we also have $\dle \phi[\vec{\theta}] \dri (\Sys) \leq 0$.  $\hfill \qed$
\end{proof}

\begin{prop} If $f(\vec{\theta}) = \|\vec{\theta}\|$, and the robustness function is non-increasing, then $\gamma = \| \vec{\thetaMax} \| $ is a valid choice for parameter $\gamma$. Here, $\|\cdot\|$ denotes the euclidean norm.
\label{prop:nonincreasing}
\end{prop}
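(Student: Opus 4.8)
The plan is to show that with $\gamma = \|\vec{\thetaMax}\|$ the penalized objective in Eq.~(\ref{eq:opt3}) attains its minimum at a falsifying parameter whenever a falsifying parameter exists in $\Theta$, so that the minimizer of (\ref{eq:opt2}) coincides with the constrained minimizer of (\ref{eq:opt1}). Recall that in the non-increasing case we assume $\dle \phi[\vec{\thetaMax}] \dri (\Sys) \le 0$, so the falsifying set is non-empty and contains $\vec{\thetaMax}$. Validity of $\gamma$ then reduces to a single inequality: the penalized cost incurred by any parameter with positive robustness must be no smaller than the cost of the best falsifying parameter. This is exactly what prevents a spuriously cheap satisfying point (small norm but $\dle \phi \dri (\Sys) > 0$) from undercutting the true optimum.

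First I would bound the cost on the satisfying region. Because $\Theta = [\vec{\thetaMin},\vec{\thetaMax}] \subseteq \CoRe^n_{\ge 0}$, every $\vec{\theta} \in \Theta$ satisfies $\vec{0} \preceq \vec{\theta} \preceq \vec{\thetaMax}$ componentwise, and since all coordinates are nonnegative the Euclidean norm is monotone on this orthant, giving $0 \le \|\vec{\theta}\| \le \|\vec{\thetaMax}\|$. Hence for any $\vec{\theta}$ with $\dle \phi[\vec{\theta}] \dri (\Sys) \ge 0$ the penalized objective is
\[
f(\vec{\theta}) + \gamma + \dle \phi[\vec{\theta}] \dri (\Sys) = \|\vec{\theta}\| + \gamma + \dle \phi[\vec{\theta}] \dri (\Sys) \ge 0 + \|\vec{\thetaMax}\| + 0 = \|\vec{\thetaMax}\|.
\]

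Next I would bound the cost on the falsifying region from above. By the feasibility assumption $\vec{\thetaMax}$ is itself falsifying, so its penalty term vanishes and its objective value equals $f(\vec{\thetaMax}) = \|\vec{\thetaMax}\|$; thus the minimum of the objective over the falsifying region is at most $\|\vec{\thetaMax}\|$. Combining the two bounds, every satisfying point has objective value $\ge \|\vec{\thetaMax}\|$, which in turn dominates the minimum over the falsifying region. Consequently the global minimizer of (\ref{eq:opt3}) can be taken in the falsifying region, where the penalty is $0$ and the objective equals exactly $\|\vec{\theta}\|$; minimizing $\|\vec{\theta}\|$ there recovers precisely the constrained optimum of (\ref{eq:opt1}), establishing that $\gamma = \|\vec{\thetaMax}\|$ is valid.

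The main subtlety to handle carefully is the boundary case $\dle \phi[\vec{\theta}] \dri (\Sys) = 0$, which falls in the penalized branch by the convention in (\ref{eq:opt2}); here one must argue, via strict feasibility or by identifying the satisfying region with $\dle \phi \dri (\Sys) > 0$, that such zero-robustness points do not spoil the separation of costs. The only other place requiring attention is justifying norm monotonicity, which is exactly why the hypothesis $\Theta \subseteq \CoRe^n_{\ge 0}$ (nonnegative parameters) is needed: without it a larger $\vec{\theta}$ need not have a larger norm, and $\gamma = \|\vec{\thetaMax}\|$ would no longer dominate the satisfying-region costs.
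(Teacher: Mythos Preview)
Your proposal is correct and follows essentially the same approach as the paper: both arguments show that for any $\vec{\theta}$ with nonnegative robustness and any falsifying $\vec{\theta}'$, the penalized cost satisfies $\|\vec{\theta}\| + \gamma + \dle \phi[\vec{\theta}]\dri(\Sys) \ge \|\vec{\theta}'\|$, using only $\|\vec{\theta}\| \ge 0$, $\dle \phi[\vec{\theta}]\dri(\Sys) \ge 0$, and $\gamma = \|\vec{\thetaMax}\| \ge \|\vec{\theta}'\|$ (the last relying on $\Theta \subseteq \CoRe^n_{\ge 0}$, as you correctly flag). The only cosmetic difference is that the paper compares directly to an arbitrary falsifying $\vec{\theta}'$, whereas you route the bound through the specific point $\vec{\thetaMax}$; your handling of the boundary case and the norm-monotonicity hypothesis is in fact more careful than the paper's own sketch.
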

\begin{proof}
The interesting case to prove here is when we have $\vec{\theta}$ such that $  \dle \phi[\vec{\theta}] \dri(\Sigma) \ge 0 $  and we have $\vec{\theta}'$ such that $ \dle \phi[\vec{\theta}'] \dri(\Sigma) < 0$. See Fig. \ref{fig:validGamma} (Left) for an illustration of the arrangement of parameter valuations for a two parameter specification.

In this case
\begin{center}
$\gamma = \|\vec{\thetaMax} \| \ge \|\vec{\theta}'\| \ge \|\vec{\theta}\| $ \\ and \\  $\dle \phi[\vec{\theta}] \dri(\Sigma) \ge 0 \implies$ \\
$\|\vec{\theta}\| + \gamma + \dle \phi[\vec{\theta}] \dri(\Sigma) \ge \|\vec{\theta}'\|$
\end{center}
Therefore, if the problem in Eq. (\ref{eq:opt1}) is feasible, then the optimum of equations (\ref{eq:opt1}) and (\ref{eq:opt2}) is the same.$\hfill \qed$
\end{proof}

%Since we are utilizing stochastic optimization methods \cite{SankaranarayananF2012hscc,AnnapureddyLFS11tacas,AnnapureddyF10iecon,NghiemSFIGP10hscc} to solve problem (\ref{eq:opt3}), if $\dle \phi[\vec{\theta}^*] \dri (\tss^*) > 0$, i.e. the trace satisfies the specification, then we cannot draw any conclusions on specification satisfaction over all system behaviors. 

%infer that the system is correct for  any parameter values $\vec{\theta} \preceq \vec{\theta}^*$.

\ifthenelse{\boolean{TECHREP}}{
\begin{figure}
\vspace{-10pt} % in text
% \vspace{-10pt} % on top of page
\centering
\includegraphics[width=6cm]{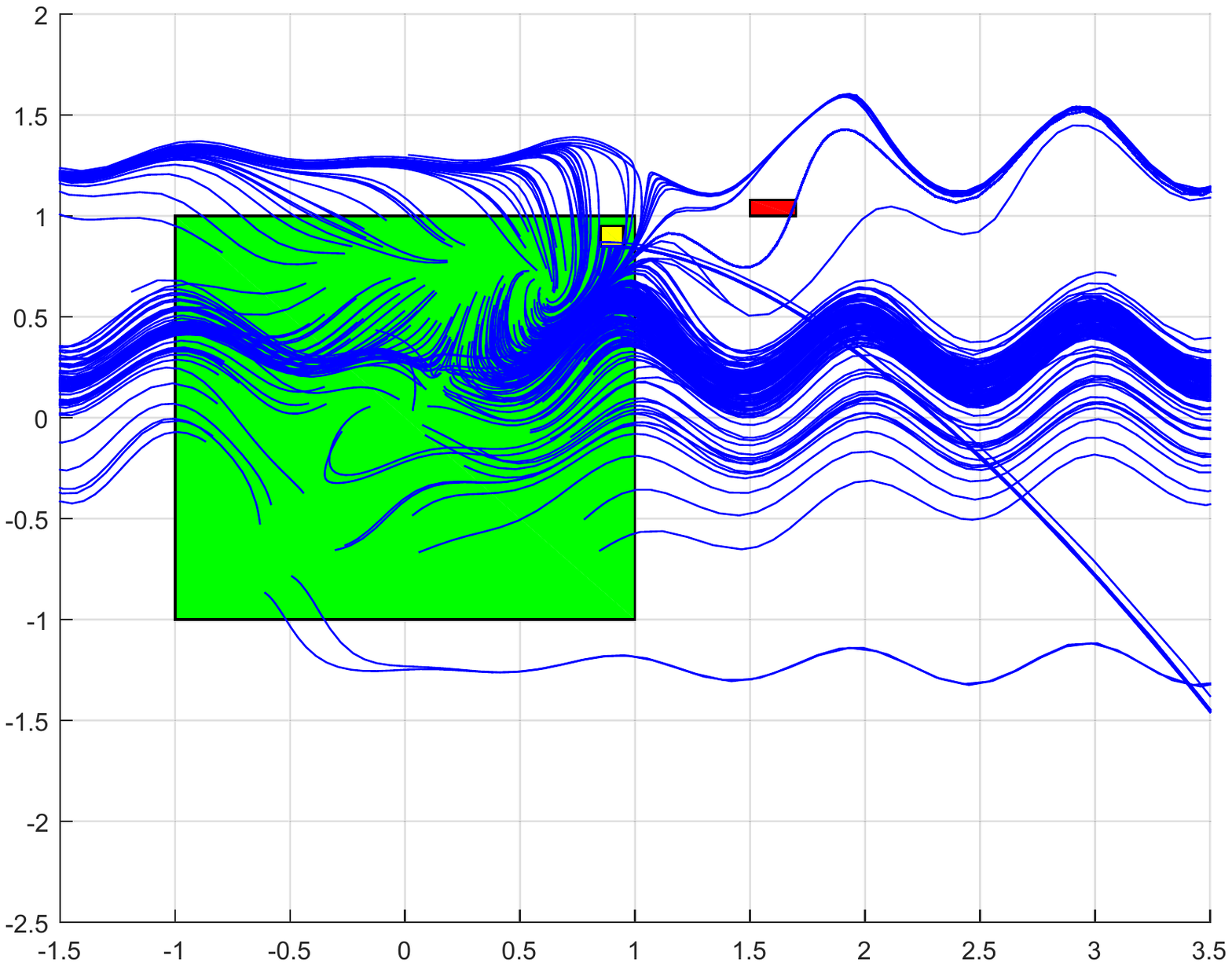} 
\vspace{-5pt}
\caption{Example \ref{exmp:hs}: Specification falsification for  $\phi[\vec{\theta}]$ = $\Box_{[0,\theta_1]} \neg a$ where $\Oc(a) = [1.5, \theta_2] \times [1, \theta_3]$ with mined parameters $\theta_1 = 3.417$, $\theta_2 = 1.7$, and $\theta_3 = 1.078$.}
\label{fig:hsex8}
\vspace{-15pt} % in text
\end{figure}
}{}

\ifthenelse{\boolean{TECHREP}}{
\begin{exmp}[AT]
Using Eq. (\ref{eq:opt3}) as a cost function, we can now compute a parameter for Example \ref{exmp:param:rob:sys} using \staliro \cite{AnnapureddyLFS11tacas,staliro:Online}. 
In particular, using Simulated Annealing as a stochastic optimization function, \staliro returns $\theta^* \approx 2.45$ as optimal parameter for constant input $u(t) = 99.81$.
The corresponding temporal logic robustness for the specification $\Box_{[0,2.45]}(\omega \leq 4500)$ is $-0.0445$.
The number of tests performed for this example was $500$ and, potentially, the accuracy of estimating $\theta^*$ can be improved if we increase the maximum number of tests.
However, based on 100 tests the algorithm converges to a good solution within $200$ tests.  $\hfill \blacktriangle$
\end{exmp}
}
{
\begin{exmp}[AT]
Let us consider again the automotive transmission example and the specification $\phi[\theta] = \Box_{[0,\theta]}p$ where $p \equiv (\omega \leq 4500)$.
Using Eq. (\ref{eq:opt3}) as a cost function, we can now compute a parameter using \staliro \cite{AnnapureddyLFS11tacas,staliro:Online}. 
In particular, using Simulated Annealing as a stochastic optimization function, \staliro returns $\theta^* \approx 2.45$ as optimal parameter for constant input $u(t) = 99.81$.
The corresponding temporal logic robustness for the specification $\Box_{[0,2.45]}(\omega \leq 4500)$ is $-0.0445$.
The number of tests performed for this example was $500$ and, potentially, the accuracy of estimating $\theta^*$ can be improved if we increase the maximum number of tests.
However, based on 100 runs the algorithm converges to a good solution within $200$ tests.  $\hfill \blacktriangle$	
\end{exmp}
}

\ifthenelse{\boolean{TECHREP}}{
\begin{figure*}
\centering
\begin{tabular}{cc}
\includegraphics[width=6.2cm]{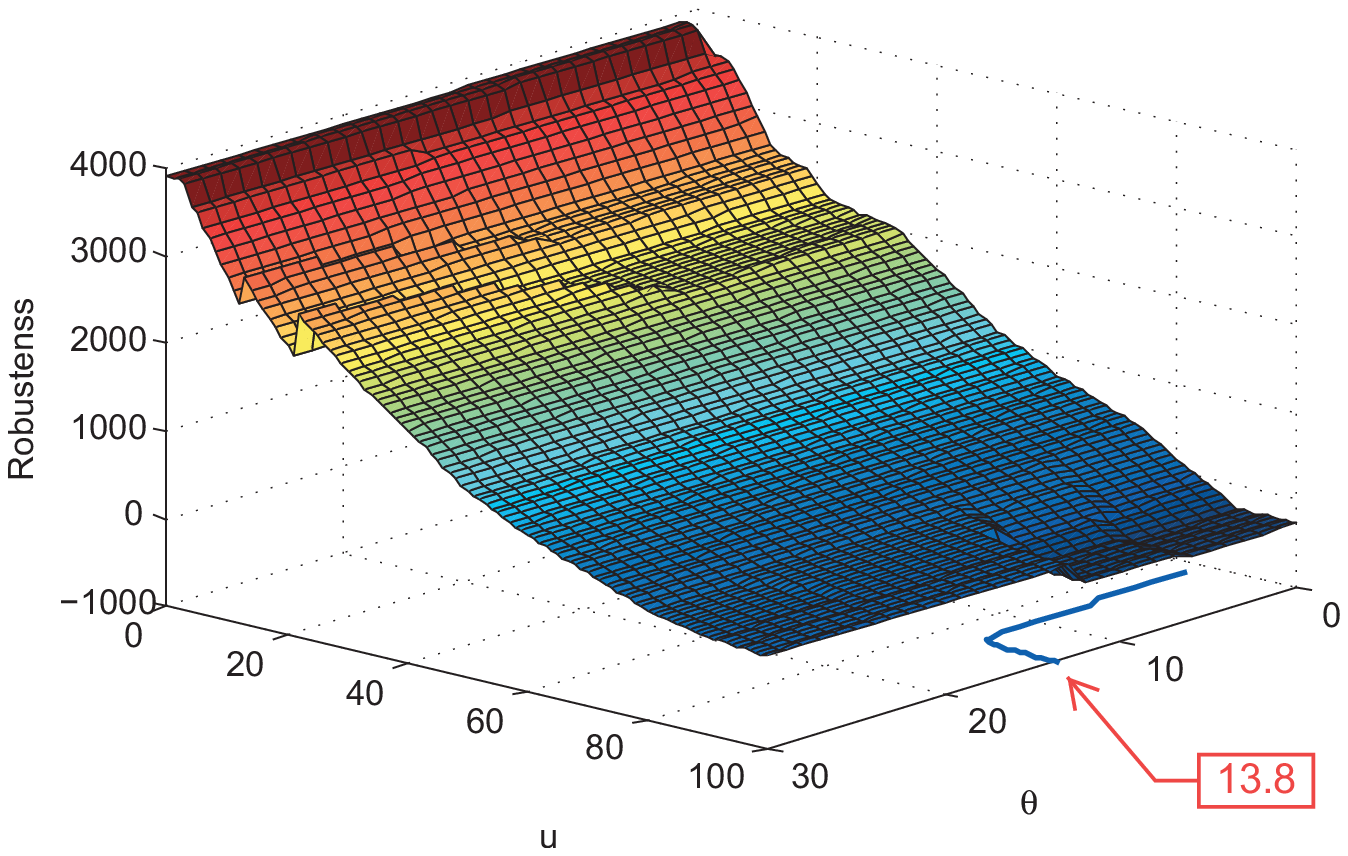} &
\includegraphics[width=6.1cm]{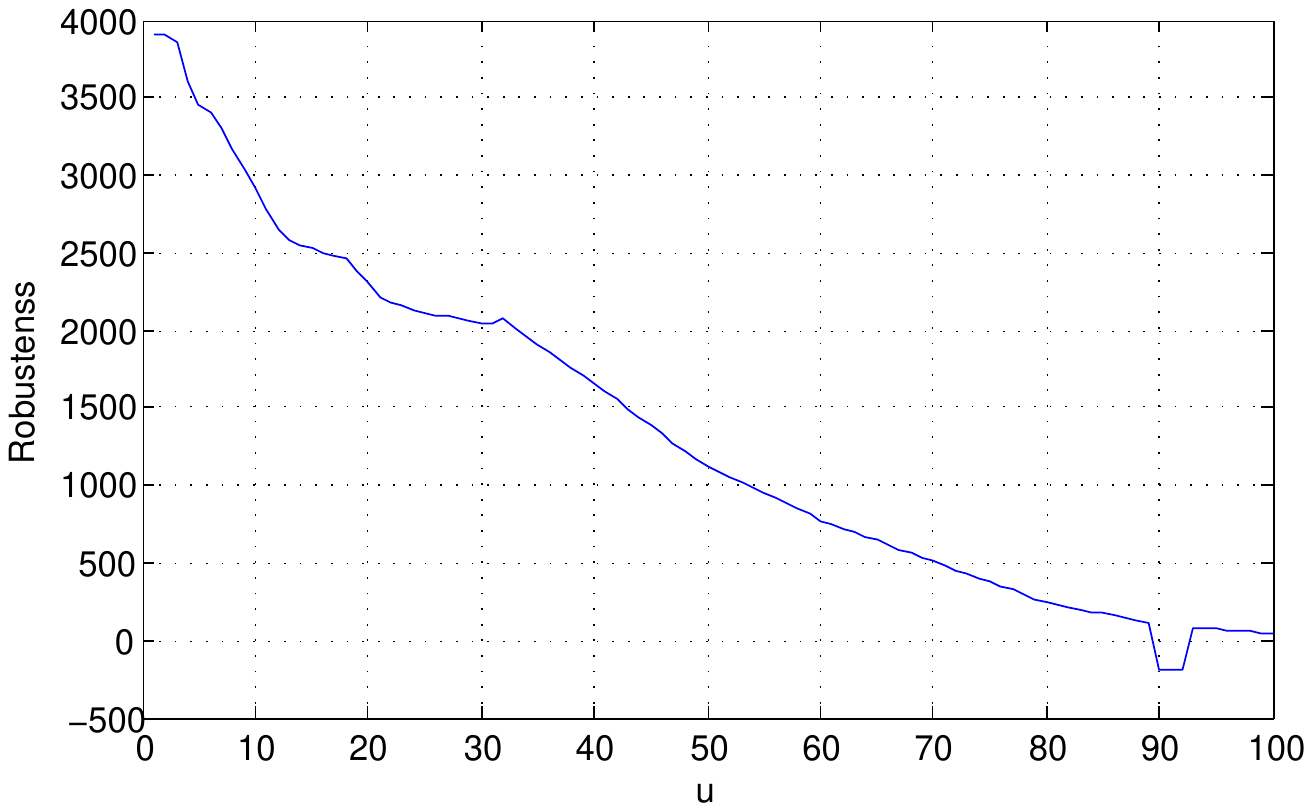} 
\end{tabular}
\caption{Example \ref{exmp:param:rob:sys:2}. \textbf{Left}: Specification robustness as a function of the parameter $\theta$ and the input $u$. \textbf{Right}: The robustness function $\dle \Box_{[12.59,30]}(\omega \leq 4500) \dri (\Delta_{\Sys}(u))$.}
\label{fig:rob:2D:graph:2}
\vspace{-10px}
\end{figure*}
}{}

\ifthenelse{\boolean{TECHREP}}{
\begin{exmp}[HS]
\label{exmp:param:rob:sys:HS}
Let us consider the specification $\phi[\vec{\theta}]$ = $\Box_{[0,\theta_1]} \neg a$ where $\Oc(a) = [1.5, \theta_2] \times [1, \theta_3]$ on our hybrid system running example. Here, the bounds for the timing parameter are $\theta_1 \in [0, 5]$ and the bounds for the state parameters are $\theta_2 \in [1.5, 2.1]$ and $\theta_3 \in [1.1, 1.6]$. The ranges for the parameters are chosen based on prior knowledge and experience about the system. The parameter mining algorithm from \staliro returns  $\theta^*_1 = 3.417$, $\theta^*_2 = 1.7$, and $\theta^*_3 = 1.078$ after running 1000 tests on the system. The generated trajectories by the parameter mining algorithm are presented in Fig. \ref{fig:hsex8}. The returned parameters guarantee that the system does not satisfy the specification for all parameters $\vec{\theta}$ where $\vec{\theta}^* \preceq \vec{\theta}$.  $\hfill \blacktriangle$
\end{exmp}
}{}

\ifthenelse{\boolean{TECHREP}}{}{We note that the parameter bound computation for non-decreasing robustness functions is symmetric to the solution proposed in this section and, therefore, omitted. The interested reader is referred to the extended version of the paper in \cite{Hoxha2016MiningExtendedVer}. }

\ifthenelse{\boolean{TECHREP}}{
\subsection{Non-decreasing Robustness Functions}

The case of non-decreasing robustness functions is symmetric to the case of non-increasing robustness functions.
In particular, the optimization problem is a maximization problem.
We will reformulate the problem of Eq. (\ref{eq:opt2}) so that we do not have to solve two separate optimization problems.
From (\ref{eq:opt2}), we have:
\begin{gather}
\max_{  \vec{\theta} \in \Theta} \left (f(\vec{\theta}) +
\left \{
\begin{array}{ll}
\gamma - \max_{\tss \in \Lc_{\sam}(\Sys)} \dle \phi[\vec{\theta}] \dri (\tss) \\ \hspace{0.5cm} \mbox{ if } \max_{\tss \in \Lc_{\sam}(\Sys)} \dle \phi[\vec{\theta}] \dri (\tss)  \geq 0 \\
0 \hspace{0.35cm} \mbox{ otherwise } 
\end{array}\right . \right ) =
\displaybreak[2] \nonumber \\
= \max_{  \vec{\theta} \in \Theta} \left (f(\vec{\theta}) + \max_{\tss \in \Lc_{\sam}(\Sys)}
\left \{
\begin{array}{ll}
\gamma - \dle \phi[\vec{\theta}] \dri (\tss) \\ \hspace{0.5cm} \mbox{ if }  -\dle \phi[\vec{\theta}] \dri (\tss)  \leq 0 \\
0 \hspace{0.35cm} \mbox{ otherwise } 
\end{array}\right . \right ) 
= \displaybreak[2] \nonumber \\
= \max_{  \vec{\theta} \in \Theta} \max_{\tss \in \Lc_{\sam}(\Sys)} \left (f(\vec{\theta}) + 
\left \{
\begin{array}{ll}
\gamma - \dle \phi[\vec{\theta}] \dri (\tss) \\ \hspace{0.5cm} \mbox{ if } \dle \phi[\vec{\theta}] \dri (\tss) \geq 0 \\
0 \hspace{0.35cm} \mbox{ otherwise } 
\end{array}\right . \right ) 
\displaybreak[2]  \label{eq:opt4}
\end{gather}

The previous discussion is formalized in the following result.

%Let $\vec{\theta}^*$ be a set of parameters and $\tss^*$ be the system trajectory returned by an optimization algorithm that is applied to the problem in Eq. (\ref{eq:opt3}).
%If $\dle \phi[\vec{\theta}^*] \dri (\tss^*) \leq 0$, then for all $\vec{\theta} \succeq \vec{\theta}^*$, we have $\dle \phi[\theta] \dri (\Sys) \leq 0$.

\begin{prop}
Let $\vec{\theta}^*$ be a set of parameters and $\tss^*$ be the system trajectory returned by an optimization algorithm that is applied to the problem in Eq. (\ref{eq:opt4}).
If $\dle \phi[\vec{\theta}^*] (\tss^*) \leq 0$, then for all $\vec{\theta} \preceq \vec{\theta}^*$, we have $\dle \phi[\theta] \dri (\Sys) \leq 0$.
\end{prop}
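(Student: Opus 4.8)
The plan is to mirror the proof of the non-increasing Proposition exactly, exploiting the symmetry between the maximization and minimization formulations together with the reversed direction of monotonicity. The single structural fact I would rely on is that the system robustness is defined as a minimum over all admissible trajectories, $\dle \phi[\vec{\theta}] \dri (\Sys) = \min_{\tss \in \Lc_{\sam}(\Sys)} \dle \phi[\vec{\theta}] \dri (\tss)$, so that evaluating the robustness at any single trajectory can only over-estimate the system robustness.

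First I would observe that the trajectory $\tss^*$ returned by the optimizer is one particular element of $\Lc_{\sam}(\Sys)$. Hence $\dle \phi[\vec{\theta}^*] \dri (\Sys) = \min_{\tss \in \Lc_{\sam}(\Sys)} \dle \phi[\vec{\theta}^*] \dri (\tss) \leq \dle \phi[\vec{\theta}^*] \dri (\tss^*) \leq 0$, where the last inequality is the hypothesis of the Proposition. This lifts the assumption about the single sampled trajectory to the statement $\dle \phi[\vec{\theta}^*] \dri (\Sys) \leq 0$ about the whole system.

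Next I would invoke monotonicity. We are in the non-decreasing regime analyzed by the main monotonicity Theorem (its first bullet, conditions (i)--(iii)), so $\dle \phi[\vec{\theta}] \dri (\Sys)$ is non-decreasing in $\vec{\theta}$ with respect to $\preceq$. Consequently, for every $\vec{\theta} \preceq \vec{\theta}^*$ we obtain $\dle \phi[\vec{\theta}] \dri (\Sys) \leq \dle \phi[\vec{\theta}^*] \dri (\Sys) \leq 0$, which is exactly the claim.

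The only real subtlety --- and hence the step I would flag as the main obstacle --- is justifying that monotonicity survives the passage from a fixed trajectory to the system-level minimum. The Theorem gives monotonicity pointwise in $\tss$; I would close the gap by noting that the pointwise minimum of a family of $\preceq$-non-decreasing functions is again $\preceq$-non-decreasing. With that observation the argument is immediate and entirely symmetric to the non-increasing case treated earlier.
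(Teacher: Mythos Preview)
Your proposal is correct and follows essentially the same two-step argument as the paper: first lift $\dle \phi[\vec{\theta}^*] \dri (\tss^*) \leq 0$ to $\dle \phi[\vec{\theta}^*] \dri (\Sys) \leq 0$ via the definition of system robustness as a minimum over trajectories, then use that $\dle \phi[\vec{\theta}] \dri (\Sys)$ is non-decreasing in $\vec{\theta}$ to conclude for all $\vec{\theta} \preceq \vec{\theta}^*$. Your explicit remark that the pointwise minimum of $\preceq$-non-decreasing functions remains $\preceq$-non-decreasing fills in a detail the paper leaves implicit, but the overall route is identical.
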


\begin{proof}
If $\dle \phi[\vec{\theta}^*] \dri (\tss^*) \leq 0$, then $\dle \phi[\vec{\theta}^*] \dri (\Sys) \leq 0$.
Since $\dle \phi[\vec{\theta}] \dri (\Sys)$ is non-decreasing with respect to $\vec{\theta}$, then for all $\vec{\theta} \preceq \vec{\theta}^*$, we also have $\dle \phi[\vec{\theta}] \dri (\Sys) \leq 0$. $\hfill \qed$
\end{proof}

\begin{prop} If $f(\vec{\theta}) = \|\vec{\theta}\|$ and the robustness function is non-decreasing, then $\gamma = - \| \vec{\thetaMax} \| $ is a valid choice for parameter $\gamma$.
\label{prop:nondecreasing}
\end{prop}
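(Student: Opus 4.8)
The plan is to mirror the argument used for Proposition \ref{prop:nonincreasing}, reversing every inequality since we now face a maximization problem rather than a minimization one. First I would isolate the only non-trivial situation, namely the case in which there coexist an \emph{infeasible} parameter vector $\vec{\theta}$ with $\dle \phi[\vec{\theta}] \dri(\Sigma) \ge 0$ and a \emph{feasible} vector $\vec{\theta}'$ with $\dle \phi[\vec{\theta}'] \dri(\Sigma) < 0$ (see Fig.~\ref{fig:validGamma}, Right). In every other situation the claim is immediate: if no infeasible vector exists the penalty branch is never triggered, and if no feasible vector exists then Eq.~(\ref{eq:opt1}) is infeasible and there is nothing to match.

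For the reformulated maximization in Eq.~(\ref{eq:opt4}), a feasible $\vec{\theta}'$ carries the unpenalized cost $f(\vec{\theta}') = \|\vec{\theta}'\|$, since its system robustness is negative, whereas an infeasible $\vec{\theta}$ carries the penalized cost $\|\vec{\theta}\| + \gamma - \dle \phi[\vec{\theta}] \dri(\Sigma)$. To guarantee that the maximizer never prefers an infeasible vector, it suffices to establish
\begin{gather*}
\|\vec{\theta}\| + \gamma - \dle \phi[\vec{\theta}] \dri(\Sigma) \le \|\vec{\theta}'\|.
\end{gather*}
Substituting $\gamma = -\|\vec{\thetaMax}\|$, I would bound the terms separately. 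Because $\Theta = [\vec{\thetaMin},\vec{\thetaMax}]$ is a hypercube whose entries lie in $\CoRe_{\ge 0}$, every $\vec{\theta} \in \Theta$ satisfies $\|\vec{\theta}\| \le \|\vec{\thetaMax}\|$, so $\|\vec{\theta}\| + \gamma = \|\vec{\theta}\| - \|\vec{\thetaMax}\| \le 0$; moreover $-\dle \phi[\vec{\theta}] \dri(\Sigma) \le 0$ since $\dle \phi[\vec{\theta}] \dri(\Sigma) \ge 0$. Hence the left-hand side is at most $0 \le \|\vec{\theta}'\|$, which is exactly the desired inequality. Consequently the penalized objective of infeasible vectors is always dominated by that of feasible ones, so whenever Eq.~(\ref{eq:opt1}) is feasible the optima of Eqs.~(\ref{eq:opt1}) and (\ref{eq:opt2}) coincide.

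The main obstacle I anticipate is bookkeeping rather than conceptual: keeping the inequality directions consistent after the sign flip from the non-increasing (minimization) setting to the non-decreasing (maximization) setting, and in particular verifying the corner bound $\|\vec{\theta}\| \le \|\vec{\thetaMax}\|$. That bound is where the nonnegativity assumption $\vec{\theta} \in \CoRe^n_{\ge 0}$ is essential, since it is precisely this that makes the Euclidean norm over the hypercube attain its maximum at the corner $\vec{\thetaMax}$; were negative components permitted, this step would fail and a different choice of $\gamma$ would be needed.
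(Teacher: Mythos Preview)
Your proposal is correct and follows essentially the same approach as the paper: isolate the mixed case (one feasible and one infeasible parameter vector), then use $\|\vec{\theta}\|\le\|\vec{\thetaMax}\|$ together with nonnegativity of the robustness on the infeasible side to show that the penalized objective of any infeasible vector is dominated by that of any feasible one. The only cosmetic difference is that the paper swaps the roles of $\vec{\theta}$ and $\vec{\theta}'$ (it calls the feasible vector $\vec{\theta}$ and the infeasible one $\vec{\theta}'$), and it writes the chain $\|\vec{\thetaMax}\|\ge\|\vec{\theta}'\|\ge\|\vec{\theta}\|$ before concluding the same final inequality; your bounding via ``LHS $\le 0 \le \|\vec{\theta}'\|$'' is arguably the cleaner way to phrase it.
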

\begin{proof}
The interesting case to prove here is when we have $\vec{\theta}$ such that $  \dle \phi[\vec{\theta}] \dri(\Sigma) < 0 $  and we have $\vec{\theta}'$ such that $ \dle \phi[\vec{\theta}'] \dri(\Sigma) \ge 0$. See Fig. \ref{fig:validGamma} (Right) for an illustration of the arrangement of parameter valuations for a two parameter specification. In this case 
\begin{center}
 $\gamma = - \|\vec{\thetaMax} \|$, $\dle \phi[\vec{\theta}'] \dri(\Sigma) \ge 0$ and \\ $\|\vec{\thetaMax} \| \ge \|\vec{\theta}'\| \ge \|\vec{\theta}\| \implies$ \\
$\|\vec{\theta}\| \ge \|\vec{\theta}'\| + \gamma - \dle \phi[\vec{\theta}'] \dri(\Sigma)$
\end{center}
Therefore, if the problem in Eq. (\ref{eq:opt1}) is feasible, then the optimum of equations (\ref{eq:opt1}) and (\ref{eq:opt2}) is the same. $\hfill \qed$
\end{proof}

%Again, if $\dle \phi[\theta^*] \dri (\tss^*) > 0$, then we cannot infer that the system is correct for all parameter values in $\Theta$.

\begin{exmp}[AT]
\label{exmp:param:rob:sys:2}
Let us consider the specification $\phi[\theta]$ $=$ $\Box_{[\theta,30]}$ $(\omega \leq 4500)$ on our running example.
The specification robustness $\dle \phi[\theta] \dri(\Delta_\Sys(u))$ as a function of $\theta$ and the input $u$ appears in Fig. \ref{fig:rob:2D:graph:2} for constant input signals.
The creation of the graph required $100 \times 30 = 3,000$ tests.  
The contour under the surface indicates the zero level set of the robustness surface, i.e., the $\theta$ and $u$ values for which we get $\dle \phi[\theta] \dri(\Delta_\Sys(u)) = 0$.
We remark that the contour is actually an approximation of the zero level set computed by a linear interpolation using the neighboring points on the grid.
From the graph, we could infer that $\theta^* \approx 13.8$ and that for any $\theta \in [0, 13.8]$, we would have $\dle \phi[\theta] \dri (\Sys) \leq 0$.
Again, the approximate value of $\theta^*$ is a rough estimate based on the granularity of the grid.

% Using Eq. (\ref{eq:opt3}) as a cost function, we can now compute the optimal parameter for Example \ref{exmp:param:rob:sys} using our toolbox \staliro \cite{AnnapureddyLFS11tacas}. 
% In particular, using Simulated Annealing as a stochastic optimization function, \staliro returns $\theta^* \approx 12.59$ as optimal parameter for constant input $u(t) = 90.88$ within 250 tests.
Using Eq. (\ref{eq:opt4}) as a cost function, we can now compute a parameter for Example \ref{exmp:param:rob:sys:2} using our toolbox \staliro \cite{AnnapureddyLFS11tacas,staliro:Online}. 
\staliro returns $\theta^* \approx 12.59$ as optimal parameter for constant input $u(t) = 90.88$ within 250 tests.
The temporal logic robustness for the specification $\Box_{[12.59,30]}(\omega \leq 4500)$ with respect to the input $u$ appears in Fig. \ref{fig:rob:2D:graph:2} (Right). $\hfill \blacktriangle$
%
%Some observations: 
%(i) The $\theta^* \approx 12.59$ computed by \staliro is actually very close to the optimal value since for $\theta^* \approx 12.79$ the system does not falsify any more.
%(ii) The systematic testing that was used in order to generate the graph was not able to accurately compute a good approximation to the parameter unless even more tests ($> 3000$) are generated.  $\hfill \blacktriangle$
\end{exmp}

}{}

%!TEX root = root_sttt.tex

\section{Parameter Falsification Domain}
\label{sec:mtl:paramFalsDomain}

\begin{figure*}
\centering
%\vspace{-17pt}
\begin{tabular}{cccc}
\includegraphics[width=4cm]{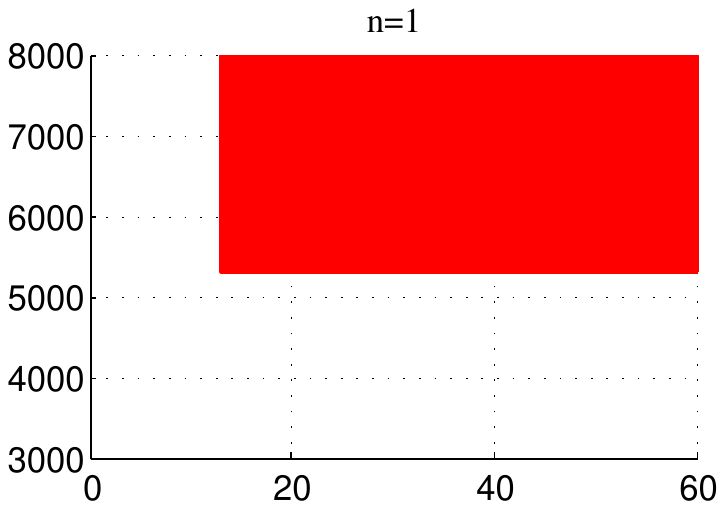} &
\includegraphics[width=4cm]{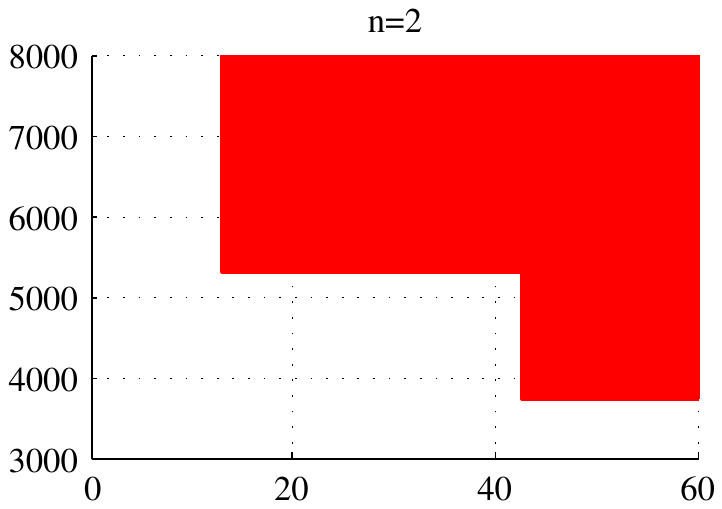} & 
\includegraphics[width=4cm]{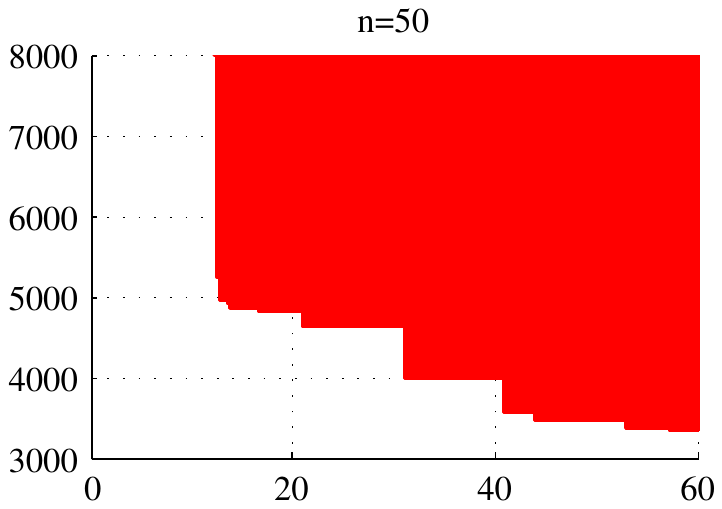} &
\includegraphics[width=4cm]{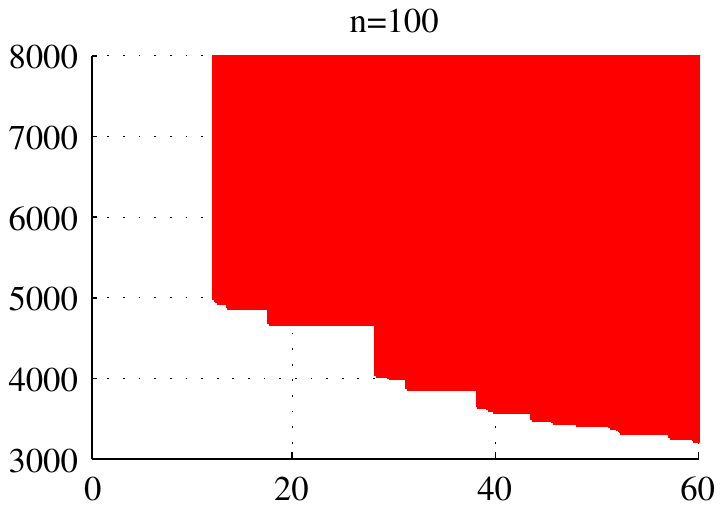} 
\end{tabular}
\vspace{-4pt}
\caption{ Illustration of the iterative process for Algorithm \ref{alg:RGPFDA}. Specification: $\phi[\vec{\theta}] = \neg (\Diamond_{[0,\theta_1]} q \wedge \Box p[\theta_2] )$  where $p[\theta_2] \equiv (\omega \leq \theta_2)$ and $q \equiv (v \geq 100)$. Model: Automatic Transmission as described in Example \ref{exmp:autotrans}. The red colored set represents set $\Psi = \{\vec{\theta} \in \Theta \ | \ \Sys \not \models \phi[\vec{\theta}] \}$ i.e. the set of parameter values such that the system does not satisfy the specification. {\color{revComments} In each iteration of the algorithm, set $\Psi$ gets expanded by the optimal falsifying parameter which is guided by the robustness landscape and the random weight in the priority function.} }
\label{fig:rob:exmp:validityDomain}
\vspace{-12pt}
\end{figure*}

We utilize the solution of Problem \ref{prob:mtl:paramExploration} and exploit the robustness landscape of a specific class of temporal logic formulas to present two algorithms to estimate $\Psi = \{\vec{\theta}^* \in \Theta \ | \ \Sys \not \models \phi[\vec{\theta}^*] \}$ for Problem \ref{prob:mtl:paramExploration}. 
In fact, we can reduce this problem to finding the set $\Theta^{bd} = \Psi \cap \{\vec{\theta}^* \in \Theta \ | \ \dle \phi[\vec{\theta}^*] \dri (\Sys) = 0\}$ since the robustness landscape is monotonic. 
Here, $\Theta^{bd}$ represents the intersection of the robustness function with the zero level set. 
As a preprocessing step, the PMTL parameters are normalized in the range $[0,1]$ to avoid bias during the optimization process. {\color{revComments} It is important to note, that due to the undecidable nature of the problem, we cannot determine satisfying parameter values. Therefore, we generate the parameter falsification domain by finding only falsifying parameter values.}

% To enable the exploration of this set $\Theta^{bd}$, we present two algorithms. 

\ifthenelse{\boolean{TECHREP}}{The first method approximates  $\Theta^{bd}$ by modifying the priority function $f$ and thereby slightly shifting the minimum or maximum of the objective function in Eq. \ref{eq:opt3} or Eq. \ref{eq:opt4}, respectively.}{The first method approximates  $\Theta^{bd}$ by modifying the priority function $f$ and thereby slightly shifting the minimum or maximum of the objective function in Eq. \ref{eq:opt3}.}  The magnitude of the shift depends on the shape of the robustness landscape of the model and specification.

As shown in Algorithm \ref{alg:RGPFDA}, the set $\Psi$ is explored iteratively. 
For every iteration, we draw a random vector $\omega$ with dimension equal to the dimension of $\Theta$. 
The random vector is used as parameter weights for the priority function $f(\vec{\theta})$. 
Namely, $f(\vec{\theta}) = \sum_{}^{} w_i\theta_i$.
We run parameter mining, which returns an approximation for Eq. (\ref{eq:opt2}). 
In case $\phi[\vec{\theta}]$ is non-decreasing (or non-increasing), the optimization algorithm $\mathtt{opt}$ is a maximization (or minimization) algorithm. 
We utilize the values mined and the corresponding robustness value to expand $\Psi$ and reduce the unknown parameter range for the next iteration. 
We present the iterative process in Fig.~\ref{fig:rob:exmp:validityDomain}.

We define a PMTL specification monotonicity function $\mathcal{M} : $ PMTL $ \rightarrow \{-1,0,1\}$ where
%\vspace{-0.1cm}
\[ \mathcal{M}(\phi[\vec{\theta}]) = \left\{ \begin{array}{ll}
         1 & \mbox{if $\phi[\vec{\theta}]$ is non-decreasing};\\
         -1 & \mbox{if $\phi[\vec{\theta}]$ is non-increasing};\\
         0 & \mbox{otherwise}.\end{array} \right. \] 
%\vspace{-0.3cm}

A monotonicity computation algorithm is presented in \cite{AsarinDMN12rv} and generalized in \cite{jin2013mining}.

\begin{algorithm} 
  \caption{ Robustness Guided Parameter Falsification Domain Algorithm RGDA($\mathtt{opt}$, $\Gamma$, $\Theta$, $\phi$, $\Sigma$, $n$, $t$)}
  {\color{revComments} {\bf Input}: Stochastic optimization algorithm $\mathtt{opt}$, search space $\Gamma$, parameter range $\Theta$, specification $\phi$, system $\Sigma$, number of iterations $n$ and tests $t$ \\
  {\bf Output}: Parameter falsification domain $\Psi$ \\
  {\bf Internal Variables}: Parameter weights $\vec{\omega}$, parameters mined $\vec{\theta}^*$ and robustness value $\gamma$ }
  \label{alg:RGPFDA}
  \begin{algorithmic}[1]
    %\Require Stochastic optimization algorithm $\mathtt{opt}$, search space $\Gamma$, parameter range $\Theta$, specification $\phi$, system $\Sigma$, number of iterations $n$ and tests $t$ 
	\State $\langle \Psi$, $\vec{\omega}$, $\vec{\theta}^*$, $\gamma \rangle \gets \langle \emptyset$, $\emptyset$, $\emptyset$, $\emptyset \rangle$
	\For {$i=0$ to $n$}
		\State $\vec{\omega} \gets $ \textsc{RandomVector}$([0,1],$ \textsc{dimension}$(\Theta))$ 
		\State $[\vec{\theta}^*, \gamma] \gets \mathtt{opt}(\Gamma,\Theta,\phi,\Sigma,t,\vec{\omega}, \mathcal{M}(\phi[\vec{\theta}^*]))$
		\Comment{run parameter mining and robustness computation}
		
		\If{ $(\gamma \le 0$) }
					\If{ ($\mathcal{M}(\phi[\vec{\theta}^*])=1$)}
						\State $\Psi \gets \Psi \cup \{\vec{\theta} \in \Theta \ | \ \forall i \  (0 \le \theta_i \le \theta^*_i)\}$ \Comment{expand the falsification domain $\Psi$}
						%\State $\Theta \gets \Theta \setminus \Psi$
						%\Comment{truncate the parameter search space $\Theta$}
					\ElsIf{ ($\mathcal{M}(\phi[\vec{\theta}^*])=-1$)} 
						\State $\Psi \gets \Psi \cup \{\vec{\theta} \in \Theta \ | \ \forall i \  (\theta_i \ge \theta^*_i \ge 0)\}$
						%\State $\Theta \gets \Theta \setminus \Psi$
					\EndIf
		\EndIf
	\EndFor
	
	\State \Return $\Psi$
  \end{algorithmic}
\end{algorithm}

Algorithm \ref{alg:newalg} explores the set $\Theta^{bd}$ by iteratively expanding the set of falsifying parameters, namely, the set $\Psi$. However in this case, the search is finely structured and does not depend on randomized weights.
For presentation purposes, let us consider the case for specifications with non-decreasing monotonicity. 
Given a normalized parameter range with dimension $\eta$, in each iteration of the algorithm, we solve the following optimization problem: 
\begin{align}
\label{eq:optAlg}
\mbox{ maximize } \qquad & c \\
\mbox{ subject to } \qquad & \ c * \vec{b} + \vec{p} \in \Theta \nonumber \mbox{ and } \\
 & \Sys \not \models \phi[ \ c * \vec{b} + \vec{p} \ ]  \nonumber
\end{align}

\noindent where $\vec{p}$ is the starting point of the optimization problem in each iteration and $\vec{b}$ is the bias vector which enables to prioritize specific parameters in the search. 
Namely, the choice of $\vec{b}$ directs the expansion of the parameter falsification domain along a specific direction. 
We refer to the solution of Eq. \ref{eq:optAlg} in the $i^{th}$ iteration of the algorithm as \textbf{marker($i$)}.
Initially, for the first iteration, the value of $\vec{p}$ is set to $\vec{0}$ or $\vec{1}$ depending on the monotonicity of the specification. 
The returned \textbf{marker($1$)} from Eq. \ref{eq:optAlg} is then utilized to update $\Psi$, the set of parameters for which the system does not satisfy the specification.
Next, we generate at most $2^\eta - 2$ initial position vectors induced by the returned \textbf{marker($1$)}. 

\begin{figure*}
\centering
%\vspace{-17pt}
\begin{tabular}{c}
\includegraphics[width=17.5cm]{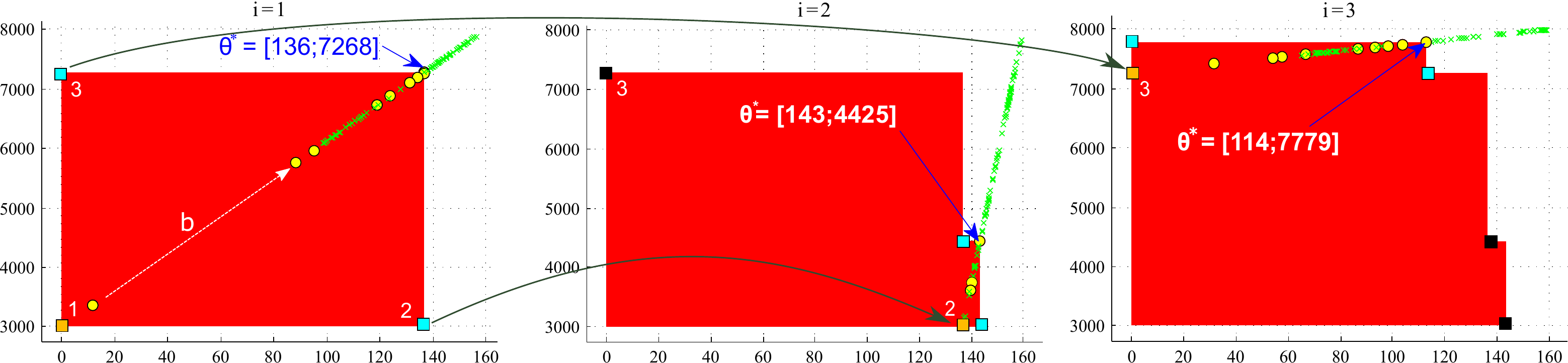} 
\end{tabular}
\vspace{-4pt}
\caption{ Illustration of the iterative process for Algorithm \ref{alg:newalg}. Specification: $\phi[\vec{\theta}] = \Box ( p[\theta_1] \wedge q[\theta_2] )$  where $p[\theta_1] \equiv (v \leq \theta_1)$ and $q \equiv (\omega \leq \theta_2)$. Model: Automatic Transmission as described in Example \ref{exmp:autotrans}. {\color{revComments} The parameter range for the specification is $\Theta = [0 \ 160; 3000 \ 8000]$. 
In each plot, the search is conducted in a specific direction \vec{b}.
The plots from left to right represent three iterations of Algorithm \ref{alg:newalg}. 
The yellow circles and green marks represent sample points of the search optimizer in the process of solving Eq. \ref{eq:optAlg}.
Specifically, the yellow circles represent parameter values for which we have found system inputs and initial conditions that falsify the specification. 
The green marks represent parameter values for which falsification is not found. 
The largest yellow circle found by the stochastic optimizer is returned as the current marker.
The orange squares represent the initial position of the search in the current iteration.
The blue squares represent the initial positions generated by the current marker that will be considered in future iterations. 
The black squares represent initial positions that will be considered in future iterations.
The red colored set represents set $\Psi = \{\vec{\theta} \in \Theta \ | \ \Sys \not \models \phi[\vec{\theta}] \}$ i.e. the set of parameter values such that the system does not satisfy the specification.} }
\label{fig:rob:exmp:validityDomainSDP}
\vspace{-12pt}
\end{figure*}

{\color{revComments} Consider the example presented in Fig. \ref{fig:rob:exmp:validityDomainSDP} where we have \textbf{marker($1$)} $= [136;7268]$. 
That value is utilized to update $\Psi$ and generate two new initial position vectors at $[0; 7268]$ and $[136;0]$.
In the next iteration of the algorithm, the search is initialized in one of the newly generated initial position vectors.
Namely, the search starts in $[0; 7268]$ or $[136;0]$ (see Fig. \ref{fig:rob:exmp:validityDomainSDP}, Left). The initial position vector not utilized is stored in a list and used in future iterations. In the second iteration, $[136;0]$ is used as the initial position vector. We return the solution to Eq. \ref{eq:optAlg} with \textbf{marker($2$)} $= [143;4425]$ which generates the initial position vectors $[143;0]$ and $[136;4425]$ (Fig. \ref{fig:rob:exmp:validityDomainSDP}, Middle). Similarly, \textbf{marker($3$)} is generated in Fig. \ref{fig:rob:exmp:validityDomainSDP} (Right). 
In this example, the directional vector \vec{b}, in each iteration, directs towards the bounds of the parameter range, namely $(160, 8000)$.} 
The algorithm terminates when one of the following conditions is met: 1) The distance between \textbf{markers} is less than some value $\epsilon$, or 2) no new markers are generated from the current set of initial position vectors, or 3) {\color{revComments} a maximum number of iterations is exceeded}. 

%Figure \ref{fig:rob:exmp:validityDomainSDP} represents the iterative process for Algorithm \ref{alg:newalg}. The initial marker generated is \textbf{marker($1$)} $= [136;7268]$

\begin{algorithm} 
  \caption{Structured Parameter Falsification Domain Algorithm SDA($\mathtt{opt}$, $\Gamma$, $\Theta$, $\phi$, $\Sigma$, $t$, $\epsilon$, $\vec{b}$, {\color{revComments} $n$})}
  {\color{revComments} {\bf Input}: Stochastic optimization algorithm $\mathtt{opt}$, search space $\Gamma$, parameter range $\Theta$, specification $\phi$, system $\Sigma$, number of tests $t$, minimum distance between markers $\epsilon$, bias vector $\vec{b}$, maximum number of iterations $n$ \\
  {\bf Output}: Parameter falsification domain $\Psi$ \\
  {\bf Internal Variables}: List of initial positions $\mathcal{ML}$, termination condition $\mathcal{TC}$, initial positions generated in the current iteration $\mathcal{TL}$, iteration $i$ }
  \label{alg:newalg}
  \begin{algorithmic}[1]
    %\Require Stochastic optimization algorithm $\mathtt{opt}$, search space $\Gamma$, parameter range $\Theta$, specification $\phi$, system $\Sigma$, number of tests $t$, minimum distance between markers $\epsilon$, bias vector $\vec{b}$, maximum number of iterations $n$
    \State {\color{revComments} $\langle$$\Psi$, $\vec{p}$, $\mathcal{TC}$, $\mathcal{ML}$, $\mathcal{TL}$, $i$$\rangle$ $\gets$  $\langle \emptyset$, $\emptyset$, $\bot$, $\{\}$, $\{\}$, $0$$\rangle$  }
	
	\If{ ($\mathcal{M}(\phi[\vec{\theta}])=1$)}
		\State $\mathcal{ML}$.\textsc{Add}($\vec{0}($\textsc{dimension}$(\Theta))$)
		%\State $\vec{p} \gets \vec{0}($\textsc{dimension}$(\Theta))$
    \ElsIf{ ($\mathcal{M}(\phi[\vec{\theta}])=-1$)} 	
    	\State $\mathcal{ML}$.\textsc{Add}($\vec{1}($\textsc{dimension}$(\Theta))$)
		%\State $\vec{p} \gets \vec{1}($\textsc{dimension}$(\Theta))$
	\EndIf
	
%	\State $\mathcal{ML}$.\textsc{Add}($\vec{p}$)
	
	\While {$\mathcal{TC} = \bot$}
		\State $\mathcal{TL} \gets \{\}$ 
		\For {$\vec{v}$ in $\mathcal{ML}$}
			\State {\color{revComments} $i \gets i+1$}
			\State $[\vec{\theta}^*, \gamma] \gets \mathtt{opt}(\Gamma,\Theta,\phi,\Sigma,t,\omega, \mathcal{M}(\phi[\vec{\theta}]), \vec{b}, \vec{v})$ 		\Comment{run parameter mining starting at $\vec{v}$ and search along the directional vector $\vec{b}$}

				\If{ $( \gamma \le 0$) }
							\State $\mathcal{TL}$.\textsc{Add}(\textsc{GenerateMarkers}($\theta^*$, $\mathcal{M}(\phi[\vec{\theta}])$))		
							\If{ ($\mathcal{M}(\phi[\vec{\theta}^*])=1$)}
								\State $\Psi \gets \Psi \cup \{\vec{\theta} \in \Theta \ | \ \forall i \  (0 \le \theta_i \le \theta^*_i\}$
								\State $\Theta \gets \Theta \setminus \Psi$
							\ElsIf{ ($\mathcal{M}(\phi[\vec{\theta}^*])=-1$)} 
								\State $\Psi \gets \Psi \cup \{\vec{\theta} \in \Theta \ | \ \forall i \  (\theta_i \ge \theta^*_i \ge 0)\}$
								\State $\Theta \gets \Theta \setminus \Psi$
							\EndIf
					\EndIf

		\EndFor
		
		\State $\mathcal{ML} \gets \mathcal{TL}$
		
		\If {$\mathcal{ML}$.\textsc{IsEmpty}() or \textsc{DistanceBetweenMarkers}($\mathcal{ML}$) $< \epsilon$ or {\color{revComments} $i > n$}}
			$\mathcal{TC} \gets \top$ 
		\EndIf	
	\EndWhile	
	\State \Return $\Psi$
  \end{algorithmic}
\end{algorithm}
\vspace{-10pt}

%!TEX root = root_sttt.tex

\section{Experiments and a Case Study}

The algorithms and examples presented in this work are implemented and publicly available through the Matlab toolbox \staliro{} \cite{AnnapureddyLFS11tacas,staliro:Online}.

The parametric MTL exploration of CPS is motivated by a challenge problem published by Ford in 2002 \cite{ChutinanB02fordtech}.
In particular, the report provided a simple -- but still realistic -- model of a powertrain system (both the physical system and the embedded control logic) and posed the question whether there are constant operating conditions that can cause a transition from gear two to gear one and then back to gear two.
That behavior would imply that the gear transition from 1 to 2 was not necessary in the first place.

\ifthenelse{\boolean{TECHREP}}{
The system is modeled in Checkmate \cite{SilvaK00acc}.
It has 6 continuous state variables and 2 Stateflow charts with 4 and 6 states, respectively.
The Stateflow chart for the shift scheduler appears in Fig. \ref{fig:powertrain:graphs}. 
The system dynamics and switching conditions are linear.
However, some switching conditions depend on the initial conditions of the system.
The latter makes the application of standard system verification tools not a straightforward task.

\begin{figure*}
\centering
\begin{tabular}{cc}
\multirow{2}{*}[55pt]{\includegraphics[width=6.1cm]{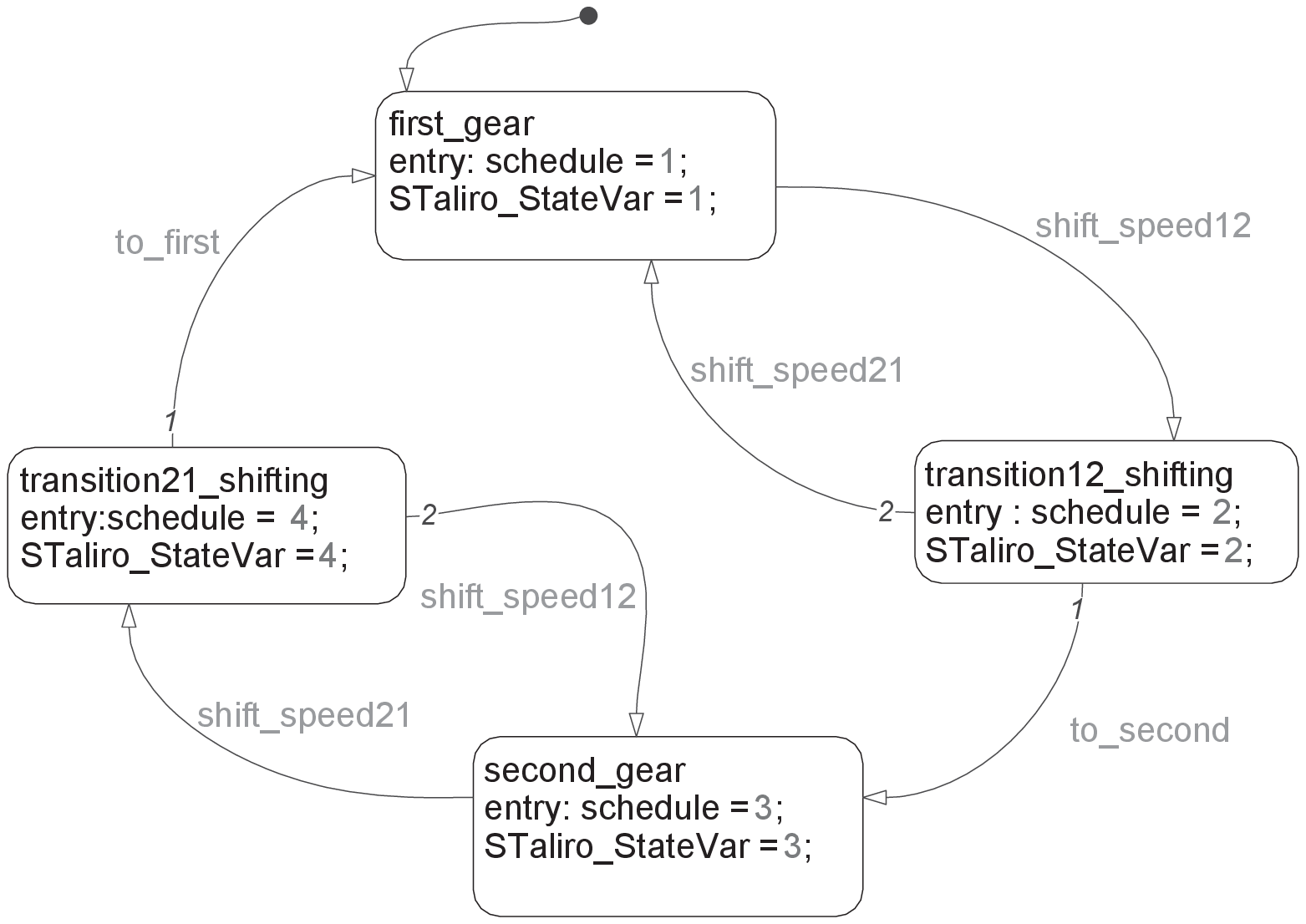}} &
\includegraphics[width=6.1cm]{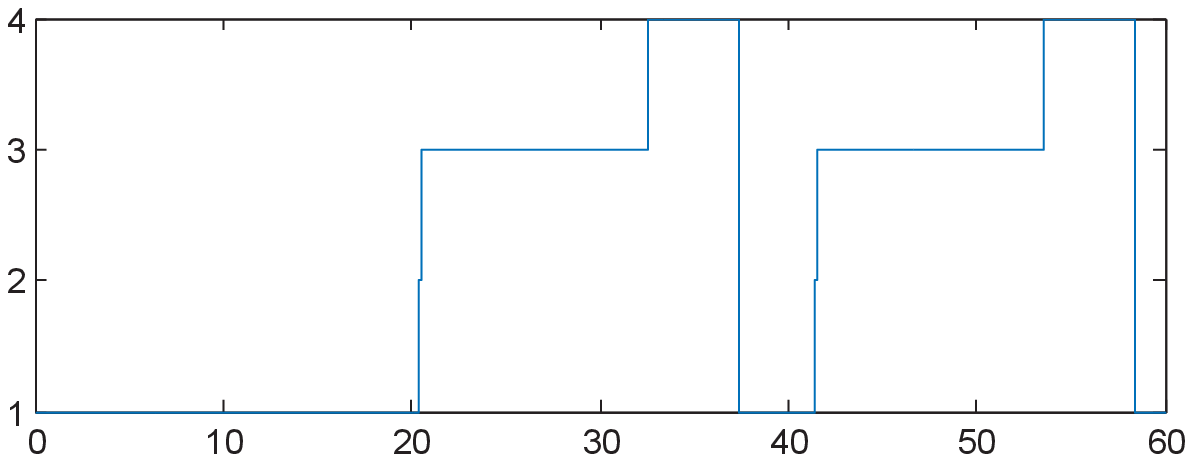} \\
& \includegraphics[width=5.9cm]{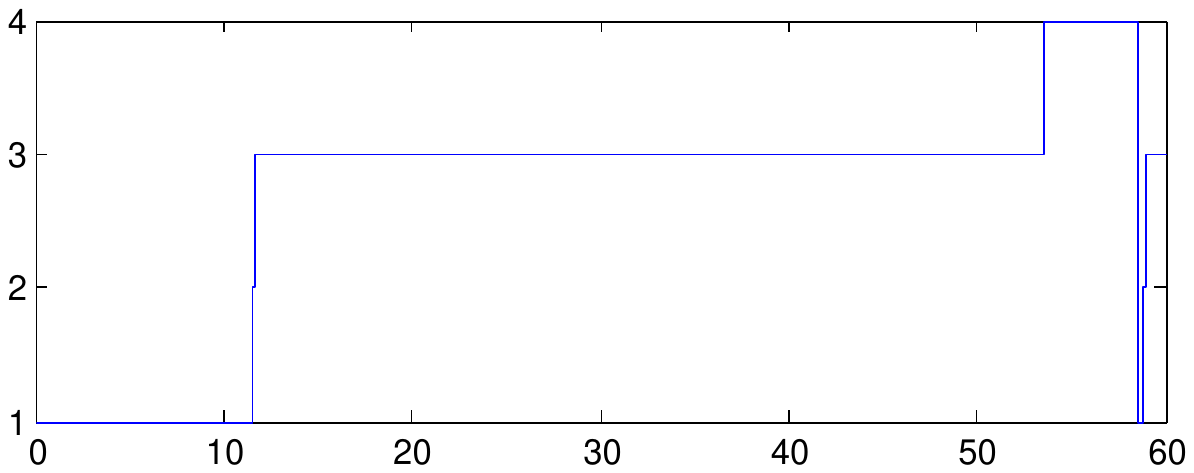}
\end{tabular}
\caption{Left: The shift scheduler of the powertrain challenge problem. 
Right: Shift schedules. The numbers correspond to the variables in the states of the shift scheduler.
Right Top: The shift schedule falsifying requirement $\phi^P_{e1}$.
Right Bottom: The shift schedule falsifying requirement $\phi^P_{e3}[0.4273]$.}
\label{fig:powertrain:graphs}
\vspace{-10pt}
\end{figure*}
 }
{}

\ifthenelse{\boolean{TECHREP}}{
In \cite{FainekosSUY12acc}, we demonstrated that \staliro{} \cite{AnnapureddyLFS11tacas,staliro:Online} can successfully solve the challenge problem (see Fig. \ref{fig:powertrain:graphs}) by formalizing the requirement as an MTL specification $\phi^P_{e1} = \neg \Diamond (g_2 \wedge \Diamond ( g_1 \wedge \Diamond g_2))$, where $g_i$ is a proposition that is true when the system is in gear $i$.
Stochastic search methods can be applied to solve the resulting optimization problem where the cost function is the robustness of the specification.}
{
In \cite{FainekosSUY12acc}, we demonstrated that \staliro{} can successfully solve the challenge problem.
}
Moreover, inspired by the success of \staliro on the challenge problem, we tried to ask a more complex question.
Specifically, does a transition exist from gear two to gear one and back to gear two in less than 2.5 sec?
\ifthenelse{\boolean{TECHREP}}{
An MTL specification that can capture this requirement is
$ \phi^P_{e2} = \Box ((\neg g_1 \wedge X g_1) \rightarrow  \Box_{(0,2.5]} \neg g_2)$.
The natural question that arises is what would be the smallest time for which such a transition can occur? 
}{}
We can formulate a parametric MTL formula to query the model of the powertrain system:
$ \phi^P_{e3}[\theta] = \Box ((\neg g_1 \wedge X g_1) \rightarrow  \Box_{(0,\theta]} \neg g_2) $.
\ifthenelse{\boolean{TECHREP}}{
We have extended \staliro to be able to handle parametric MTL specifications.
The total simulation time of the model is set to $60\sec$ and the search interval is $\Theta = [0,60]$. 
\staliro returned $\theta^* \approx 0.4273$ as the minimum parameter found \ifthenelse{\boolean{TECHREP}}{(See Fig. \ref{fig:powertrain:graphs})}{} using about 300 tests of the system.
}
{
Using \staliro we obtained $\theta^* \approx 0.4273$ as the minimum parameter found	using about 300 tests.
}

The challenge problem is extended to an industrial size high-fidelity engine model. The model is part of the SimuQuest Enginuity \cite{Simuquest:Online} Matlab/Simulink tool package. The Enginuity tool package includes a library of modules for engine component blocks. It also includes pre-assembled models for standard engine configurations\ifthenelse{\boolean{TECHREP}}{, see Fig. \ref{Fig:exmp:simuquestEnginuityEngine}}{}. In this work, we will use the Port Fuel Injected (PFI) spark ignition, 4 cylinder inline engine configuration. It models the effects of combustion from first physics principles on a cylinder-by-cylinder basis, while also including regression models for particularly complex physical phenomena. Simulink reports that this is a 56 state model. The model includes a tire-model, brake system model, and a drive train model (including final drive, torque converter and transmission). The model is based on a zero-dimensional modeling approach so that the model components can all be expressed in terms of ordinary differential equations. The inputs to the system are the throttle and brake schedules, and the road grade, which represents the incline of the road. The outputs are the vehicle and engine speed, the current gear and a timer that indicates the time spent on a gear. We search for a particular input for the throttle schedule, brake schedule, and grade level. The inputs are parametrized using 12 search variables, where 7 are used to model the throttle schedule, 3 for the brake schedule, and 2 for the grade level. The search variables for each input are interpolated with the Piecewise Cubic Hermite Interpolating Polynomial (PCHIP) function provided as a Matlab function by Mathworks. The simulation time is 60s. We demonstrate the parameter mining method for two specifications: 
%\begin{enumerate}
%\setlength{\leftmargin}{0pt}
%\begin{enumerate}[\labelindent=0cm]

\noindent 1. The specification $ \phi^{S}_{1}[\theta]=\Box_{[0,60]}((g2~\wedge~X g1) \rightarrow \Box_{[0,\theta]}((\tau \le \theta) \rightarrow g1)$, where $\tau$ is the time spent in a gear. The specification states that after shifting into gear one from gear two, there should be no shift from gear one to any other gear within $\theta$ seconds. 
\ifthenelse{\boolean{TECHREP}}{Clearly, the property defined is equivalent to the property defined in the challenge problem in the sense that the set of trajectories that satisfy/falsify the property is the same. The reason for the change made is the improved performance of the hybrid distance metric \cite{AbbasF11atva} with the modified specification.}{} The mined parameter for the specification returned is $1.29s$. 
%\ifthenelse{\boolean{TECHREP}}{
Figure \ref{Fig:exmp:simuquest_param_est_gears} presents a shift schedule for which a transition out of gear one occurs in $1.28$ seconds.
%}

\noindent 2. The specification $ \phi^{S}_{2}[\vec{\theta}]=\Box ((v < \theta_1)  \wedge (\omega < \theta_2))$, where $\theta_1$, $\theta_2$ represent the vehicle and engine speed parameters, respectively. The specification states that the vehicle and engine speed is always less than $\theta_1$ and $\theta_2$, respectively. The mined parameters for the specification returned are $137.1$mph and $4870$rpm.
%\end{enumerate}

\ifthenelse{\boolean{TECHREP}}{}{}
\begin{figure}
%\vspace{-40pt} % in text
%\vspace{-5pt} % on top of page
\begin{center}
\includegraphics[width=7.5cm]{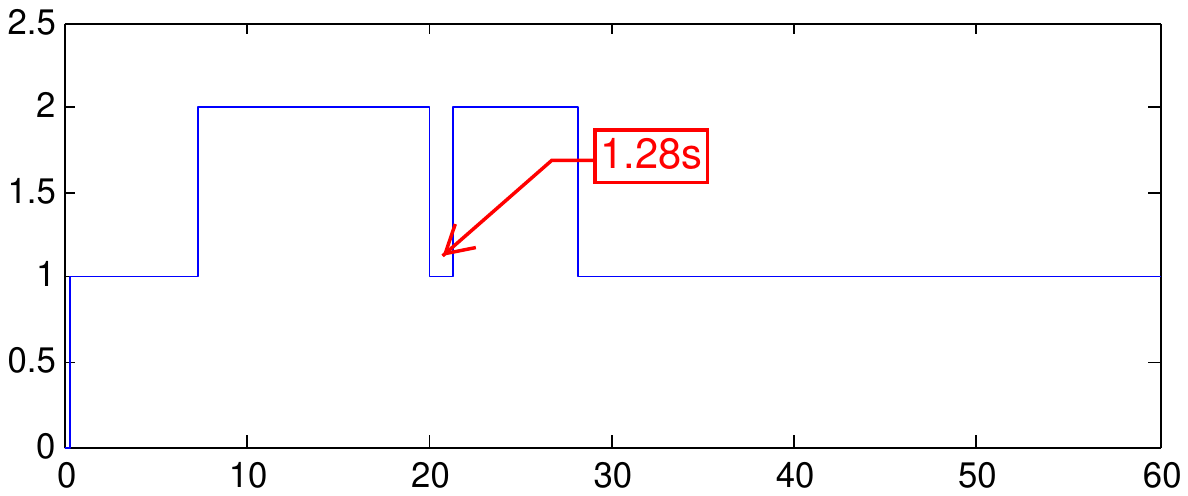} 
\end{center}
%\vspace{-0.3cm}
\caption{A shift schedule which falsifies the specification $\phi^{S}_{1}[\theta=1.29]=\Box_{[0,60]}((g_2~\wedge~X g_1) \rightarrow \Box_{[0,1.29]}((\tau \le 1.29) \rightarrow g_1)$ on the Simuquest high-fidelity engine model for specification.}
\label{Fig:exmp:simuquest_param_est_gears}
\vspace{-20pt} % in text
\end{figure}

In Table \ref{Tab:expts:staliro_breach}, we present experimental results for specifications on the Powertrain, Automotive Transmission, and Simuquest Enginuity high-fidelity engine models. 
\ifthenelse{\boolean{TECHREP}}{A detailed description of the benchmark problems can be found in \cite{AbbasFSIG11tecs,SankaranarayananF2012hscc} and the benchmarks can be downloaded with the \staliro distribution \cite{staliro:Online}.}
%In order to be able to compare the two methods, when performing parameter estimation, we regard a parameter value less than the constant in the MTL formula as falsification.
%Notably, for benchmark problems that are easier to falsify, the parameter estimation method incurs additional cost in the sense of reduced number of falsifications.
%On the other hand, on hard problem instances, the parameter estimation method provides us with parameter ranges for which the system fails the specification.
%Moreover, on the powertrain challenge problem, the parameter estimation method actually helps in falsifying the system.
%We conjecture that the reason for this improved performance is that the timing requirements on this problem are more important than the state constraints. %In Fig. \ref{Tab:expts:staliro_breach} we present parameter estimation results on a variety of specifications and priority functions on the Automotive Transmission model. 

\ifthenelse{\boolean{TECHREP}}{
\begin{figure}
%\vspace{-40pt} % in text
%\vspace{-10pt} % on top of page
\begin{center}
\includegraphics[width=\columnwidth]{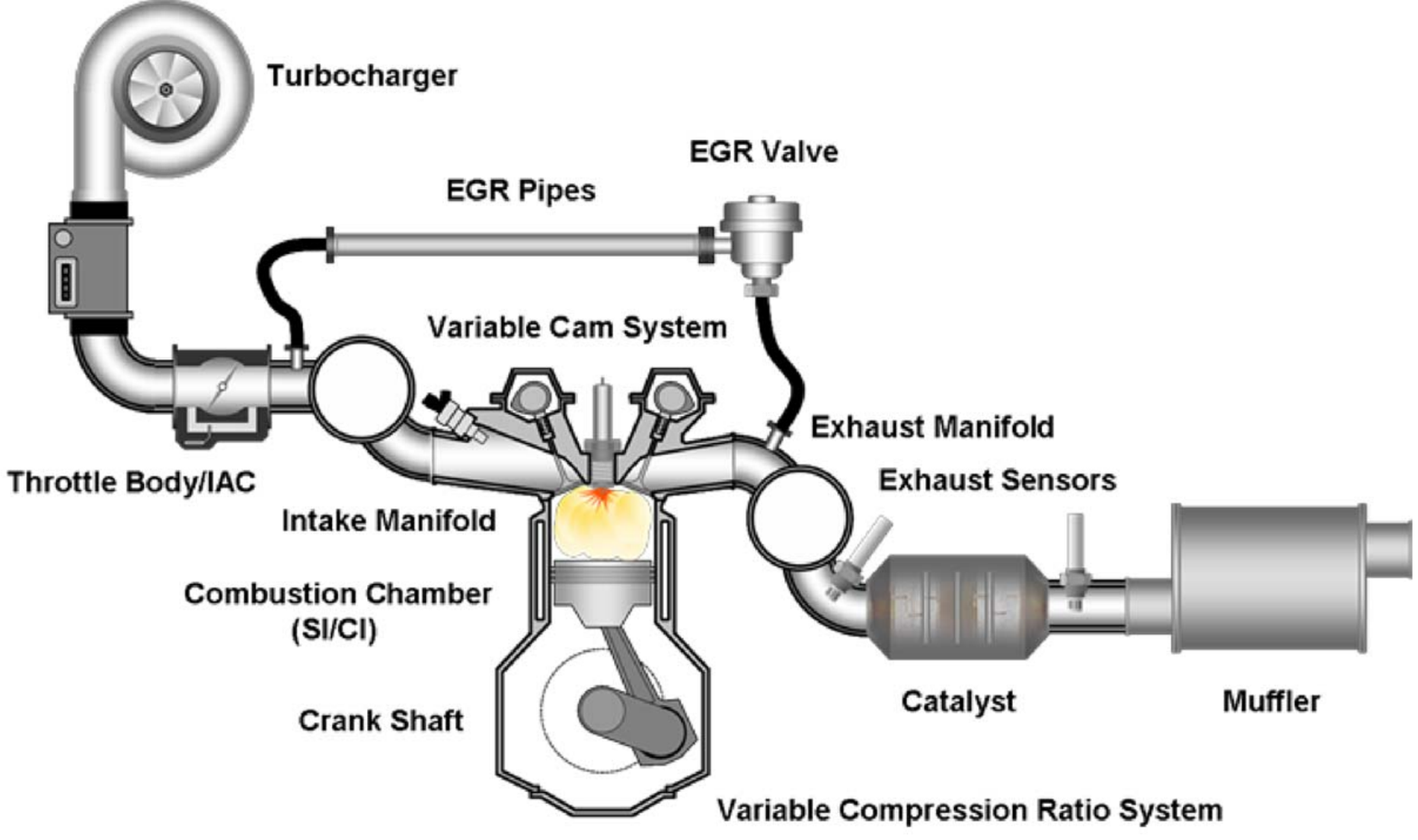} 
\end{center}
%\vspace{-0.3cm}
\caption{SimuQuest Enginuity model components. Used with permission, \copyright SimuQuest~\cite{Simuquest:Online}.}
\label{Fig:exmp:simuquestEnginuityEngine}
\vspace{-5pt} % in text
\end{figure}
}{}

\begin{table*}
\begin{center}
\caption{%
Experimental results of Parameter Mining with S-TaLiRo. The parameters were mined by running 1000 tests. Legend: $\bf{f(\vec{\theta}):}$ the priority function used, $\bf{\phi^{AT}_i:}$ Specifications tested on the Automotive Transmission Model, $\bf{\phi^{P}:}$ Specification tested on the Powertrain Model,  $\bf{\phi^{S}:}$ Specification tested on the Simuquest Enginuity high-fidelity Engine Model. The gray colored rows are first presented in \cite{YangHF12ictss} and are included for completeness.
}
\begin{tabular}{||c||c|c|c||}
\hline
\multicolumn{1}{||c||}{  } & \multicolumn{3}{c||}{S-TaLiRo} \\
\hline
 Specification & $f(\theta)$ & Time & Parameters Mined\\
\hline
\rowcolor{Gray}
$ \phi^{AT}_1[\theta] = \neg \Diamond  ( (v \ge 120) \wedge \Diamond_{[0,\theta]} (\omega \ge 4500) ) $ & $\theta$ & 135s & $7.7s$  \\
\hline
\rowcolor{Gray}
$ \phi^{AT}_2[\theta] = \neg \Diamond  ( (v \ge 120) \wedge \Diamond_{[0,\theta]} (v \ge 125)) $ & $\theta$ & 138s & $10.00s$    \\
\hline
\rowcolor{Gray}
$ \phi^{AT}_3[\theta] = \neg \Diamond  ( (v \ge 120) \wedge \Diamond_{[0,\theta]} (\omega \ge 4500)) $ & $\theta$ & 137s & $7.57s$    \\
\hline
\rowcolor{Gray}
$ \phi^{AT}_4[\theta] = \neg \Diamond  ( (v \ge 120) \wedge \Diamond_{[0,\theta]} (\omega \ge 4500)) $ & $\theta$ & 132s & $7.56s$    \\
\hline
\multirow{5}{*}{$\phi^{AT}_{5}[\vec{\theta}]=\Box ( ( v \le \theta_1) \wedge ( \omega \le \theta_2))$} & $\|\vec{\theta}\|$ & 139s & $\tupleof{138mph,5981rpm}$  \\
\cline{2-4}
	& $\theta_1$  & 137s & $\tupleof{57mph,6000rpm}$ \\
\cline{2-4}
	& $\theta_2$  & 138s & $\tupleof{180mph,2910rpm}$ \\
\cline{2-4}
	& $max(\vec{\theta})$ & 138s & $\tupleof{109mph,6000rpm}$ \\
\cline{2-4}
	& $min(\vec{\theta})$ & 138s & $\tupleof{154mph,5300rpm}$ \\
\hline
\multirow{5}{*}{$\phi^{AT}_{6}[\vec{\theta}] = \neg (\Diamond_{[0,\theta_1]}(v \ge 100) \wedge \Box (\omega \le \theta_2))$} & $\|\vec{\theta}\|$  & 144s & $\tupleof{15.7s,4820rpm}$ \\
\cline{2-4}
	& $\theta_1$  & 142s & $\tupleof{44.6s, 3598rpm}$  \\
\cline{2-4}
	& $\theta_2$  & 138s & $\tupleof{12.2s, 6000rpm}$  \\
\cline{2-4}
	& $max(\vec{\theta})$ & 140s & $\tupleof{37.3s,3742rpm}$  \\
\cline{2-4}
	& $min(\vec{\theta})$ & 142s & $\tupleof{12.3s,5677rpm}$ \\
\hline
\multirow{3}{*}{\parbox{7cm}{\centering $\phi^{AT}_{7}[\vec{\theta}] = \Box( (v \le \theta_1) \wedge (\omega \le \theta_2)) \wedge \Diamond_{[0,\theta_3]}(v \ge 150) \wedge \Diamond_{[0,\theta_4]}(\omega \ge 4500)$}} & $\|\vec{\theta}\|$ & 145s & $\tupleof{198mph,4932rpm,59.5s,55s}$ \\
\cline{2-4}
	& $max(\vec{\theta})$ & 143s & $\tupleof{129mph,6000rpm,48.9s,28.3s}$ \\
\cline{2-4}
	& $min(\vec{\theta})$ & 142s & $\tupleof{190mph,5575rpm,55.1s,54.8s}$ \\
\hline
\multirow{3}{*}{\parbox{7cm}{\centering $\phi^{AT}_{8}[\vec{\theta}] = \Box((v \le \theta_1) \wedge (\omega \le \theta_2)) \wedge \Diamond_{[0,\theta_3]}(v \ge 150) \wedge \Diamond_{[0,\theta_4]}(\omega \ge 4500) \wedge \Box_{[\theta_5,60]}(v \ge 170) \wedge \Box_{[\theta_6,60]}(\omega \ge 4750) $}} & $\|\vec{\theta}\|$ & 146s & $\tupleof{159mph,5700rpm,48.3s,36.2s,54.2s,53.9s}$ \\
\cline{2-4}
	& $max(\vec{\theta})$ & 145s & $\tupleof{85.9mph,6000rpm,3.8s,38.8s,44.5s,51.5s}$ \\
\cline{2-4}
	& $min(\vec{\theta})$ & 143s & $\tupleof{191mph,4958rpm,43s,55.3s,42s,47.1s}$ \\
\hline
$ \phi^{P}_{e3}[\theta] = \Box ((\neg g_1 \wedge X g_1) \rightarrow  \Box_{(0,\theta]} \neg g_2) $ & $\theta$ & 2600s & $0.1s$ \\
\hline
$ \phi^{S}_{1}[\theta]=\Box_{[0,60]}((g_2~\wedge~X g_1) \rightarrow \Box_{[0,\theta]}((t \le \theta) \rightarrow g_1)$ & $\theta$ & 21803s & $1.29s$ \\
\hline
\end{tabular}
%\vspace{-10px}
\label{Tab:expts:staliro_breach}
\end{center}
\end{table*}

\begin{table*}
\begin{center}
\caption{%
Experimental Comparison of the method presented in this paper ($\mathcal{A}$) and the parameter synthesis method presented in \cite{jin2013mining}, ($\mathcal{B}$). Legend: \textbf{\#Sim.}: the number of system simulations, \textbf{\#Rob}: the number of robustness computations. 
}
\begin{tabular}{||c|c||c|c||c|c|c||}
\hline
\multicolumn{1}{||c|}{Specification} & \multicolumn{1}{|c||}{Method} & \multicolumn{2}{c||}{Parameters Mined} & {Time} & {\#Sim} & {\#Rob}  \\
\hline
\multirow{2}{*}{$ \phi^{S}_{2}[\vec{\theta}]=\Box ( ( v \le \theta_1) \wedge ( \omega \le\theta_2))$} &
$\mathcal{A}$ & 137.1 mph & 4870 rpm & 20170s  & 1000 & 1000  \\
\cline{2-7}
& $\mathcal{B}$ & 149.8 mph & 4883 rpm & 50017s  & 2386 & 5130  \\
\hline
\multirow{2}{*}{$\phi^{AT}_{5}[\vec{\theta}]=\Box ( ( v \le \theta_1) \wedge ( \omega \le\theta_2))$} &
$\mathcal{A}$ & 100.2 mph & 5987.6 rpm & 106s  & 1000 & 1000  \\
\cline{2-7}
& $\mathcal{B}$ & 137.5 mph & 6000 rpm & 253s  & 2176 & 11485  \\
\hline
\multirow{2}{*}{$\phi^{AT}_{6}[\vec{\theta}] = \neg (\Diamond_{[0,\theta_1]}(v \ge 100) \wedge \Box (\omega \le \theta_2)$} &
$\mathcal{A}$ & 21s & 3580 rpm & 110s  & 1000 & 1000  \\
\cline{2-7}
& $\mathcal{B}$ & 59.06s & 3296 rpm & 397s  & 3443 & 9718  \\
\hline
\end{tabular}
\label{Tab:relwork}
\end{center}
%\vspace{-10pt}
\end{table*}

%!TEX root = root_sttt.tex

\section{Related Work}
%
%Ok, Let
%\begin{gather*} 
%\dle \phi_1 \Un_{\tupleof{ \alpha, \theta_1 }} \phi_2 \dri (\tss,i)  = \\ 
%\sup_{j \in \sam^{-1}(\sam(i)+\tupleof{ \alpha, \theta_1 })} \bigl( \min ( \dle \phi_2 \dri (\tss,j),  \inf_{i\leq k <j} \dle \phi_1 \dri (\sig,k)) \bigr) < 0\\
%\end{gather*}
%
%And let
%\begin{gather*} 
%\dle \phi_1 \Un_{\tupleof{ \theta_1, \theta_2 }} \phi_2 \dri (\tss,i)  = \\ 
%\sup_{j \in \sam^{-1}(\sam(i)+\tupleof{ \theta_1, \theta_2 })} \bigl( \min ( \dle \phi_2 \dri (\tss,j),  \inf_{i\leq k <j} \dle \phi_1 \dri (\sig,k)) \bigr) > 0\\
%\end{gather*}

%Let $M: U \times \vec{\lambda} \times F \rightarrow \phi$, where $U$ is either an atomic proposition or a timing operator, $\vec{\lambda}$ is the vector of parameters, $F$ is a vector that associates the atomic proposition or timing operator with its corresponding parameter. In the rest of the paper, we will implicitly apply this map to every atomic proposition or temporal operator in a PMTL formula.

The topic of testing embedded software and, in particular, embedded control software is a well studied problem that involves many subtopics well beyond the scope of this paper.
We refer the reader to specialized book chapters and textbooks for further information \cite{ConradF08crc,Koopman10book}.
Similarly, a lot of research has been invested on testing methods for Model Based Development (MBD) of embedded systems \cite{TripakisD09model}.
However, the temporal logic testing of embedded and hybrid systems has not received much attention \cite{TanKSL04,PlakuKV09tacas,NghiemSFIGP10hscc,ZulianiPC10hscc}.

Parametric temporal logics were first defined over traces of finite state machines \cite{Alur01tcl}.
In parametric temporal logics, some of the timing constraints of the temporal operators are replaced by parameters.
Then, the goal is to develop algorithms that will compute the values of the parameters that make the specification true under some optimality criteria.
That line of work has been extended to real-time systems and in particular to timed automata \cite{GiampaoloTN10lata} and continuous-time signals \cite{AsarinDMN12rv}.
The authors in \cite{Fages2008tcs,RizkBFS08cmsb} define a parametric temporal logic called quantifier free Linear Temporal Logic over real valued signals.
However, they focus on the problem of determining system parameters such that the system satisfies a given property rather than on the problem of exploring the properties of a given system.

Another related problem is specification mining or model exploration for finite state machines. 
The problem was initially introduced by William Chan in \cite{Chan00cav} under the term Temporal Logic Queries.
The goal of model exploration is to help the designer achieve a better understanding and explore the properties of a model of the system.
Namely, the user can pose a number of questions in temporal logic where the atomic propositions are replaced by a placeholder and the algorithm will try to find the set of atomic propositions for which the temporal logic formula evaluates to true.
Since the first paper \cite{Chan00cav}, several authors have studied the problem and proposed different versions and approaches \cite{BrunsG01lics,ChechikG03cav,GurfinkelDC02sigsoft,SinghRS09concur}.
A related approach is based on specification mining over temporal logic templates \cite{WasylkowskiZ09ase} rather than special placeholders in a specific formula. In \cite{kong2014temporal}, the authors present an inference algorithm that finds temporal logic properties of a system from data. The authors introduce a reactive parameter signal temporal logic and define a partial order over it to aid the property definition process.

In \cite{jin2013mining}, the authors provide a parameter synthesis algorithm for Parametric Signal Temporal Logic (PSTL), a similar formalism to MTL. To conduct parameter synthesis for multiple parameters, a binary search is utilized to set the parameter value for each parameter in sequence. After a set of parameters is proposed, a stochastic optimization algorithm is utilized to search for trajectories that falsify the specification. If it fails to do so, the algorithm stops, otherwise this two step process continues until the termination condition is met. 

In the following, we present three main differences between the method proposed here ($\mathcal{A}$) and the method proposed in \cite{jin2013mining} ($\mathcal{B}$). %For clarity, we will refer to the method presented in this work as $\mathcal{A}$ and the method presented in  \cite{jin2013mining} as PS. 
First, $\mathcal{A}$ is a best effort algorithm for which the termination condition is the number of tests the system engineer is interested to conduct. Clearly, the more tests, the better the search space is explored. Since the parameter mining problem is presented as a single optimization problem, runtime is not directly affected by the number of parameters in the specification. In contrast, in $\mathcal{B}$, the runtime of the algorithm through binary search is affected by the number of parameters in the PSTL formula. For each iteration of the binary search, multiple robustness computations have to be conducted, which for systems that output a large trace and contain complex specifications, could become costly. The second step in $\mathcal{B}$ is the falsification of the parameters proposed. This algorithm needs to be performed on every iteration, until a falsification is found. If a falsifying trajectory is not found, the stopping condition is met and the parameters are returned. Second, in $\mathcal{A}$, the parameters returned are the ``best" parameters for which a falsifying trajectory is found. In $\mathcal{B}$, the proposed parameters are parameters for which no falsifying trajectory is found. Proving that a specification holds for hybrid systems, in general, is undecidable and, therefore the failure to find a falsifying trajectory does not imply that one does not exist. Third, in $\mathcal{A}$, through the priority function, we enable the system engineer to have flexibility when assigning weights and priorities to parameters. In $\mathcal{B}$, parameter synthesis through binary search implicitly prioritizes one parameter over others. 

We compare the two methods using the Simuquest Enginuity high-fidelity Engine model and the Automotive Transmission model. To enable the comparison of the two methods, we have implemented the $\mathcal{B}$ method in S-TaLiRo. Note that the simulation time is 60s. The experimental results are presented in Table~\ref{Tab:relwork}. 
For the $\mathcal{A}$ method, the number of simulations and robustness computations is predefined. On the other hand, for the $\mathcal{B}$ method, these numbers vary following the reasons presented in the previous paragraph. As a result, the difference in computation time between the two methods is significant. Due to the significant differences between the two algorithms, in terms of guarantees provided, it is not possible to compare the quality of the solutions. While the mined parameters with method $\mathcal{A}$ guarantee falsification of the specification, the mined parameters with method $\mathcal{B}$ do not.

\ifthenelse{\boolean{TECHREP}}{
The results for the Automotive Transmission model can be reproduced by running the experiments in the S-TaLiRo distribution \cite{staliro:Online}. 
}{}

\section{Conclusion}

An important stage in Model Based Development (MBD) of software for CPS is the formalization of system requirements.
We advocate that Metric Temporal Logic (MTL) is an excellent candidate for formalizing interesting design requirements.
In this paper, we have presented a solution on how we can explore system properties using Parametric MTL (PMTL) \cite{AsarinDMN12rv}.
Based on the notion of robustness of MTL \cite{FainekosP09tcs}, we have converted the parameter mining problem into an optimization problem which we approximate using \staliro \cite{AnnapureddyLFS11tacas,staliro:Online}. 
We have presented a method for mining multiple parameters as long as the robustness function has the same monotonicity with respect to all the parameters.
Finally, we have demonstrated that our method can provide interesting insights to the powertrain challenge problem \cite{ChutinanB02fordtech}.We demonstrated the method on an industrial size engine model and examples from related works.

\begin{acknowledgement}
This work has been partially supported by award NSF CNS 1116136 and CNS 1350420. 
Also, we thank the \textit{Toyota Technical Center} for donating a license for the Simuquest Enginuity tool package.
\end{acknowledgement}

\bibliographystyle{splncs}
\bibliography{bibref_database}

\end{document}